\newcommand{\sir}{\mathrm{SIR}}
\newcommand{\Pb}{\mathbb{P}}
\newcommand{\Eb}{\mathbb{E}}
\newcommand{\Lc}{\mathcal{L}}
\def\delequal{\mathrel{\ensurestackMath{\stackon[1pt]{=}{\scriptscriptstyle\Delta}}}}
\newcommand*{\acro}[3][]{\newacronym[#1]{#2}{#2}{#3}}
\newtheorem{remark}{Remark}
\newtheorem{proposition}{Proposition}
\newtheorem{corollary}{Corollary}
\newtheorem{theorem}{Theorem}
\theoremstyle{approximation}
\newcommand{\black}{\textcolor{black}}
\newcommand{\Eq}[1]{(\ref{eq:#1})}
\newcommand{\App}[1]{Appendix~\ref{app:#1}}
\newcommand{\Fig}[1]{Fig.~\ref{fig:#1}}
\DeclareMathOperator*{\argmin}{\arg\!\min}
\DeclareMathOperator*{\R}{\mathbb{R}}
\def\delequal{\mathrel{\ensurestackMath{\stackon[1pt]{=}{\scriptstyle\Delta}}}}
\let\mybibitem\bibitem
\renewcommand{\bibitem}[1]{%
  \ifstrequal{#1}{nature}
    {\color{blue}\mybibitem{#1}}
    {\color{black}\mybibitem{#1}}%
}
\begin{document}
\title{Performance Analysis and Optimization of Cache-Assisted CoMP for Clustered D2D Networks}

\author{Ramy Amer,~\IEEEmembership{Student~Member,~IEEE,} Hesham~ElSawy,~\IEEEmembership{Senior~Member,~IEEE,} Jacek~Kibi\l{}da,~\IEEEmembership{Member,~IEEE,} M.~Majid~Butt,~\IEEEmembership{Senior~Member,~IEEE,} 
~and~Nicola~Marchetti,~\IEEEmembership{Senior~Member,~IEEE}%
 \thanks{The material in this paper is published in part to IEEE WCNC 2019 \cite{8886101}.}
\thanks{Ramy Amer and Jacek Kibi\l{}da and Nicola~Marchetti are with CONNECT Centre for Future Networks, Trinity College Dublin, Ireland. Email:\{ramyr, kibildj, nicola.marchetti\}@tcd.ie.}
\thanks{Hesham ElSawy is with King Fahd University of Petroleum and Minerals (KFUPM), Saudi Arabia. Email: hesham.elsawy@kfupm.edu.sa.}
\thanks{M. Majid Butt is with Nokia Bell Labs, France, and CONNECT Centre for Future Networks, Trinity College Dublin, Ireland. Email: Majid.Butt@@tcd.ie.}
\thanks{This publication has emanated from research conducted with the financial support of Science Foundation Ireland (SFI) and is co-funded under the European Regional Development Fund under Grant Numbers 13/RC/2077 and 14/US/I3110.}}

%

\IEEEtitleabstractindextext{%
\begin{abstract}
Caching at mobile devices and leveraging cooperative device-to-device (D2D) communications are two promising approaches to support massive content delivery over wireless networks while mitigating the effects of interference. To show the impact of cooperative communication on the performance of cache-enabled D2D networks, the notion of device clustering must be factored in to convey a realistic description of the network performance. In this regard, this paper develops a novel mathematical model, based on stochastic geometry and an optimization framework for cache-assisted coordinated multi-point (CoMP) transmissions with clustered devices. Devices are spatially distributed into disjoint clusters and are assumed to have a surplus memory to cache files from a known library, following a random probabilistic caching scheme. Desired contents that are not self-cached can be obtained via D2D CoMP transmissions from neighboring devices or, as a last resort, from the network. For this model, we analytically characterize the offloading gain and rate coverage probability as functions of the system parameters. An optimal caching strategy is then defined as the content placement scheme that maximizes the offloading gain. For a tractable optimization framework, we pursue two separate approaches to obtain a lower bound and a provably accurate approximation of the offloading gain, which allows us to obtain optimized caching strategies. Remarkably, if we replace the obtained expression for offloading gain with its lower bound, we can find a suboptimal caching strategy that is not only described via analytical formulas but can also show an improvement over the state-of-the-art caching schemes. Results reveal that cooperative transmission becomes more appealing in denser D2D caching networks and adverse interference conditions, which is the case of the imminent internet of things (IoT) and massive machine type communications era.
\end{abstract}

\begin{IEEEkeywords}
Coordinated multi-point (CoMP), probabilistic caching, offloading gain, clustered D2D communication.
\end{IEEEkeywords}}

\maketitle

\IEEEdisplaynontitleabstractindextext

\IEEEpeerreviewmaketitle

\IEEEraisesectionheading{
\vspace{-0.6 cm}
\section{Introduction}
\label{sec:introduction}}

\subsection{Background}
\IEEEPARstart{T}{he} proliferation of advanced mobile devices such as smartphones together with the popularity of video streaming causes tremendous growth of data traffic in cellular networks. \black{To address this challenge, the cellular industry is promoting the deployment of heterogeneous networks that are composed of different types of small \acp{BS}, such as micro-, pico-, and femto-\acp{BS} \cite{chandrasekhar2008femtocell}. Different types of small BSs are generally characterized by their transmission powers, coverage areas, and by whom they are deployed. For instance, the typical coverage range of a micro-BS is less than two kilometres. A pico-BS covers 200 meters or less whereas the range of a femto-BS is on the order of 10 meters. Micro- and pico- BSs can be deployed by the network operator while femto-BSs can be deployed by the end users.}

The deployment of heterogeneous networks results in a higher density of spatial reuse of radio resources and thus in higher overall network throughput \black{(i.e., the achievable rates)}. However, deploying a dense heterogeneous network comes with its own challenges. One such challenge is the deployment cost associated with connecting all the small cells to the backbone network with fast links, or the performance degradation accompanied by the capacity-limited backhauls \black{that connect these deployed BSs to the core of the network}. Besides, co-existence between small \acp{BS} and conventional macro base stations causes additional inter-cell interference when spectrum resources are shared. To address these challenges, caching popular content in advance at the network edge, e.g., at mobile devices or \acp{BS}, has been envisioned as a promising technique to relieve the backhaul congestion and improve user \ac{QoS} \cite{6871674} and \cite{amer20200caching}. \black{By caching, we mean storing the frequently demanded content at the network edge so as to remove the heavy burden and frequent requests on the (limited) backhauls. Broadly speaking, caching can be performed at the levels, ranging from, mobile devices to the level of small and macro \acp{BS}.} 

\black{In addition to caching, \ac{CoMP} transmission has been proposed to mitigate interference and increase network coverage and cell-edge throughput \cite{marsch2011coordinated}. Cooperative communication through \ac{CoMP} allows multiple transmission or reception of the same data from multiple BSs/devices in order to improve the spatial diversity. This essentially enables seamless co-existence of multiple in-band transmissions and allows efficient spectrum sharing and frequency reuse in such heterogeneous and D2D-enabled networks. In this paper, we are particularly interested in caching on mobile devices \cite{8412262} and \cite{Delay-Analysis}, together with cooperative transmission to boost the network traffic offloading and to overcome the performance degradation caused by co-channel interference \cite{nigam2014coordinated}.}  

Architecture of device caching exploits the large storage available in modern smartphones to cache multimedia files that might be highly demanded by end devices in the network \cite{8412262}. Devices can then exchange this multimedia content stored on their local storage with nearby devices. Since the distance between a requesting device and the device that stores a requested content is small in most cases, \ac{D2D} communication is commonly used for content transmission \cite{7342961}. As more than one device might cache the same content, the \ac{SIR} can be improved by joint transmission of the same cached content, which we refer to as  cooperative communication, e.g., via \ac{CoMP} transmission. For a detailed review of the main results and the literature on CoMP  transmission, the reader is referred to \cite{5490976} and \cite{nigam2014coordinated}.

Because of their respective advantages, both caching and cooperative transmission can be jointly adopted  in many practical scenarios. \black{For example, ensuring reliable delivery of ultra-high-definition streaming and \ac{VR} applications over wireless networks is very challenging due to the stringent \ac{QoS} requirements \cite{chaccour2019Reliability}. Leveraging D2D \ac{CoMP} transmissions of pre-downloaded frames from multiple devices to a requesting device, helps reduce communication delays and improve the perceived \ac{QoS}}. Proximity marketing, which is a wireless content advertising system associated with a particular place, might be another use case that leverages both caching and \ac{CoMP} transmissions \cite{hunter2013retail}. In particular, exploiting \ac{CoMP} transmissions to send pre-cached advertising content could lead to increasing the transmission range and mitigating interference among different operators of proximity marketing systems. Such cooperatively served contents are then delivered to individuals who wish to receive them, provided that they have the necessary equipment to do so \cite{hunter2013retail}. Motivated by the aforementioned discussions, it is important to study the role of cooperative transmission for cache-enabled D2D networks. The next section is devoted to summarizing relevant works in the literature that adopt cooperative transmission in wireless caching.


\vspace{-0.0 cm}
\subsection{Related Works}
The joint adoption of wireless caching and collaborative transmissions, where \acp{BS} (or devices) cooperatively serve a content, is widely adopted in literature \cite{chen2016cooperative,chae2017content,ao2015distributed,zheng2017optimization,daghal2018content,chen2017high,kim2018mds}. For instance, the authors in \cite{chen2016cooperative} proposed a combined caching scheme whereby part of the cache space is reserved for caching the most popular content, that is then cooperatively served from multiple \acp{BS}. Moreover, in \cite{chae2017content}, the authors investigated the trade-off between content diversity gain, i.e., serving diverse content, and cooperative gain, i.e., jointly transmitting the same content. In \cite{ao2015distributed}, the authors proposed cooperative transmissions for cache-enabled small cell networks to reduce the backhaul cost and delay. Meanwhile, the authors in \cite{zheng2017optimization} employed content caching at wireless relays to improve the overall performance of collaborative relayings for a network consisting of one source, one destination, and multiple relay nodes. 

Following a similar approach, employing cooperative content delivery in \ac{D2D} caching networks is discussed in \cite{daghal2018content,chen2017high,kim2018mds}. For instance, the authors in \cite{daghal2018content} proposed a multiple devices to a single device content delivery method via \ac{D2D} communication. Moreover, an opportunistic cooperation strategy for \ac{D2D} transmission is proposed in \cite{chen2017high} to mitigate interference among \ac{D2D} links. Combining coded caching along with \ac{CoMP} transmissions is recently studied in \cite{kim2018mds}, wherein redundantly stored data at caching helpers is utilized to combat wireless channel impairments due to channel fading and interference. 

While interesting, the works in \cite{daghal2018content,chen2017high,kim2018mds} did not consider the notion of device clustering, which is quite fundamental to the D2D network architecture \cite{zhang2015social} and \cite{hu2014evaluating}. \black{In this regard, the authors in \cite{clustered_twc} developed a stochastic geometry-based model to characterize the performance of content delivery in a clustered \ac{D2D} network whose devices are distributed according to a \ac{PCP}. Similarly, the authors in \cite{clustered_tcom} proposed a cluster-centric content placement scheme for \acp{PCP}, where the content of interest is cached closer to the cluster center. A clustered process such as \ac{PCP} essentially consists of two kinds of point processes, namely, the parent and daughter processes. If the underlying parent point process follows a standard \ac{PPP}, the whole process is called \ac{PCP} \cite{haenggi2012stochastic}. The works in \cite{clustered_twc} and \cite{clustered_tcom} assumed \acp{PCP} with equal number of devices per clusters. Meanwhile, the authors in \cite{8374852} proposed hybrid caching strategies to reduce the energy cost of D2D transmitters, where the location of these transmitters is modeled as a \ac{GPP}. For the \ac{GPP}, each cluster has \emph{only} one or two points, with probabilities $p$ and $1-p$, respectively. If a cluster consists of one point, that point is at the parent's location. If it has two points, one is at the parent's position (cluster center), and the other is uniformly distributed on a circle of certain radius centered at the parent point.}  
However, while the clustering nature of D2D communication is considered in the prior works \cite{clustered_twc} and \cite{clustered_tcom}, these works assumed that contents are pre-cached, i.e., there was no study of the caching problem. Moreover, modeling clustered D2D networks by means of \acp{GPP}, as done in \cite{8374852}, is limited by two facts: (i) The distance between transmitting and receiving devices within the same cluster is not captured by this model. (ii) The number of devices per cluster is assumed to be constant, particularly, fixed to only one (or two) device(s) per cluster. Furthermore, the content popularity and caching schemes in \cite{8374852}  were assumed to be the same for all clusters. However, in practice, users in different clusters might have different interests of files. For instance, users in a library might be interested in a different set of content from that of users in a pub. 

The offloading gain of a clustered D2D caching network modeled by \ac{TCP} is maximized in \cite{amer20202joint} by joint optimization of content caching and channel access. Moreover, the authors in \cite{amer2019optimizing} showed that the average service delay can be efficiently reduced by by jointly optimizing content caching and bandwidth partitioning. In \cite{8647532}, the authors proposed an efficient caching scheme for clustered D2D networks achieving minimum energy consumption. While these works studied the caching problem for clustered D2D networks, they only considered non-cooperative transmissions of requested contents. 




Compared with this prior art \cite{chen2016cooperative,chae2017content,ao2015distributed,zheng2017optimization,daghal2018content,chen2017high,kim2018mds,zhang2015social,hu2014evaluating,clustered_twc,clustered_tcom} and \cite{amer20202joint,amer2019optimizing,8647532,8374852}, this paper conducts performance analysis and statistical optimization of cache-assisted cooperative transmissions for a clustered D2D network. In particular, we characterize and optimize the offloading gain of a network of spatially clustered devices that adopt \ac{CoMP} transmissions and probabilistic caching. By maximizing the statistically-averaged offloading gain, our approach efficiently provides optimal averaged performance on a long-time scale to reduce signaling and processing overheads \cite{7158269}. Moreover, our model effectively captures the stochastic nature of channel fading and the clustered, yet random, network topology aspects, which have not been studied in the literature, particularly in the context of caching and \ac{CoMP} transmission. \emph{To the best of our knowledge, this paper provides the first rigorous analysis of  cache-assisted \ac{CoMP} transmissions for \ac{D2D} caching networks whose devices are modeled by a \ac{TCP}.}

\vspace{-0.2 cm}				
\subsection{Contributions}
The main contributions of this paper are summarized as follows:  
\begin{itemize}
\item We propose a cooperative transmission scheme via \ac{D2D} communications for clustered cache-enabled networks, whereby a device can be collaboratively served from multiple devices within the same cluster. We analytically characterize the offloading gain and rate coverage probability for the proposed network.

\item Given the complexity of the obtained rate coverage probability expression, we propose a tractable lower bound. We use this bound to prove that the interference power seen by the typical device of a \ac{TCP} can be upper-bounded by the interference power seen by the typical device of a \ac{PPP} with density that is the product of the \ac{TCP} cluster density and the average number of devices per cluster.

\item To further improve tractability and computational efficiency, we propose to approximate the signal power received from cooperative transmissions by two components: nearest and mean received power terms. Using Chebyshev's inequality, we prove that this approximation is remarkably tight and helps to reduce the original formulation to single integral. 

\item We use these closed-form expressions to define two suboptimal caching solutions for the offloading gain maximization problem. Ultimately, we show considerable improvements in the offloading gain under the optimized caching strategies compared with benchmark caching techniques. 
\end{itemize}

The rest of this paper is organized as follows. Section 2 and 3 present the system model and offloading gain characterization, respectively. The rate coverage analysis is conducted in Section 4, and the optimized caching probabilities are obtained in Section 5. Numerical results are then presented in Section 6, and conclusions are drawn in Section 7. 
\vspace{-0.4 cm}
\section{System Model}
\subsection{Network Model}
We consider a \ac{D2D} caching network in which devices are spatially distributed into disjoint clusters. The devices are assumed to have surplus memory that can be used to store content such as video files. Such a cached content is needed either for future use or to participate in content sharing with other devices within the same cluster. For this network, we model the location of the devices with a \ac{TCP} composed of parent and daughter points. An \ac{TCP} is generated by taking a parent homogeneous \ac{PPP} and daughter Gaussian PPP, one per parent, and translating the daughter processes to the position of their parent \cite{haenggi2012stochastic}. The cluster process is then the union of all the daughter points. \black{Importantly, the cluster centers are virtual points that determine the point around which cluster members are distributed. In other words, a cluster center resembles a geographical reference point for the spatial locations of the D2D devices within the same cluster. The cluster devices, however, are those cluster members (daughter points) that are scattered around the corresponding cluster centers.}

Let us denote the parent point process by $\Phi_p = \{\boldsymbol{x}_1,\boldsymbol{x}_2,\dots\}$, where $\boldsymbol{x}_i=\{x_1,x_2\}\in\R^2$, and $i\in\mathcal{N}$. Further, let $(\Phi_i)$ be a family of finite point sets representing the untranslated daughter Gaussian \acp{PPP}, denoted as $\Phi_c$, i.e., untranslated clusters. The cluster process is then the union of the translated clusters:
\begin{align}
\Phi \delequal  \cup_{i\in \mathcal{N}} \boldsymbol{x}_i + \Phi_i. 
\end{align}

The parent  and daughter points are referred, respectively, to as cluster centers and cluster members. We assume that the cluster centers are distributed according to a PPP $\Phi_p$ of density is $\lambda_p$. We also assume that, for Gaussian \acp{PPP}, the cluster members are normally scattered with variance $\sigma^2 \in \mathbb{R}$ around their cluster centers \cite{haenggi2012stochastic}. Given this normal scattering of daughter points, the \ac{PDF} of the cluster member location relative to its cluster center is given by
\begin{equation}
f_{\boldsymbol{Y}}(\boldsymbol{y}) = \frac{1}{2\pi\sigma^2}\textrm{exp}\Big(-\frac{\lVert \boldsymbol{y}\rVert^2}{2\sigma^2}\Big),	\quad \quad  \boldsymbol{y} \in \mathbb{R}^2,
\label{pcp}
\end{equation}
where $\boldsymbol{y}\in\R^2$ is the device location relative to its cluster center, $\lVert .\rVert$ is the Euclidean norm. Within each cluster, the number of cluster members is a Poisson \ac{RV} with a certain mean. For instance, if the average number of devices per cluster is $\bar{n}$, the cluster intensity will be:
\begin{align}
\lambda_c(y) = \frac{\bar{n}}{2\pi\sigma^2}\textrm{exp}\big(-\frac{\lVert \boldsymbol{y}\rVert^2}{2\sigma^2}\big),	\quad \quad  \boldsymbol{y} \in \mathbb{R}^2. 
\end{align}
Accordingly, the intensity of the entire process $\Phi$ will be $\lambda = \bar{n}\lambda_p$.

\vspace{-0.2 cm}
\subsection{Content Popularity and Probabilistic Caching}
Similarly to  \cite{8374852}, we assume two kind of devices co-exist within the same cluster, namely, \emph{content clients} and \emph{content providers}. In particular, the devices that can perform proactive caching and provide content delivery are called content providers while those requesting content, that also have caching capability, are called content clients. We assume that each device has a surplus memory of size $M$ files designated for caching content from a known file library $\mathcal{F}$. The total number of files is $N_f> M$ and the set of content indices is denoted as $\mathcal{F} = \{1, 2, \dots , N_f\}$. These $N_f$ files represent the content catalog that all the devices in a cell may request, which are indexed in a descending order of popularity. We assume that the probability that the $m$-th content is requested follows the standard Zipf distribution as widely-adopted in the literature, which is given by \cite{breslau1999web} 
\begin{equation}
\label{zipf}
q_m = \Bigg(m^{\beta} \sum_{k=1}^{N_f}k^{-\beta}\Bigg)^{-1},
\end{equation}
where $\beta$ is a parameter reflecting how skewed the popularity distribution is. The larger $\beta$, the fewer files that are responsible for the majority of requests \cite{breslau1999web}; by definition, $\sum_{m=1}^{N_f}q_m$.  Moreover, we assume that the content popularity may vary across clusters. For instance, users in a library may be interested in an entirely different set of files from the users in a sports center. Therefore, the Zipf distribution models the per-cluster popularity of files. Such a discrepancy of contents of interest can be captured by having different popularity indexes $\beta$ per different clusters, i.e., different concentration rates. Ranks of popular contents can be also different among different clusters. This discrepancy of popular files implies that the content request and, consequently, caching design models vary across clusters. Such a cluster-specific popularity can be seen as a direct generalization of the individual user preferences that is modeled in \cite{8667721}. 	

The cluster-specific popularity model necessitates the design of content placement on a per-cluster basis. Hence, within each cluster, we assume a random content placement scheme in which file $m$ is cached independently at each cluster device according to the probability $c_m$, with $0 \leq c_m \leq 1, \forall m \in \{1, \dots, N_f\}$. To avoid duplicate caching of the same file within the memory of the same device, we follow the \ac{PC} approach proposed in \cite{geographic_caching}, which requires that $\sum_{m=1}^{N_f}c_m=M$. It is worth mentioning that the \ac{PC} is a standard caching technique that is widely adopted in the literature \cite{chen2017probabilistic,blaszczyszyn2015optimal,chae2016caching}.



\subsection{Content Request and Delivery Model} 
Enabling seamless video delivery over cellular networks implies stringent \ac{QoS} requirements. However, the performance of wireless networks, especially D2D communications, is limited by interference and the effects of small scale fading. Therefore, cooperative communication turns to be more appealing as a prominent interference mitigation tool. We hence allow multiple devices to jointly serve their cached content to a common device within the same cluster via non-coherent \ac{CoMP} transmission. The underlying reason of assuming a non-coherent transmission is that it is hard to estimate the \ac{CSI} for the \ac{D2D} communications. We consider out-of-band \ac{D2D} communication system, i.e., there is no cross-interference between the cellular network and \ac{D2D} communication. All devices are equipped with a single transmit-receive isotropic antenna, and they have no \ac{CSI} from the device they are serving. Furthermore, each D2D transmission uses all the available bandwidth. Transmitted signals experience single-slope path loss with attenuation exponent $\alpha > 2$ and small scale fading, which we model as an \ac{i.i.d.} complex Gaussian \ac{RV} with zero mean and unit variance. \black{In our study, we are particularly focused on developing a general model that can be then tailored for any physical layer transmission scheme. For instance, if the mobile devices use perfectly-synchronized \ac{OFDM} symbols with sufficiently large \ac{CP}, the \ac{ISI} can be effectively cancelled out. Moreover, since we assume static devices, there would be neither doppler effect nor inter-carrier interference (ICI) when adopting \ac{OFDM} and proper channel selections.}

Due to the cost of participating in content caching and delivery, e.g., battery consumption and memory utilization, not all content providers  can be active in all time slots. Hence, within each cluster, we assume that  content providers can be available for content delivery with probability $p\in[0,1]$. In other words, among $\bar{n}$ average number of devices per cluster, only average of $p\bar{n}$ devices are willing to participate in content delivery and caching. Further, we assume a \ac{BS}-assisted \ac{D2D} link setup scheme, where the transmissions of different files in different clusters are orchestrated by the \ac{BS} \cite{8458381}.  %
In details, a \emph{content client} first sends its request to its geographically closest \ac{BS}, which knows the active content providers within the same cluster, their cached files, as well as their locations. If there are active content providers caching the requested file, the BS then establishes direct CoMP D2D links between the content client and the set of active content providers for this requested file. Requests for contents are assumed to be of negligible size, so that they do not add signalling overhead and are always successfully decoded at the BS. 

Within each cluster, we assume a content client device whose distance to its cluster center is drawn from a Rayleigh distribution of scale parameter $\sigma$, according to the \ac{TCP} definition \cite{haenggi2012stochastic}. Throughout time, content clients in different clusters may request files $i\in\mathcal{F}$ with a probability following the assumed \emph{per-cluster Zipf distribution} \black{in Eq. (\ref{zipf})}. Since each cluster has its own library, a given content client may either be served via D2D connection(s) from active provider(s) within the same cluster or, as a last resort, via the nearest geographical BS. To recap, under the proposed  transmission and caching schemes, one content client per cluster is cooperatively served at a time from neighboring active content providers while being interfered only from active providers in other (remote) clusters (i.e., inter-cluster interference).

Notice that, according to the independent thinning theorem \cite[Theorem 2.36]{haenggi2012stochastic}, active devices within the same cluster form a Gaussian \ac{PPP}  $\Phi_{cp}$ whose intensity function is $\lambda_{cp}(\boldsymbol{y})=p \lambda_c(\boldsymbol{y})$. Similarly, the set of active providers that cache a desired content $m$ are modeled as a Gaussian \ac{PPP}  $\Phi_{cpm}$ with the intensity function given by $\lambda_{cpm}(\boldsymbol{y})=c_m p \lambda_c(\boldsymbol{y})$. Hence, within each cluster, the number of active devices and the number of active devices caching content $m$ are Poisson \acp{RV} of means $p\bar{n}$ and $c_mp\bar{n}$, respectively. By definition, $\Phi_{cpm} \subseteq \Phi_{cp} \subseteq \Phi_{c}$.


The main advantages of D2D caching networks lie in alleviating the burden of the backhaul links  and improving the network spectral efficiency. To leverage these features, it is crucial to intelligently cache and deliver contents to maximize the percentage of offloaded traffic from the network core to the edge. The offloading gain is widely-adopted as a key performance metric to quantify this percentage \cite{chen2016cooperative,chae2017content}, and \cite{8374852}. Specifically, the offloading gain is defined as the probability of obtaining a desired content either from the self-cache or via \ac{D2D} communication with a received \ac{SIR} greater than a target threshold. Hence, our target in the next section is to characterize and maximize the offloading gain of the proposed CoMP-assisted D2D caching network. 

\vspace{-0.0 cm}
\section{Offloading Gain Characterization}
\begin{figure} [!t] 
\vspace{-0.0 cm}
\centering
\includegraphics[width=0.35\textwidth]{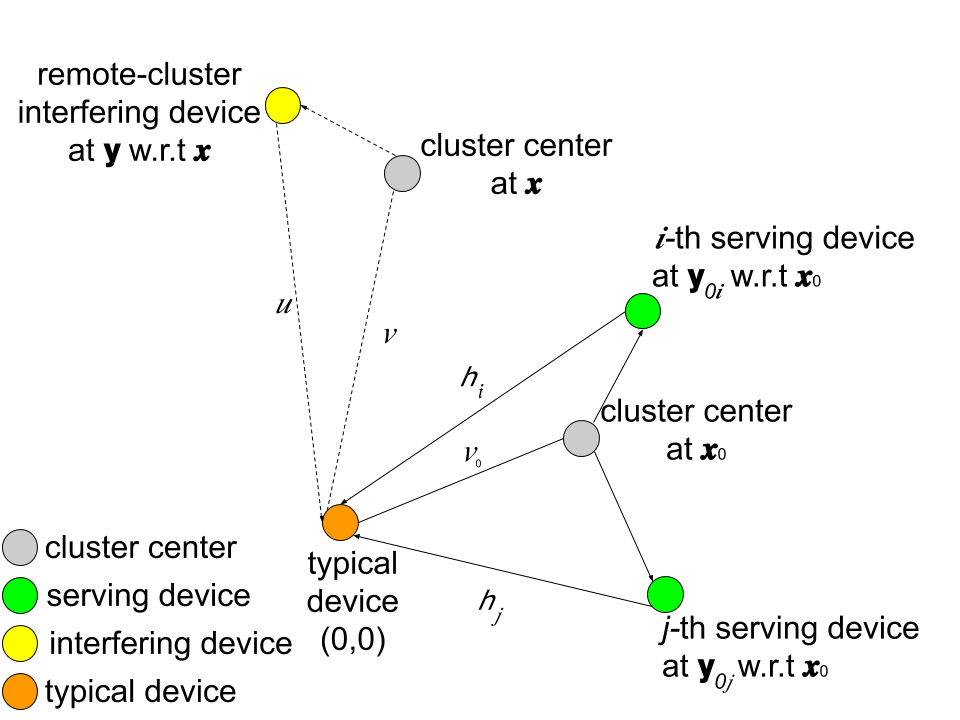}		
\caption {Illustration of the representative cluster and one interfering cluster, where $\{\boldsymbol{x}_0,\boldsymbol{y}_{0i}, \boldsymbol{y}_{0j}, \boldsymbol{x},\boldsymbol{y}\} \in \mathbb{R}^2$ and $\{v_0,h_i,h_j,v,u\} \in \mathbb{R}$.}
\label{distance_near}
\vspace{-0.3 cm}
\end{figure}

%
%
%
%
%

Given stationarity of the parent process and independence of the daughter process, we can conduct our analysis for the representative cluster, which is an arbitrary cluster whose center is located at $\boldsymbol{x}_0\in \Phi_p$, and a \emph{typical client}, which is a randomly selected member of the representative cluster that requests the content. \black{We are particularly interested in the network averaged offloading gain by focusing on the performance of this typical device. The study of the network overall capacity and size is also an interesting problem but is beyond the scope of this paper.}  Without loss of generality, we assume the typical client is located at the origin $(0, 0) \in \mathbb{R}^2$. 

When some active content providers jointly transmit a desired content $m$, the signal received at the typical client consists of two main components: the desired signal that represents the joint non-coherent transmissions from   active providers that cache content $m$ in the representative (local) cluster, and the interference component that is created by other active providers in remote clusters.  This can be formally stated as:
\begin{align}
&y_m =\underbrace{\sum_{\boldsymbol{y}_{0i} \in \Phi_{cpm}}\sqrt{\gamma_d} G_{\boldsymbol{y}_{0i}} \lVert \boldsymbol{x}_0 + \boldsymbol{y}_{0i}\rVert^{-\alpha/2}  s_{\boldsymbol{y}_{0i}}}_{\text{desired signal}} \nonumber \\
&
 \quad\quad\quad  \quad\quad \quad\quad + \underbrace{\sum_{\boldsymbol{x} \in \Phi_p^{!}} \sum_{\boldsymbol{y} \in \Phi_{cp}} \sqrt{\gamma_d} G_{\boldsymbol{y}} \lVert\boldsymbol{x}+\boldsymbol{y}\rVert^{-\alpha/2} s_{\boldsymbol{y}}}_{\text{inter-cluster interference}} + z, 
 \nonumber 
\end{align}				 
where $i \in \{1,\dots,|\Phi_{cpm}|\}$, $G_{\boldsymbol{y}_{0i}}$ denotes the power fading between an active provider at $\boldsymbol{y}_{0i} \in \Phi_{cpm}$ relative to its cluster center at $\boldsymbol{x}_0$ and the typical client, see Fig.~\ref{distance_near}; $\gamma_d$ denotes the \ac{D2D} transmission power, and $s_{\boldsymbol{y}_{0i}}$ is the symbol jointly transmitted by the active providers $\boldsymbol{y}_{0i} \in \Phi_{cpm}$. $\Phi_p^{!}= \Phi_p \setminus\{\boldsymbol{x}_0\}$ denotes the set of remote clusters, and $\Phi_{cp} \subseteq \Phi_{c}$ represents the set of active devices in a remote cluster centered at $\boldsymbol{x}\in\Phi_p^{!}$. Finally, $z$ denotes the standard additive white Gaussian noise.

Note that representing the set of inter-cluster interferers as $\Phi_{cp}$ corresponds to the worst case interference scenario, when all active devices in a remote cluster are caching the required content of
their own-cluster content client. This bound is in line with our analysis and the underlying network model, particularly,  the assumption of different content popularity and placement per clusters. This is because, based on this bound, the inter-cluster interference power, and correspondingly, the per cluster cache design will be independent of the content demand in other clusters. 

\black{We focus on the interference-limited regime, thus omitting the thermal noise. This assumption of negligible thermal noise power compared to the interference power is widely-adopted in the literature, especially for clustered D2D networks \cite{clustered_twc,clustered_tcom}, and \cite{8374852}. Moreover, accounting for the noise power is a direct extension of the analysis in this paper}. Assuming unit power Gaussian symbols, the received \ac{SIR} at the typical client when downloading content $m$ is given by
\begin{align} 	
\sir_m = \frac{\gamma_d \Bigg|\sum_{\boldsymbol{y}_{0i} \in \Phi_{cpm}}  G_{\boldsymbol{y}_{0i}} \lVert \boldsymbol{x}_0 + \boldsymbol{y}_{0i}\rVert^{-\alpha/2}\Bigg|^2}{I_{\rm out}},
\end{align}
where \black{$I_{\rm out}$} is the sum of interfering signal power associated with the downloading of content $m$, given by: 
\begin{align}
  I_{\rm out} =&  
  \gamma_d \Big| \sum_{\boldsymbol{x} \in \Phi_p^{!}} \sum_{\boldsymbol{y} \in \Phi_{cp}} G_{\boldsymbol{y}} \lVert\boldsymbol{x}+\boldsymbol{y}\rVert^{-\alpha/2} \Big|^2. 
 \nonumber
\end{align}

Finally, the offloading gain can be formally stated as:
\begin{align}
\label{eq:offload}
\mathbb{P}_{o}(\boldsymbol{c}) =  \sum_{m=1}^{N_f} q_m c_m + q_m(1-c_m)\Upsilon_m,
 \end{align}
where $\boldsymbol{c}=\{c_1,\dots,c_m,\dots,c_{N_f}\}$, and $\Upsilon_m=\mathbb{P}(\sir_m>\vartheta)$ is the rate coverage probability for content $m$, i.e., the probability that the received \ac{SIR} via \ac{CoMP} transmissions is larger than a target threshold $\vartheta$, which we characterize in the sequel. \black{In Eq. (\ref{eq:offload})}, the first term corresponds to the event of serving  a desired content from local memory, i.e., self-cache \cite{8422592}. The second term   represents the joint event that the desired content is not locally cached while being cached and downloadable from active providers in the same cluster, with an \ac{SIR} greater than the target threshold $\vartheta$. 

\vspace{-0.3 cm}
 \section{Rate Coverage Probability Analysis}
Our objective in this section is to analytically characterize the offloading gain. In particular, we first derive the exact expression of $\mathbb{P}_{o}(\boldsymbol{c})$ as a function of the system parameters. Then, we seek lower bound  and approximation of the rate coverage probability $\Upsilon_m$ that will result in easy-to-compute expressions of the offloading gain, and provide useful system design insights.

In the case of \ac{CoMP} transmissions, active providers in the representative cluster jointly transmit the requested content to the typical client. The received power at the typical client is the sum of the received signal powers from active providers, and hence, the rate coverage probability $\Upsilon_m$ is: 
 \begin{align}
 \label{rate-cov-prob}
\Upsilon_m &= \Pb\Bigg[ \frac{ \gamma_d \Big|\sum_{\boldsymbol{y}_{0i} \in \Phi_{cpm}}  G_{\boldsymbol{y}_{0i}} \lVert \boldsymbol{x}_0 + \boldsymbol{y}_{0i}\rVert^{-\alpha/2}\Big|^2}{I_{\rm out}} \geq \vartheta\Bigg].
 \end{align}
Since $G_{\boldsymbol{y}_{0i}}$ are \ac{i.i.d.} complex Gaussian \acp{RV}, we get
\begin{align}
\label{ch-gain}
\Bigg|\sum_{\boldsymbol{y}_{0i} \in \Phi_{cpm}}  \lVert \boldsymbol{x}_0 + \boldsymbol{y}_{0i}\rVert^{-\alpha/2} & G_{\boldsymbol{y}_{0i}}\Bigg|^2\sim
\nonumber \\
 & \exp\Bigg(\frac{1}{\sum_{\boldsymbol{y}_{0i} \in \Phi_{cpm}}  \lVert \boldsymbol{x}_0 + \boldsymbol{y}_{0i}\rVert^{-\alpha}}\Bigg).
\end{align}
Hence, from (\ref{rate-cov-prob}) and  (\ref{ch-gain}), we have     
\begin{align}
\label{eq:rp-exact}
 \Upsilon_m & \overset{}{=}  \Eb\Big[\exp\Big(-\frac{\vartheta \big( I_{\rm out}\big)}{\gamma_d S_{\Phi_{cpm}}}\Big)\Big] \nonumber \\
&\overset{(a)}{=}  \Eb\left[ 			
\Lc_{I_{\rm out}} (t)			
\Bigg|S_{\Phi_{cpm}}=s_{\Phi_{cpm}}\right],
\end{align} 
where $S_{\Phi_{cpm}}=\sum_{\boldsymbol{y}_{0i} \in \Phi_{cpm}}  \lVert \boldsymbol{x}_0 + \boldsymbol{y}_{0i}\rVert^{-\alpha}$ is a \ac{RV} that can be physically interpreted as the received signal power from the active providers  devices $\boldsymbol{y}_{0i} \in \Phi_{cpm}$ subject to path loss only (as we have already averaged over the fading based on the \ac{PDF} \black{in Eq. (\ref{ch-gain})}), assuming normalized power. (a) follows from the Laplace transform of the  interference $I_{\rm out}$ evaluated at $t=\frac{\vartheta}{\gamma_d s_{\Phi_{cpm}}}$.  
We derive the Laplace transform of interference in the following Lemma to compute the rate coverage probability, and correspondingly, the offloading gain. 

\begin{lemma}
\label{ch4:comp-interference}				
Laplace transform of the inter-cluster interference, conditioned on a realization of the active providers for content $m$ in the representative cluster, is given by
\begin{align}
\label{eq:lap-transform}
\Lc_{I_{\rm out}}(t) = {\rm exp}\Big(-2\pi \lambda_p\int_{v=0}^{\infty}\Big(1 - e^{-p\bar{n}\zeta(v,t)}\Big)v\dd{v}\Big),
\end{align}
where $t=\frac{\vartheta}{\gamma_d s_{\Phi_{cpm}}}$, $\zeta(v,t) = \int_{u=0}^{\infty}\frac{t\gamma_d}{u^{\alpha}+t\gamma_d} f_{U|V}(u|v)\dd{u}$, $f_{U|V}(u|v)=\mathrm{Rice} (u;v,\sigma)$ is the Rician \ac{PDF} modeling the distance $U=\lVert\boldsymbol{x}+\boldsymbol{y}\rVert$ between an interfering device at $\boldsymbol{y}$ relative to its cluster center at $\boldsymbol{x} \in \Phi_{p}$ and the origin $(0,0)$, conditioned on $V=\lVert\boldsymbol{x}\rVert=v$.
\end{lemma}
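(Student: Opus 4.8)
The plan is to evaluate $\Lc_{I_{\rm out}}(t) = \Eb\big[e^{-t I_{\rm out}}\big]$ directly from the spatial model, exploiting the hierarchical (parent/daughter) structure of the \ac{TCP} through two nested applications of the \ac{PGFL}. A first observation streamlines the conditioning: the interference field is generated entirely by the remote clusters $\Phi_p^{!}$ and their active members, which are independent of the active providers $\Phi_{cpm}$ inside the representative cluster. Hence conditioning on the realization $S_{\Phi_{cpm}} = s_{\Phi_{cpm}}$ does not alter the law of $I_{\rm out}$; it merely fixes the evaluation point $t = \vartheta/(\gamma_d s_{\Phi_{cpm}})$. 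The Laplace transform is therefore an unconditional functional of the interfering \ac{TCP}, evaluated at this $t$.

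Next I would peel off the fading. Treating the aggregate inter-cluster interference as the non-coherent superposition of the individual powers $\gamma_d |G_{\boldsymbol{y}}|^2 \lVert \boldsymbol{x}+\boldsymbol{y}\rVert^{-\alpha}$ over $\boldsymbol{x}\in\Phi_p^{!}$ and $\boldsymbol{y}\in\Phi_{cp}$, the exponential factors across distinct interferers, so that $e^{-t I_{\rm out}} = \prod_{\boldsymbol{x}}\prod_{\boldsymbol{y}} e^{-t\gamma_d |G_{\boldsymbol{y}}|^2 \lVert\boldsymbol{x}+\boldsymbol{y}\rVert^{-\alpha}}$. Since the $|G_{\boldsymbol{y}}|^2$ are \ac{i.i.d.} unit-mean exponential, averaging each factor over its own fading gives $\Eb_{G}\big[e^{-t\gamma_d |G_{\boldsymbol{y}}|^2 \lVert\boldsymbol{x}+\boldsymbol{y}\rVert^{-\alpha}}\big] = (1 + t\gamma_d\lVert\boldsymbol{x}+\boldsymbol{y}\rVert^{-\alpha})^{-1}$, and I note that $1 - (1 + t\gamma_d r^{-\alpha})^{-1} = t\gamma_d/(r^{\alpha}+t\gamma_d)$, which is precisely the kernel appearing in $\zeta$. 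The remaining expectation is over the point process only.

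I would then apply the \ac{PGFL} twice. For a fixed parent $\boldsymbol{x}$, the active members $\Phi_{cp}$ form a Gaussian \ac{PPP} of intensity $p\lambda_c(\boldsymbol{y})$, so the inner product over $\boldsymbol{y}$ reduces to $\exp\!\big(-\int_{\R^2} \frac{t\gamma_d}{\lVert\boldsymbol{x}+\boldsymbol{y}\rVert^{\alpha}+t\gamma_d}\, p\lambda_c(\boldsymbol{y})\dd{\boldsymbol{y}}\big)$. The crucial change of variables sends $\boldsymbol{y}$ to the distance $U = \lVert\boldsymbol{x}+\boldsymbol{y}\rVert$ seen from the origin: because $\boldsymbol{y}$ is a centred Gaussian displacement with per-coordinate variance $\sigma^2$, the vector $\boldsymbol{x}+\boldsymbol{y}$ is Gaussian about $\boldsymbol{x}$, and its modulus $U$, conditioned on $V=\lVert\boldsymbol{x}\rVert=v$, is Rician with parameters $(v,\sigma)$. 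Substituting $\lambda_c(\boldsymbol{y}) = \tfrac{\bar n}{2\pi\sigma^2}e^{-\lVert\boldsymbol{y}\rVert^2/2\sigma^2}$ and integrating out the angle converts the $\R^2$-integral into $p\bar n \int_0^\infty \frac{t\gamma_d}{u^{\alpha}+t\gamma_d} f_{U|V}(u|v)\dd{u} = p\bar n\,\zeta(v,t)$, giving $\exp(-p\bar n\,\zeta(v,t))$ per cluster. Finally, applying the \ac{PGFL} of the parent \ac{PPP} $\Phi_p$ of density $\lambda_p$ to $\prod_{\boldsymbol{x}} e^{-p\bar n\zeta(\lVert\boldsymbol{x}\rVert,t)}$ and passing to polar coordinates produces $\exp\!\big(-2\pi\lambda_p\int_0^\infty (1-e^{-p\bar n\zeta(v,t)})\,v\dd{v}\big)$, as claimed.

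The main obstacle is the first reduction, i.e., passing from the squared-modulus form of $I_{\rm out}$ to the product over individual interferers. This relies on the non-coherent aggregation assumption so that cross terms do not survive and the channel powers enter as \ac{i.i.d.} exponentials; once this factorization is in place the remainder is the standard two-stage \ac{PGFL} computation. The only other delicate point is the geometric bookkeeping in the change of variables, where one must verify that the Gaussian intra-cluster scattering of density $\lambda_c$ integrates to the mean count $\bar n$ times the Rician average, so that the factor $p\bar n$ and the Rician \ac{PDF} $f_{U|V}$ emerge exactly as stated.
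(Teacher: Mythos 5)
Your proposal is correct and follows essentially the same route as the paper's proof: average the i.i.d.\ exponential channel powers per interferer to obtain the factor $(1+t\gamma_d u^{-\alpha})^{-1}$, apply the \ac{PGFL} of the daughter Gaussian \ac{PPP} with the Rician conditional distance density to get $e^{-p\bar{n}\zeta(v,t)}$ per remote cluster, and then apply the \ac{PGFL} of the parent \ac{PPP} in polar coordinates. Your explicit remarks on the independence of $I_{\rm out}$ from $\Phi_{cpm}$ and on the non-coherent reduction of the squared-modulus interference to a sum of per-interferer powers are steps the paper takes implicitly, so nothing is missing.
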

\begin{IEEEproof}
Please refer to \App{proof-comp-interference}. 
\end{IEEEproof}

We continue by characterizing the joint serving distance distribution. For a given realization $S_{\Phi_{cpm}}=s_{\Phi_{cpm}}$, let us assume that there are $k$ active providers in the representative cluster. 
Let us also denote joint distances from the typical client (origin) to the $k$ content providers in the representative cluster, centered at $\boldsymbol{x}_0$, as $\boldsymbol{H}_k= \{H_1, \dots, H_k\}$. Then, conditioning on $\boldsymbol{H}_k = \boldsymbol{h}_k$, where $\boldsymbol{h}_k= \{h_1, \dots, h_k\}$, the conditional (i.e., on $k$) \ac{PDF} of the joint serving distance is denoted as $f_{\boldsymbol{H}_k}(\boldsymbol{h}_k)$. 
Hence, conditioning on $k$, we can express the rate coverage probability as 
\begin{align}
\Upsilon_{m|k} =  \Eb\left[ \Lc_{I_{\rm out}} \Big(t=\frac{\vartheta}{\gamma_d\sum_{i=1}^{k}  h_i^{-\alpha}}\Big)\Bigg|S_{\Phi_{cpm}}=s_{\Phi_{cpm}}\right],
\end{align} 
where $s_{\Phi_{cpm}}=\sum_{i=1}^{k}  h_i^{-\alpha}$, $ h_i=\lVert \boldsymbol{x}_0 + \boldsymbol{y}_{0i}\rVert$, and $\boldsymbol{y}_{0i}\in\Phi_{cpm}$. Since a content provider $i$ in the representative cluster centered at $\boldsymbol{x}_0$ has its coordinates in $\mathbb{R}^2$ chosen independently from a Gaussian distribution with standard deviation $\sigma$, then, by definition, the distance from such a content provider to the origin, denoted as $h_i=\lVert \boldsymbol{x}_0+\boldsymbol{y}_{0i}\rVert$, has Rician distribution $f_{H_i|V_0}(h_i|v_0)=\mathrm{Rice}(h_i;v_0,\sigma)$. 
Since also the content providers and the typical client have their locations sampled from a normal distribution with variance $\sigma^2$ relative to  their cluster center $\boldsymbol{x}_0$, then, by definition, the statistical distance distribution between any two points, e.g., from the $i$-th content provider to the typical client, follows Rayleigh distribution with scale parameter $\sqrt{2}\sigma$, written as 
\begin{align}
f_{H_i}(h_i) = \mathrm{Rayleigh}(h_i,\sqrt{2}\sigma) = \frac{h_i}{2\sigma^2}e^{-\frac{h_i^2}{4\sigma^2}}.
\end{align}
If the serving distances from the typical client to the different points of the cluster were independent from each other, $f_{\boldsymbol{H}_k}(\boldsymbol{h}_k)$ would simply be the product of $k$ independent \acp{PDF}, each of which having $f_{H_i}(h_i)=\mathrm{Rayleigh}(h_i,\sqrt{2}\sigma)$. 
However, there is a correlation between the serving distances due to the common factor $\boldsymbol{x}_0$ in the serving distance equation $h_i=\lVert \boldsymbol{x}_0 + \boldsymbol{y}_{0i}\rVert$ with $\boldsymbol{y}_{0i}\in \Phi_{cpm}$, see also Fig.~\ref{distance_near}. For analytical tractability, similar to \cite{clustered_twc} and \cite{amer2019optimizing}, we neglect this correlation. Hence, the conditional \ac{PDF} of the joint serving distance $f_{\boldsymbol{H}_k}(\boldsymbol{h}_k)$ can be  obtained from
\begin{align}
\label{joint-pdf} 
f_{\boldsymbol{H}_k}(\boldsymbol{h}_k) = \prod_{i=1}^{k}\frac{h_i}{2\sigma^2}e^{-\frac{h_i^2}{4\sigma^2}}.
\end{align}
\black{Such a negligible correlation assumption yields an upper bound on the exact rate coverage probability of the typical device. Moreover, if the typical device is assumed to be at the cluster center, there would be no correlation between the underlying serving distances, e.g., see \cite{amer2020caching} and \cite{8998329}, and (13) will represent the exact joint serving distance \ac{PDF}}. Conditioning on having $k$ active content providers, i.e., $s_{\Phi_{cpm}}=\sum_{i=1}^{k}  h_i^{-\alpha}$, the rate coverage probability will be given by $\Upsilon_{m|k} = $
\begin{align}
\label{given-k} 
\int_{\boldsymbol{h}_k=\boldsymbol{0}}^{\infty} \mathscr{L}_{I_{\rm out}} \Bigg(\frac{\vartheta}{\gamma_d\sum_{i=1}^{k}  h_i^{-\alpha}}
\Bigg |S_{\Phi_{cpm}}=s_{\Phi_{cpm}} \Bigg) f_{\boldsymbol{H}_k}(\boldsymbol{h}_k) \dd{\boldsymbol{h}}_k.
\end{align}

Given that $\Phi_{cpm}$ is a Gaussian \ac{PPP} , the number of active content providers for content $m$ is a Poisson \ac{RV} with mean $c_mp\bar{n}$. Therefore, the probability that there are $k$ content providers is equal to $\frac{(p\bar{n}c_m)^ke^{-p\bar{n}c_m}}{k!}$. Invoking this along with (\ref{eq:lap-transform}), (\ref{joint-pdf}), and (\ref{given-k}) into (\ref{eq:offload}), $\mathbb{P}_{o}(\boldsymbol{c})$ is given as 
\begin{align}
&\mathbb{P}_{o}(\boldsymbol{c}) =  \sum_{m=1}^{N_f} q_m\Bigg(c_m + \big(1-c_m\big).\sum_{k=1}^{\infty} \frac{(p\bar{n}c_m)^ke^{-c_mp\bar{n}}}{k!} \cdot
& \nonumber		\\
\label{ch4:offloading-gain-eqn} 
 & \int_{\boldsymbol{h}_k=\boldsymbol{0}}^{\boldsymbol{\infty}} e^{-2\pi \lambda_p\int_{v=0}^{\infty}\Big(1 - e^{-p\bar{n}(1 - \zeta(v,t))}\Big)v\dd{v}} \prod_{i=1}^{k}\frac{h_i}{2\sigma^2}e^{-\frac{h_i^2}{4\sigma^2}} \dd{\boldsymbol{h}}_k \Bigg).
\end{align} 

Since the obtained expression \black{in Eq. (\ref{ch4:offloading-gain-eqn})} involves multi-fold integrals and summations, this renders the calculation of the rate coverage probability computationally complex. Furthermore, the offloading gain maximization problem turns to be intractable. Therefore, in the sequel, we focus on tight bound and approximation of the rate coverage probability that will result in easy-to-compute expressions that also enable us to   formulate a tractable optimization problem to maximize the offloading gain. 

\vspace{-0.4 cm}
\subsection{Lower Bound on Offloading Gain}		
Next, we obtain a tractable lower bound on the offloading gain based on an upper bound on the interference power. \black{By replacing the exponential interference formula in the Laplace transform expression  of inter-cluster interference in (\ref{eq:lap-transform}) by the first and second terms of its Taylor's series expansion,  we obtain an upper bound on the interference power. Accordingly, this bound on the Laplace transform of inter-cluster interference yields a lower bound on the rate coverage probability $\Upsilon_i$.}  
\begin{theorem}
\label{ch4:comp-interference-approx}	
Laplace transform of interference derived in \Eq{lap-transform} can be bounded by
 \begin{equation} 
 \label{eq:ppp-interference}
\Lc_{I_{\rm out}}(t) \approx \exp\left(-\pi p\bar{n}\lambda_p t^{2/\alpha} \Gamma(1 + 2/\alpha)\Gamma(1 - 2/\alpha)\right),
\end{equation}
and, correspondingly, a lower bound on the offloading gain is given by
\begin{align}
& \mathbb{P}_{o}^{\sim}(\boldsymbol{c}) = \sum_{m=1}^{N_f} q_m\Big(c_m + \big(1-c_m\big).\sum_{k=1}^{\infty} \frac{(c_mp\bar{n})^ke^{-c_mp\bar{n}}}{k!} \times 
\nonumber \\
\label{ch4:lower-bound-offload-gain} 
& \int_{\boldsymbol{0}}^{\infty} e^{-\pi p \bar{n}\lambda_p (\frac{\vartheta}{\sum_{i=1}^{k} h_i^{-\alpha}})^{2/\alpha} \Gamma(1 + 2/\alpha)\Gamma(1 - 2/\alpha)}
 \prod_{i=1}^{k}\frac{h_i}{2\sigma^2}e^{-\frac{h_i^2}{4\sigma^2}} \dd{\boldsymbol{h}}_k\Big).
 \end{align}
\end{theorem}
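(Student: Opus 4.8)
The plan is to start from the exact Laplace transform in \Eq{lap-transform} and replace the inner factor $1-e^{-p\bar{n}\zeta(v,t)}$ by $p\bar{n}\zeta(v,t)$, i.e.\ retain only the leading term of the Taylor expansion of $1-e^{-p\bar{n}\zeta(v,t)}$. Because $\zeta(v,t)\ge 0$ and the elementary inequality $1-e^{-x}\le x$ holds for every $x\ge 0$, this replacement is in fact a genuine inequality, $1-e^{-p\bar{n}\zeta(v,t)}\le p\bar{n}\zeta(v,t)$, so the exponent of \Eq{lap-transform} only grows and $\Lc_{I_{\rm out}}(t)$ is \emph{lower}-bounded by the quantity in \Eq{ppp-interference}. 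A lower bound on the Laplace transform feeds through \Eq{rp-exact} into a lower bound on each $\Upsilon_m$, and through \Eq{offload} into the stated lower bound $\mathbb{P}_o^{\sim}(\boldsymbol{c})$; this is the correct monotone direction, since a smaller Laplace transform corresponds to the promised upper bound on the interference power. I would track this direction explicitly so that the final expression is certified as a true bound rather than a mere approximation.

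After the inequality, the exponent reduces to $-2\pi\lambda_p p\bar{n}\int_0^\infty \zeta(v,t)\,v\dd{v}$. Substituting the definition of $\zeta$ and exchanging the order of the $u$- and $v$-integrations (justified by Tonelli, as the integrand is nonnegative) gives $-2\pi\lambda_p p\bar{n}\int_0^\infty \frac{t\gamma_d}{u^{\alpha}+t\gamma_d}\big(\int_0^\infty f_{U|V}(u|v)\,v\dd{v}\big)\dd{u}$. The crux is the inner $v$-integral. Inserting the Rician density $f_{U|V}(u|v)=\frac{u}{\sigma^2}\exp\!\big(-\frac{u^2+v^2}{2\sigma^2}\big)\besselI{\frac{uv}{\sigma^2}}$ and invoking the classical identity $\int_0^\infty w\,e^{-pw^2}\besselI{cw}\dd{w}=\frac{1}{2p}e^{c^2/(4p)}$, all the $\sigma$- and exponential factors cancel and the inner integral collapses to exactly $u$. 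I regard this collapse as the heart of the argument and the one step deserving real care: it is the analytic shadow of the fact that, once intra-cluster distance correlations are ignored, the superposed field of all interfering daughter points has a \emph{constant} effective intensity $p\bar{n}\lambda_p$, so the clustered interference is dominated by that of a homogeneous \ac{PPP} of density $p\bar{n}\lambda_p$ --- precisely the interpretation advertised among the contributions.

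With the inner integral equal to $u$, what remains is the one-dimensional integral $\int_0^\infty \frac{t\gamma_d\,u}{u^{\alpha}+t\gamma_d}\dd{u}$. I would evaluate it by the substitution $w=u^{\alpha}/(t\gamma_d)$, which brings it to $\frac{(t\gamma_d)^{2/\alpha}}{\alpha}\int_0^\infty\frac{w^{2/\alpha-1}}{1+w}\dd{w}=\frac{(t\gamma_d)^{2/\alpha}}{\alpha}\cdot\frac{\pi}{\sin(2\pi/\alpha)}$, where convergence of the Beta-type integral needs $0<2/\alpha<1$, i.e.\ $\alpha>2$, exactly the standing assumption. The reflection formula $\Gamma(1+2/\alpha)\Gamma(1-2/\alpha)=\frac{2\pi/\alpha}{\sin(2\pi/\alpha)}$ then rewrites this as $\frac{(t\gamma_d)^{2/\alpha}}{2}\Gamma(1+2/\alpha)\Gamma(1-2/\alpha)$. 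Multiplying by $2\pi\lambda_p p\bar{n}$ produces the exponent $\pi p\bar{n}\lambda_p (t\gamma_d)^{2/\alpha}\Gamma(1+2/\alpha)\Gamma(1-2/\alpha)$, which establishes \Eq{ppp-interference} (with the power $\gamma_d$ absorbed into $t$, so that $t\gamma_d=\vartheta/s_{\Phi_{cpm}}$).

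Finally, to obtain the offloading-gain bound I would propagate this closed form. Evaluating the bounded Laplace transform at $t=\vartheta/(\gamma_d\sum_{i=1}^{k}h_i^{-\alpha})$ turns $t\gamma_d$ into $\vartheta/\sum_i h_i^{-\alpha}$, so the conditional coverage probability (\ref{given-k}) inherits the exponent $-\pi p\bar{n}\lambda_p (\vartheta/\sum_i h_i^{-\alpha})^{2/\alpha}\Gamma(1+2/\alpha)\Gamma(1-2/\alpha)$. Averaging over the Poisson number $k$ of active providers with weights $\frac{(c_mp\bar{n})^k e^{-c_mp\bar{n}}}{k!}$ and over the joint serving-distance density (\ref{joint-pdf}), and substituting into \Eq{offload}, reproduces (\ref{ch4:lower-bound-offload-gain}). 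Apart from the Bessel-integral collapse, every remaining step is either bookkeeping or a textbook integral, so the only genuine obstacle is recognizing and executing that inner $v$-integration.
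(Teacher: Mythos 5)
Your proof is correct and rests on the same core idea as the paper's: linearizing the exponential inside the \ac{PGFL} of the parent process. The paper performs the linearization one step earlier, on $\exp\!\big(-p\bar{n}(1-\zeta'(v,t))\big)$ with the fading $G_u$ still explicit, and then simply cites Lemma~2 of \cite{8647532} for the value of the resulting integral $J(t)=\tfrac{(t\gamma_d)^{2/\alpha}}{2}\Gamma(1+2/\alpha)\Gamma(1-2/\alpha)$; after averaging over $G_u$ the two linearizations land in the same place, so this is a presentational rather than substantive difference. What your route adds is, first, a self-contained evaluation of that integral --- the collapse of the Rician $v$-integral to $u$ via the identity $\int_0^\infty w\,e^{-pw^2}\besselI{cw}\dd{w}=\tfrac{1}{2p}e^{c^2/(4p)}$, followed by the Beta-integral and reflection-formula computation --- which you carry out correctly and which the paper outsources; and second, the observation that $1-e^{-x}\le x$ for $x\ge 0$ turns the paper's Taylor ``approximation'' into a genuine one-sided inequality, so the lower-bound direction claimed for $\Upsilon_m$ and $\mathbb{P}_{o}^{\sim}(\boldsymbol{c})$ (and the upper bound on $I_{\rm out}$ in Remark~1) is certified rather than merely asserted. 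Your handling of the residual $\gamma_d$ (the theorem writes $t^{2/\alpha}$ while the derivation produces $(t\gamma_d)^{2/\alpha}$, which is what actually enters \Eq{ch4:lower-bound-offload-gain} as $(\vartheta/\sum_i h_i^{-\alpha})^{2/\alpha}$) and the final Poisson mixture over $k$ with the joint serving-distance density both match the paper.
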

\begin{IEEEproof}
Please refer to \App{proof-comp-interference-approx}. 
\end{IEEEproof}

\begin{remark} {\rm The obtained expression \black{in Eq. (\ref{eq:ppp-interference})} boils down to the Laplace transform of a \ac{PPP} with intensity $\bar{n}\lambda_p$. This shows that the inter-cluster interference of a \ac{TCP} with density of clusters $\lambda_p$ and average number of devices per cluster $\bar{n}$, i.e., with intensity $\bar{n} \lambda_p$, is upper bounded by that of a \ac{PPP} of the same intensity.}
\end{remark}

\begin{figure}[t]	
\vspace{-0.3 cm}
\centering
\includegraphics[width=0.45\textwidth]{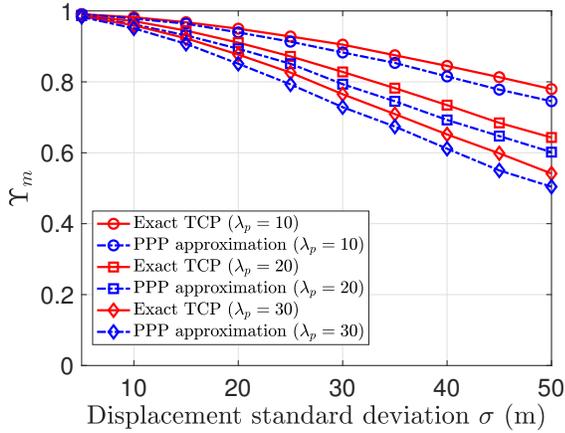}   
\caption {The lower bound on $\Upsilon_m$ based on (\ref{eq:ppp-interference}) versus displacement standard deviation $\sigma$ is plotted for various parent \ac{PPP} densities $\lambda_p$ ($\bar{n}=20$, $p=0.5$, $c_m=0.5$). "Exact TCP" in the legend refers to the exact performance for the \ac{TCP} while "PPP approximation" refers to the lower bound based on Theorem \ref{ch4:comp-interference-approx}.}			
\label{fig:lower-bound}
\vspace{-0.4 cm}
\end{figure}
In Fig.~\ref{fig:lower-bound}, we plot the exact expression and its lower bound, based on (\ref{eq:ppp-interference}), of the rate coverage probability  versus displacement variance $\sigma$ for various parent \ac{PPP} $\Phi_p$ densities $\lambda_p$. The derived lower bound is considerably tight when both $\sigma$ and $\lambda_p$ are relatively small. Also, it is noticeable that $\Upsilon_m$ monotonically decreases with both $\sigma$ and $\lambda_p$, which reflects the fact that the desired signal is weaker when the distance between content providers and the typical client is larger, and the effect of inter-cluster interference increases when the density of clusters increases, respectively. When $\lambda_p$ and $\sigma$ increase, the obtained lower bound becomes no longer tight, however, it still represents a reasonable bound on the exact $\Upsilon_m$.

Having obtained a lower bound on the offloading gain, next, we seek further approximation by replacing the desired signal power with the sum of the signal power from the nearest device signal and the mean power received from all other active content providing devices.
\vspace{-0.3 cm}
\subsection{Serving Power Approximation}			
From (\ref{ch-gain}), $S_{\Phi_{cpm}}=\sum_{\boldsymbol{y}_{0i} \in \Phi_{cpm}}  \lVert \boldsymbol{x}_0 + \boldsymbol{y}_{0i}\rVert^{-\alpha}$ represents a \ac{RV} that models the intended signal power from content providers $\boldsymbol{y}_{0i} \in \Phi_{cpm}$ \emph{subject to path loss only}. Next, we adopt the so-called \emph{mean plus nearest approximation} to approximate this intended signal power $S_{\Phi_{cpm}}$ as a sum of two terms. \black{More specifically, the first term $\lVert \boldsymbol{x}_0 + \boldsymbol{y}_{01}\rVert^{-\alpha}$ is the received signal power from the nearest active provider, where  $\boldsymbol{y}_{01}=\argmin_{\boldsymbol{y}_{0i} \in \Phi_{cpm}}\{\lVert \boldsymbol{x}_0 + \boldsymbol{y}_{0i}\rVert\}$. In addition, the second term is the statistical average of the \ac{RV} $S_{\Phi_{cpm}^{!}} $, where $S_{\Phi_{cpm}^{!}}  = \sum_{\boldsymbol{y}_{0i} \in \Phi_{cpm}\setminus\boldsymbol{y}_{01}}   \lVert \boldsymbol{x}_0 + \boldsymbol{y}_{0i}\rVert^{-\alpha}$ is the overall signal power received from all other active providers conditioning on the nearest serving  distance $H_1=h_1=\lVert \boldsymbol{x}_0 + \boldsymbol{y}_{01}\rVert$.  
The main motivation behind this approximation is that if the \ac{RV} $S_{\Phi_{cpm}^{!}} $ is concentrated around its mean, it can be effectively approximated by this mean $\Eb[S_{\Phi_{cpm}^{!}} |h_1]$. The tightness of this approximation is investigated in the sequel.} 

As we will see, this approximation yields an easy way to obtain the rate coverage probability while also being tight. This approach has been similarly adopted to circumvent intractable analysis in the stochastic geometry literature, see, e.g., \cite{8536464}. Starting from the Laplace transform expression \black{in Eq. (\ref{eq:rp-exact})}, we approximate $S_{\Phi_{cpm}}$ as		
\begin{equation}
\label{s_phi_approx}
S_{\Phi_{cpm}} \approx \Big\lVert \boldsymbol{x}_0 + \boldsymbol{y}_{01}\Big\rVert^{-\alpha} + \Eb\Big[ S_{\Phi_{cpm}^{!}}|  \boldsymbol{y}_{0i} \Big],  
 \end{equation}					
where $\Phi_{cpm}^{!} = \Phi_{cpm} \setminus \boldsymbol{y}_{01}$, and $S_{\Phi_{cpm}^{!}}= 
\sum_{\boldsymbol{y}_{0i} \in \Phi^!_{cpm}}\lVert \boldsymbol{x}_0 + \boldsymbol{y}_{0i}\rVert^{-\alpha}$.
Next, we derive the distribution of nearest serving  distance $h_1=\lVert \boldsymbol{x}_0 + \boldsymbol{y}_{01}\rVert$. Then, we prove the concentration of the proposed approximation using Chebyshev's inequality. 
Finally, given the distance distribution to the nearest device $f_{H_1}(h_1)$, and the derived formula for $\Eb\left[ S_{\Phi_{cpm}^{!}}\Big| H_1=h_1 \right]$, an approximation for $S_{\Phi_{cpm}}$ is obtained based on  (\ref{s_phi_approx}).
\begin{lemma}
\label{ch4:pdf-nearest-sitance}				
The \ac{PDF} of the distance from the typical client to the nearest active provider in $\Phi_{cm}$ is given by 
\begin{align}
\label{Leibniz}
f_{H_1}(h_1) &=c_mp\bar{n}\int_{v_0=0}^{\infty} f_{V_0}(v_0) f_{H_1|V_0}(h_1|v_0)  \times 
\nonumber \\
&   \quad\quad\quad\quad\quad\quad\quad e^{-c_mp\bar{n}\int_{0}^{h_1}f_{H|V_0}(h|v_0)\dd{h}}\dd{v_0},
 \end{align}
 \black{and using Jensen's inequality, it can be approximated by}
 \begin{equation}
 \label{nearest-1}
f_{H_1}(h_1) \approx \frac{c_mp\bar{n} h_1 \exp\left(-c_mp\bar{n}\left(1-\exp(\frac{-h_1^2}{4 \sigma^2})\right) -\frac{h_1^2}{4 \sigma^2} \right)}{2 \sigma^2}.
 \end{equation}
\end{lemma}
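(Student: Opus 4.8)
The plan is to treat the active providers caching content $m$ as an inhomogeneous Poisson process whose points are \emph{distances} to the typical client, condition on the cluster-center distance $V_0=\lVert\boldsymbol{x}_0\rVert$, obtain the nearest-distance law from an empty-space (void) probability, then decondition on $V_0$ for the exact PDF; the approximation \eqref{nearest-1} will then drop out of Jensen's inequality applied to the resulting complementary CDF.

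First I would condition on $V_0=v_0$. Given the center location, each active provider's distance to the typical client at the origin is an i.i.d.\ draw from the Rician density $\mathrm{Rice}(h;v_0,\sigma)$, while the number of such providers is Poisson with mean $c_mp\bar{n}$ (the total mass of $\lambda_{cpm}$). Consequently the collection of these distances is a one-dimensional PPP on $\Rp$ with intensity $c_mp\bar{n}\,\mathrm{Rice}(h;v_0,\sigma)$. The event $\{H_1>h_1\}$ is exactly the event that this process is void on $[0,h_1]$, so the conditional complementary CDF is
\[
\bar F_{H_1\mid V_0}(h_1\mid v_0)=\exp\Big(-c_mp\bar{n}\int_{0}^{h_1}\mathrm{Rice}(h;v_0,\sigma)\,\dd{h}\Big).
\]
Deconditioning against the Rayleigh$(\sigma)$ law $f_{V_0}$ of the client-to-center distance and differentiating under the integral (the Leibniz rule, hence the label of the equation) with respect to $h_1$, using $\partial_{h_1}\!\int_0^{h_1}\mathrm{Rice}=\mathrm{Rice}(h_1;v_0,\sigma)$, recovers the exact expression \eqref{Leibniz} verbatim.

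For the approximation I would start from the deconditioned complementary CDF $\bar F_{H_1}(h_1)=\Eb_{V_0}\big[\exp(-c_mp\bar{n}\,F_{H\mid V_0}(h_1\mid V_0))\big]$, where $F_{H\mid V_0}$ is the Rician CDF. Since $x\mapsto e^{-x}$ is convex, Jensen's inequality pushes the $V_0$-average inside the exponent, giving $\bar F_{H_1}(h_1)\ge\exp\big(-c_mp\bar{n}\,\Eb_{V_0}[F_{H\mid V_0}(h_1\mid V_0)]\big)$. By Tonelli I can interchange the expectation with the radial integral, and the decisive identity is that averaging the Rician over the Rayleigh$(\sigma)$ center distance returns the marginal Rayleigh$(\sqrt2\sigma)$ law already established in the text; this yields $\Eb_{V_0}[F_{H\mid V_0}(h_1\mid V_0)]=\int_0^{h_1}\frac{h}{2\sigma^2}e^{-h^2/4\sigma^2}\,\dd{h}=1-e^{-h_1^2/4\sigma^2}$. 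Substituting into the Jensen bound and differentiating in $h_1$ then reproduces \eqref{nearest-1} exactly.

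The hard part will be the marginalization step: showing $\Eb_{V_0}[\mathrm{Rice}(h;V_0,\sigma)]$ collapses to the Rayleigh$(\sqrt2\sigma)$ density, equivalently that the client-to-provider distance marginalizes to Rayleigh because the difference of two independent isotropic Gaussians of per-coordinate variance $\sigma^2$ is Gaussian of per-coordinate variance $2\sigma^2$. I would invoke the Rayleigh$(\sqrt2\sigma)$ statement already derived in the paper rather than reprove it. A secondary point worth flagging is that Jensen delivers a genuine \emph{lower bound} on the complementary CDF, so its term-by-term differentiation need not bound the PDF; this is precisely why \eqref{nearest-1} is stated as an approximation rather than an inequality, and I would make that caveat explicit in the write-up.
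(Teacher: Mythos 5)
Your proposal is correct and follows essentially the same route as the paper: the paper derives the void probability by evaluating the PGFL $G_N(\vartheta)$ of $\Phi_{cpm}$ at $\vartheta=0$ after a change of variables to the Rician conditional density, which is exactly your mapped one-dimensional PPP void-probability argument, and the Jensen step (pushing the $V_0$-average into the exponent and using the Rayleigh$(\sqrt{2}\sigma)$ marginalization) is identical. Your closing caveat that differentiating the Jensen bound need not preserve the inequality for the PDF is a fair point the paper leaves implicit by labelling \Eq{nearest-1} an approximation.
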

\begin{IEEEproof}
The proof is provided in \App{proof-pdf-nearest-distance}. 
\end{IEEEproof}

\begin{figure}[!tb]	
\vspace{-0.3 cm}
\centering
\includegraphics[width=0.45\textwidth]{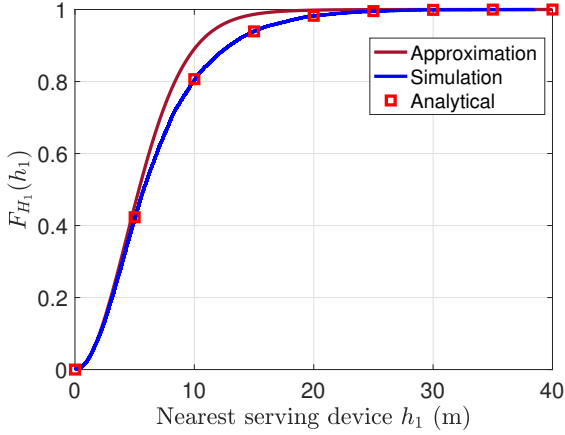}		
\caption {The derived nearest serving distance \ac{CDF} \black{in Eq. (\ref{near-cdf})} is plotted and compared with simulation and Jensen's inequality-based approximation \black{in Eq. (\ref{approx_CDF})} ($\bar{n}=20$, $\sigma=\SI{10}{m}$, $c_m=0.5$, $p=1$).}		
\label{fig:near_dist_cdf}
\vspace{-0.5 cm}
\end{figure}

The accuracy of the derived \ac{CDF} $F_{H_1}(h_1)$ \black{in Eq. (\ref{near-cdf})} and its approximation based on Jensen's inequality \black{in Eq. (\ref{approx_CDF})} (see \App{proof-pdf-nearest-distance}) are verified in Fig.~\ref{fig:near_dist_cdf}. It is clear from (\ref{nearest-1}) that the distance to the nearest active content provider statistically  decreases as $c_m$ or $p$ increase, i.e., when there is a high probability of having active and caching content providers within the local cluster. The distance is also more likely to decrease as $\bar{n}$ increases since a congested cluster has shorter distance between the content client and providers. 
 
Next, we show that approximating the desired signal by its nearest and conditional mean components, see (\ref{s_phi_approx}), yields an accurate yet tractable expressions for the rate coverage probability and offloading gain.
 
 \begin{proposition}
\label{ch4:concentration}
For scenarios of practical interest, the proposed approximation for $S_{\Phi_{cpm}}$ \black{in Eq. (\ref{s_phi_approx})} is a tractable yet remarkably tight bound, and hence, it introduces a reasonable approximation for the rate coverage probability and offloading gain.
\end{proposition}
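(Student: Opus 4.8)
The plan is to prove the claim in two stages. First I would show that the residual signal power $S_{\Phi_{cpm}^{!}}$---the contribution of all active providers \emph{except} the nearest one---concentrates sharply around its conditional mean, so that the replacement in \eqref{s_phi_approx} incurs only a small relative error. Second I would transfer this concentration through the Laplace transform to the rate coverage probability $\Upsilon_m$ and, by the linearity of \eqref{eq:offload} in the per-content terms, to the offloading gain $\mathbb{P}_o(\boldsymbol{c})$.

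First I would condition on the nearest serving distance $H_1=h_1$. Retaining the correlation-neglect assumption used for \eqref{joint-pdf}, the active providers caching content $m$ form a Gaussian PPP, and by the independence property of the PPP the providers lying beyond $h_1$ again form a PPP. Mapping to the distance domain, the distances $\{h_i:\boldsymbol{y}_{0i}\in\Phi_{cpm}^{!}\}$ constitute a one-dimensional PPP on $[h_1,\infty)$ with intensity $c_m p\bar{n}\,f_H(h)$, where $f_H(h)=\frac{h}{2\sigma^2}e^{-h^2/(4\sigma^2)}$ is the Rayleigh density of scale $\sqrt{2}\sigma$. Applying Campbell's theorem to the linear functional $S_{\Phi_{cpm}^{!}}=\sum_i h_i^{-\alpha}$ then yields the conditional mean and variance as the single integrals
\[
\Eb[S_{\Phi_{cpm}^{!}}\,|\,h_1]=c_m p\bar{n}\!\int_{h_1}^{\infty}\! h^{-\alpha} f_H(h)\,\dd{h},\qquad \mathrm{Var}[S_{\Phi_{cpm}^{!}}\,|\,h_1]=c_m p\bar{n}\!\int_{h_1}^{\infty}\! h^{-2\alpha} f_H(h)\,\dd{h},
\]
both reducible to incomplete gamma functions via $u=h^2/(4\sigma^2)$.

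The crux of the argument is Chebyshev's inequality applied to the relative deviation,
\[
\Pb\!\big(\,\big|S_{\Phi_{cpm}^{!}}-\Eb[S_{\Phi_{cpm}^{!}}\,|\,h_1]\big|\ge \delta\,\Eb[S_{\Phi_{cpm}^{!}}\,|\,h_1]\,\big)\le \frac{\mathrm{Var}[S_{\Phi_{cpm}^{!}}\,|\,h_1]}{\delta^2\big(\Eb[S_{\Phi_{cpm}^{!}}\,|\,h_1]\big)^2}.
\]
Because both moments carry a single factor of $c_m p\bar{n}$, the right-hand side scales as $1/(c_m p\bar{n})$ times a ratio of the two distance integrals. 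I would therefore argue that in the regime of practical interest---a sufficiently large average number $c_m p\bar{n}$ of active caching providers per cluster, as anticipated in dense IoT/MTC deployments---this bound is small, establishing that $S_{\Phi_{cpm}^{!}}$ concentrates about its conditional mean and justifying \eqref{s_phi_approx}.

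Finally I would average over the nearest distance using $f_{H_1}(h_1)$ from Lemma~\ref{ch4:pdf-nearest-sitance} and propagate the error. Since the Laplace transform in \eqref{eq:rp-exact} enters $\Upsilon_m$ only through $t=\vartheta/(\gamma_d S_{\Phi_{cpm}})$ and is a bounded, monotone, smooth function of $S_{\Phi_{cpm}}$, a small relative perturbation of $S_{\Phi_{cpm}}$ produces a correspondingly small perturbation of $\Upsilon_m$, which then carries through the $q_m$- and $c_m$-weighted sum in \eqref{eq:offload}. I expect the main obstacle to be the concentration step itself: the variance integrand $h^{-2\alpha}$ is heavily weighted near the lower limit $h_1$, so the argument is tight only because the dominant nearest-provider term has been peeled off and kept exactly in \eqref{s_phi_approx}; controlling the residual variance integral, and making precise the ``practical interest'' regime (large $c_m p\bar{n}$) in which the coefficient of variation is genuinely small rather than uniform over all $\sigma$, $\lambda_p$, and $\alpha$, is the delicate part of the proof.
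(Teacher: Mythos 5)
Your proposal follows the same skeleton as the paper's proof: condition on the nearest serving distance $H_1=h_1$, observe that the remaining providers form a (Gaussian) PPP beyond $h_1$, compute the conditional mean and variance of $S_{\Phi_{cpm}^{!}}$ via Campbell's theorem (these are exactly Lemmas~\ref{ch4:average-pwr} and~\ref{ch4:variance-pwr}, reducing to incomplete gamma functions after the substitution $\tau=h^2/(4\sigma^2)$), and then invoke Chebyshev's inequality. Where you diverge is in the final quantitative step. The paper applies Chebyshev with a fixed \emph{absolute} deviation $a=1$ and argues numerically (Fig.~\ref{chebyshev}) that ${\rm Var}[S_{\Phi_{cpm}^{!}}]/a^2$ is negligible precisely on the event $\{h_1>\sigma\}$, which the nearest-distance CDF $F_{H_1}$ shows occurs with high probability; the "practical interest" regime it identifies is small $\sigma$ and moderate $\bar{n}$ (e.g., $c_mp\bar{n}\approx 3$ in the figure), not a large number of providers. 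You instead normalize by the conditional mean and argue that the squared coefficient of variation scales as $1/(c_mp\bar{n})$ times a ratio of distance integrals, so concentration follows from $c_mp\bar{n}$ being large. Both are legitimate sufficient conditions, and your version is more analytical where the paper's is partly empirical; but note that your ratio of integrals behaves like $h_1^{-2}$ as $h_1\to 0$, so your relative bound is not uniform in $h_1$ either and ultimately needs the same low-probability-of-small-$h_1$ argument (or the observation that the exactly retained term $h_1^{-\alpha}$ dominates there) that the paper uses — you correctly flag this as the delicate point. In short, the approach is essentially the paper's, with a cleaner but slightly different normalization in the Chebyshev step and a different (and complementary) characterization of the regime in which the bound is tight.
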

\begin{IEEEproof}
The proof of the proposition relies on calculating the concentration bounds for $S_{\Phi_{cpm}^{!}}$. In other words,  we will show that the \ac{RV} $S_{\Phi_{cpm}^{!}}$ concentrates around its mean.  
%
For that purpose, we use Chebyshev's inequality that can be formulated as
 \begin{equation}
\Pb\left(\Big|S_{\Phi_{cpm}^{!}}-\Eb\left[S_{\Phi_{cpm}^{!}}\right]\Big|>a\right) \leq \frac{{\rm Var}\left[S_{\Phi_{cpm}^{!}}\right]}{a^2},	
 \end{equation}
for $a>0$, where ${\rm Var}\left[S_{\Phi_{cpm}^{!}}\right]$ is the variance of $S_{\Phi_{cpm}^{!}}$. We start by calculating the conditional variance ${\rm Var}\left[S_{\Phi_{cpm}^{!}}|h_1\right]$ and mean $\Eb\left[S_{\Phi_{cpm}^{!}}|h_1\right]$ in the next two Lemmas.

\begin{lemma}
\label{ch4:variance-pwr}				
The variance of the signal power received from all active providers except for the nearest device, subject to path loss only, and conditioned on the distance to the nearest active provider $H_1=h_1$, is expressed as
\begin{align}
\label{eq:XXXXX}
{\rm Var}\left[S_{\Phi_{cpm}^{!}}| H_1=h_1\right]  \overset{}{=} c_mp\bar{n} \Gamma\left(-2\alpha+1,\frac{h^2_1}{4\sigma^2}\right),
\end{align}
\end{lemma}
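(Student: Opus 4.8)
The plan is to recognize that, conditioned on the nearest serving distance $H_1=h_1$, the quantity $S_{\Phi_{cpm}^{!}}=\sum_{\boldsymbol{y}_{0i}\in\Phi_{cpm}^{!}}\lVert\boldsymbol{x}_0+\boldsymbol{y}_{0i}\rVert^{-\alpha}$ is an additive functional of a Poisson point process, so that its variance follows directly from the second-order Campbell theorem. For a PPP with intensity measure $\Lambda$ and a measurable $g$, the functional $\sum_{\boldsymbol{y}}g(\boldsymbol{y})$ has variance $\int g(\boldsymbol{y})^2\,\Lambda(\dd{\boldsymbol{y}})$; equivalently, since $\lvert\Phi_{cpm}\rvert$ is Poisson with mean $c_mp\bar{n}$ and the serving distances are (under the neglected-correlation assumption already adopted around Eq.~(\ref{joint-pdf})) i.i.d.\ $\mathrm{Rayleigh}(\sqrt{2}\sigma)$, the compound-Poisson identity gives ${\rm Var}\big[\sum_{i}g(H_i)\big]=\Eb[\lvert\Phi_{cpm}\rvert]\,\Eb[g(H)^2]$. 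With $g(h)=h^{-\alpha}$ this isolates the second moment $\Eb[h^{-2\alpha}]$ as the only object that must be computed.

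First I would handle the conditioning on $H_1=h_1$. Removing the nearest provider and conditioning on its distance being $h_1$ leaves, by the standard exterior (reduced-Palm) property of Poisson processes, an inhomogeneous PPP supported on $\{h>h_1\}$ whose intensity is the original one restricted to that region. In the distance domain this intensity is $c_mp\bar{n}\,f_{H}(h)$ with $f_H(h)=\frac{h}{2\sigma^2}e^{-h^2/4\sigma^2}$, so the conditional variance collapses to the one-dimensional integral ${\rm Var}[S_{\Phi_{cpm}^{!}}\,|\,H_1=h_1]=c_mp\bar{n}\int_{h_1}^{\infty}h^{-2\alpha}\,f_H(h)\,\dd{h}$. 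This is where the only genuine subtlety lies: one must argue that conditioning on the minimum of the serving distances does not alter the Poisson law of the residual points beyond truncating their support to $(h_1,\infty)$, which is precisely the property that makes the nearest-distance PDF of Lemma~\ref{ch4:pdf-nearest-sitance} compatible with treating the remaining points as Poisson on the exterior.

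The remaining step is a routine evaluation. Substituting $u=h^2/(4\sigma^2)$ turns $\int_{h_1}^{\infty}h^{1-2\alpha}e^{-h^2/4\sigma^2}\,\dd{h}$ into a scaled upper incomplete gamma integral $\int_{h_1^2/4\sigma^2}^{\infty}u^{s-1}e^{-u}\,\dd{u}=\Gamma\!\big(s,\tfrac{h_1^2}{4\sigma^2}\big)$, which reproduces the claimed dependence on $\Gamma\!\big(\cdot,\tfrac{h_1^2}{4\sigma^2}\big)$ together with the multiplicative factor $c_mp\bar{n}$; the constant prefactors produced by the change of variables (the $2\sigma$ powers) are absorbed into the normalization of the incomplete-gamma term. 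I expect no difficulty here beyond bookkeeping.

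Overall, the main obstacle is not the algebra but the probabilistic justification in the second step, namely establishing that the providers other than the nearest still form a Poisson process on the truncated domain so that Campbell's variance formula applies verbatim. Once that is in place the variance is immediate, and the companion conditional mean $\Eb[S_{\Phi_{cpm}^{!}}\,|\,h_1]$ needed in the proof of Proposition~\ref{ch4:concentration} follows by simply replacing $g^2$ with $g$ in the same integral.
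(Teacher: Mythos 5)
Your proposal follows essentially the same route as the paper's proof: Campbell's variance formula for the Poisson process (the paper cites Corollary 4.8 of Haenggi), restriction of the integration domain to distances beyond $h_1$, reduction to the marginal Rayleigh$(\sqrt{2}\sigma)$ distance density (which the paper reaches by integrating the Rician conditional against the Rayleigh law of $v_0$), and the substitution $\tau = h^2/(4\sigma^2)$ yielding the upper incomplete gamma function. The only cosmetic difference is that you invoke the marginal distance law directly and make explicit the exterior-Poisson justification for the conditioning on $H_1=h_1$, which the paper leaves implicit; your closing remark that the $\sigma$-dependent prefactors are ``absorbed'' is as loose as the paper's own bookkeeping in its steps (e)--(f), but the approach is the same.
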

\begin{IEEEproof}
The proof can be found in \App{proof-concentration}.
\end{IEEEproof}

\begin{lemma}
\label{ch4:average-pwr}				
The average over the signal power received from all active content providers except for the nearest one, subject to path loss only, and conditioned on the distance to the nearest active provider $H_1=h_1$, is expressed as
\begin{align}
\label{eq:XXX}
\Eb\left[ S_{\Phi_{cpm}^{!}}\Big| H_1=h_1 \right] = \frac{c_mp\bar{n}}{2\sigma^2}\left[\frac{\exp\left(-\frac{h_1^2}{4 \sigma ^2}\right)}{2 h_1^2}-\frac{\Gamma (0,\frac{h_1^2}{4 \sigma ^2})}{8 \sigma ^2}\right].
\end{align}
\end{lemma}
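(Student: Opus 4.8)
The plan is to exploit the Poisson structure of the caching providers so that the conditional mean collapses to a single Campbell-type integral, which can then be evaluated in closed form.

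First I would project the two-dimensional Gaussian PPP $\Phi_{cpm}$ onto the one-dimensional domain of serving distances $h = \lVert \boldsymbol{x}_0 + \boldsymbol{y}_{0i}\rVert$. Since the number of active providers caching content $m$ is Poisson with mean $c_m p \bar{n}$, and since under the decorrelation assumption already invoked for $f_{\boldsymbol{H}_k}(\boldsymbol{h}_k)$ each serving distance is \ac{i.i.d.} Rayleigh with scale $\sqrt{2}\sigma$ (PDF $\frac{h}{2\sigma^2} e^{-h^2/(4\sigma^2)}$), a Poisson number of \ac{i.i.d.} points is again a Poisson process. Hence the serving distances form a PPP on $\mathbb{R}^+$ with intensity $\lambda_H(h) = c_m p \bar{n}\,\frac{h}{2\sigma^2}\, e^{-h^2/(4\sigma^2)}$.

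Next I would address the conditioning on the nearest serving distance $H_1 = h_1$. This event places one provider at $h_1$ and declares the disc of radius $h_1$ empty, so by the independence of a PPP over disjoint regions the farther providers, i.e.\ $\Phi_{cpm}^!$, are unaffected by the conditioning and still constitute a PPP on $(h_1, \infty)$ with the same intensity $\lambda_H$. Applying Campbell's first-moment theorem to the additive functional $S_{\Phi_{cpm}^!} = \sum h^{-\alpha}$ then gives
\[
\Eb\left[S_{\Phi_{cpm}^!} \mid H_1 = h_1\right] = \int_{h_1}^{\infty} h^{-\alpha}\,\lambda_H(h)\,\dd{h} = \frac{c_m p \bar{n}}{2\sigma^2}\int_{h_1}^{\infty} h^{1-\alpha}\, e^{-h^2/(4\sigma^2)}\,\dd{h},
\]
which reduces the entire problem to a single integral whose lower limit is $h_1$ precisely because of the conditioning.

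Finally I would evaluate this integral with the substitution $u = h^2/(4\sigma^2)$, under which $\int_{h_1}^{\infty} h^{1-\alpha} e^{-h^2/(4\sigma^2)}\,\dd{h}$ becomes a constant multiple of the upper incomplete gamma function $\Gamma(1 - \alpha/2, u_1)$ with $u_1 = h_1^2/(4\sigma^2)$; specializing to the standard path-loss exponent $\alpha = 4$ gives $\frac{1}{8\sigma^2}\,\Gamma(-1, u_1)$, and the recurrence $\Gamma(-1, x) = x^{-1} e^{-x} - \Gamma(0, x)$ rewrites this in terms of $e^{-u_1}/u_1$ and $\Gamma(0, u_1)$, which after reinstating $u_1 = h_1^2/(4\sigma^2)$ yields exactly the stated closed form. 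The substitution and the gamma recurrence are routine; the step I expect to be the main obstacle, or at least to require the most care, is the conditioning argument --- justifying that conditioning on the nearest distance leaves the farther providers as a PPP with unchanged intensity supported on $(h_1, \infty)$, where the Poisson independence property together with the earlier decorrelation approximation does the essential work.
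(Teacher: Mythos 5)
Your proposal is correct and follows essentially the same route as the paper: both reduce the conditional mean to Campbell's first-moment formula for the (Gaussian) PPP of providers, integrate the path-loss kernel against the marginal Rayleigh$(\sqrt{2}\sigma)$ serving-distance density restricted to $(h_1,\infty)$, and evaluate the resulting integral as an upper incomplete gamma function (the paper routes through the Rician conditional density and unconditions over $v_0$, which is equivalent to your direct use of the 1D intensity). Your explicit handling of the conditioning on $H_1=h_1$ and the observation that the stated closed form corresponds to $\alpha=4$ are both accurate refinements of steps the paper leaves implicit.
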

\begin{IEEEproof}
We can write the conditional mean as
 \begin{align}
 \Eb\left[S_{\Phi_{cpm}^{!}}| H_1=h_1\right] &=  \Eb \Big[ \sum_{\boldsymbol{y}_{0i}\in \Phi_{cpm}^{!}} \lVert \boldsymbol{x}_0 + \boldsymbol{y}_{0i}\rVert^{-\alpha} \Big]
 \nonumber \\
  &\overset{(a)}{=} c_mp\bar{n} \int_{\mathbb{R}^2}^\infty \frac{1}{\lVert \boldsymbol{x}_0 + \boldsymbol{y}_{0i}\rVert^{\alpha}} f_{\boldsymbol{Y}_{0i}}(\boldsymbol{y}_{0i})\dd{\boldsymbol{y}_{0i}}. 
 \end{align}
where (a) follows from the mean and variance for \acp{PPP} \cite[Corollary 4.8] {haenggi2012stochastic}, along with the Gaussian \ac{PPP} assumption $\Phi_{cpm}^{!}$.  
Following the same methodology as in Appendix \ref{app:proof-concentration}, the conditional mean can be directly obtained. Hence, Lemma \ref{ch4:average-pwr} is proven.
\end{IEEEproof}

As an illustrative example, it is reasonable to assume a limited cache-size per device, which triggers $c_m<1$, mean number of devices per cluster $\bar{n}$ from 5 to 10 devices, and small displacement  standard deviation $\sigma$ from $\SI{1}{ m}$ to $\SI{10}{ m}$.
In such setup, we can observe the tightness of our approximation in Fig.~\ref{chebyshev}. Particularly, we plot the term ${\rm Var}\left[S_{\Phi_{cpm}^{!}}\right]/a^2$, measuring how much $S_{\Phi_{cpm}^{!}}$ deviates from its mean, along with the \ac{CDF} of nearest serving distance $F_{H_1}(h_1)$ versus the nearest distance $h_1$.
From the figure, we first note that ${\rm Var}\left[S_{\Phi_{cpm}^{!}}\right]/a^2$ is almost zero when the nearest active provider is farther than $\sigma=\SI{1}{m}$, which happens with high probability (from the \ac{CDF} $F_{H_1}(h_1))$. Moreover, ${\rm Var}\left[S_{\Phi_{cpm}^{!}}\right]/a^2$ is larger than zero when the distance to the nearest active provider is shorter than $\sigma$, which happens with small probability (from the \ac{CDF} $F_{H_1}(h_1))$. This shows that the "mean plus nearest" approximation can yield a remarkably tight bound on $S_{\Phi_{cpm}}$, and correspondingly on $\mathbb{P}_{o}(\boldsymbol{c})$ for scenarios of practical interest. Hence, the proof of Proposition \ref{ch4:concentration} is completed.
\end{IEEEproof}

\begin{figure}[!tbp]	
\vspace{-0.3 cm}
\centering
\includegraphics[width=0.42\textwidth]{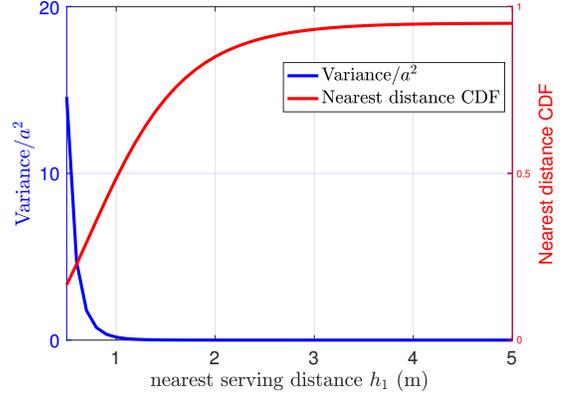} 
\caption {Nearest serving distance CDF $F_{H_1}(h_1)$ (right side y-axis) and ${\rm Var}\left[S_{\Phi_{cpm}^{!}}\right]/a^2$ (left side y-axis) are plotted versus the nearest serving distance $h_1$ ($\sigma=\SI{1}{m}$, $a=1$, $c_m=0.6$, $p=0.5$, $\bar{n}=10)$.}		
\label{chebyshev}
\end{figure}

Piecing everything together, we get a tight approximation on $\mathbb{P}_{o}(\boldsymbol{c})$ as follows. We start with (\ref{eq:rp-exact}) with the substitution $s_{\Phi_{cpm}} \approx h_1^{-\alpha} + \Eb\left[ S_{\Phi_{cpm}^{!}} | H_1=h_1 \right]$, where $\Eb\left[ S_{\Phi_{cpm}^{!}} | H_1=h_1 \right]$ is derived \black{in Eq. (\ref{eq:XXX})}. Then, we proceed by calculating Laplace transform of inter-cluster interference before averaging over the nearest distance $h_1$ using the nearest distance \ac{PDF} $f_{H_1}(h_1)$ \black{in Eq. (\ref{Leibniz})}. The approximated offloading gain is formally characterized in the next corollary. \black{Recall that this approximation is based on replacing the intended signal power subject to path loss only  by two components, namely, the received signal power
from the nearest active provider and the statistical mean over the received power from other content providers.}
\begin{corollary}  
\label{ch4:coro-comp-interference-approx}
A tight approximation of the offloading gain can be calculated from
\begin{align}
\mathbb{P}_{o}^{\approx}(\boldsymbol{c}) &=  \sum_{m=1}^{N_f} q_m\Big(c_m + (1-c_m) \times 
\nonumber \\
\label{ch4:offloading-gain-approx2}
&\int_{h_1=0}^{\infty}   e^{-2\pi \lambda_p\int_{v=0}^{\infty}\big(1 - e^{-p\bar{n}\zeta(v,t)}\big)v\dd{v}}
f_{H_1}(h_1) {\dd h_1}\Big) 
\\
\label{ch4:offloading-gain-approx4}
  &\overset{(a)}{\approx}  \sum_{m=1}^{N_f} q_m\Big(c_m + (1-c_m) p\bar{n}c_m  \int_{h_1=0}^{\infty} \frac{h_1}{2 \sigma^2} 
  \times 
  \nonumber \\
  &
     e^{-2\pi \lambda_p\int_{v=0}^{\infty}\big(1 - e^{-p\bar{n}\zeta(v,t)}\big)v\dd{v}} 
     \times
     \nonumber \\
  &
  \quad\quad \quad\quad \quad\quad e^{-c_mp\bar{n}\big(1-e^{\frac{-h_1^2}{4 \sigma^2}}\big) -\frac{h_1^2}{4 \sigma^2}} {\dd h_1} \Big),
\end{align}
where
\begin{align}
t=\frac{\vartheta/\gamma_d}{h_1^{-\alpha} + \frac{c_mp\bar{n}}{2\sigma^2}\Big[\frac{e^{-\frac{h_1^2}{4 \sigma ^2}}}{2 h_1^2}-\frac{\Gamma (0,\frac{h_1^2}{4 \sigma ^2})}{8 \sigma ^2}\Big]}. 
\end{align}
\end{corollary}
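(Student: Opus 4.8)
The plan is to treat this corollary as the payoff of the mean-plus-nearest machinery already assembled, rather than as a fresh derivation: essentially all the analytic work lives in the preceding lemmas and in Proposition~\ref{ch4:concentration}, and the corollary simply stitches them together into a single integral. First I would return to the exact offloading gain \Eq{offload}, whose only nontrivial ingredient is the rate coverage probability $\Upsilon_m$, and recall its conditional-Laplace-transform form \Eq{rp-exact}, namely $\Upsilon_m = \Eb[\Lc_{I_{\rm out}}(t)\,|\,S_{\Phi_{cpm}}=s_{\Phi_{cpm}}]$ with $t = \vartheta/(\gamma_d s_{\Phi_{cpm}})$. The intractability of the exact expression (\ref{ch4:offloading-gain-eqn}) comes entirely from the expectation over the joint law of the correlated serving distances, so the whole objective is to collapse that expectation to one dimension.

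The key reduction is to substitute the mean-plus-nearest approximation (\ref{s_phi_approx}) for the desired signal power, giving $s_{\Phi_{cpm}} \approx h_1^{-\alpha} + \Eb[S_{\Phi_{cpm}^{!}}\,|\,H_1=h_1]$, where the conditional mean is available in closed form from Lemma~\ref{ch4:average-pwr}, Eq.~(\ref{eq:XXX}). After this substitution $s_{\Phi_{cpm}}$ is a deterministic function of the single random variable $h_1$; the Laplace-transform argument $t$ therefore becomes exactly the $h_1$-dependent quantity displayed in the statement, and the conditioning in \Eq{rp-exact} reduces from the whole intra-cluster realization to $H_1=h_1$ alone. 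The expectation over the cluster geometry then becomes a single integral of $\Lc_{I_{\rm out}}(t(h_1))$ against the nearest-distance density $f_{H_1}(h_1)$ from Lemma~\ref{ch4:pdf-nearest-sitance}, Eq.~(\ref{Leibniz}). Inserting the interference Laplace transform of Lemma~\ref{ch4:comp-interference}, Eq.~\Eq{lap-transform}, yields precisely (\ref{ch4:offloading-gain-approx2}); here I would emphasise that the remote-cluster interference is independent of the local desired-signal geometry, so the Laplace transform keeps its Lemma~\ref{ch4:comp-interference} form with $t$ merely re-evaluated at its new $h_1$-dependent value. The explicit Poisson sum over the number of providers that appears in (\ref{ch4:offloading-gain-eqn}) disappears because that randomness is now folded into both $f_{H_1}$ and the conditional-mean term, each of which already incorporates the Poisson statistics of $\Phi_{cpm}$.

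For the final step $(a)$ leading to (\ref{ch4:offloading-gain-approx4}), I would replace the exact nearest-distance density by its Jensen's-inequality approximation (\ref{nearest-1}): the prefactor $c_m p\bar{n}$ and the exponential $\exp(-c_m p\bar{n}(1-e^{-h_1^2/4\sigma^2}) - h_1^2/4\sigma^2)$ appearing there are exactly the factors that surface in the target integrand, so this is a direct substitution with no further manipulation required.

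The main obstacle is not computational but the justification that the single-integral reduction is legitimate, and this rests squarely on Proposition~\ref{ch4:concentration}. The Chebyshev argument proved there, using the conditional variance of Lemma~\ref{ch4:variance-pwr} and the conditional mean of Lemma~\ref{ch4:average-pwr}, guarantees that $S_{\Phi_{cpm}^{!}}$ concentrates tightly around $\Eb[S_{\Phi_{cpm}^{!}}\,|\,h_1]$ in the practical regime of small $\sigma$ and moderate $\bar{n}$; it is exactly this concentration that lets one replace the random residual power by its conditional mean and thereby render $s_{\Phi_{cpm}}$ a function of $h_1$ alone. Once that is granted, the remaining care is only to confirm that the two approximations compose cleanly inside one integrand: the interference factor depends on the intra-cluster geometry solely through the scalar $t(h_1)$, so the signal approximation enters only through $t$ while the density approximation enters only through $f_{H_1}(h_1)$, and the two never interact. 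This yields (\ref{ch4:offloading-gain-approx4}) and completes the proof.
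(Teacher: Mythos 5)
Your proposal is correct and follows essentially the same route as the paper: substitute the mean-plus-nearest approximation for $s_{\Phi_{cpm}}$ in the conditional Laplace-transform form of $\Upsilon_m$ (using the conditional mean from Lemma~\ref{ch4:average-pwr}), average $\Lc_{I_{\rm out}}(t(h_1))$ over the nearest-distance density of Lemma~\ref{ch4:pdf-nearest-sitance}, and obtain step (a) by swapping in the Jensen-based density approximation. Your added remarks on why the Poisson sum over $k$ disappears and on the role of Proposition~\ref{ch4:concentration} in justifying the reduction are consistent with, and slightly more explicit than, the paper's own brief proof.
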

\begin{proof}
The above result follows from the nearest plus mean approximation \black{in Eq. (\ref{s_phi_approx})}, along with the conditional mean expression obtained \black{in Eq. (\ref{eq:XXX})}; (a) follows from the approximated nearest serving distance \ac{PDF} obtained \black{in Eq. (\ref{nearest-1})}.
\end{proof}

Note that the exponential term inside the integral of (\ref{ch4:offloading-gain-approx2}) is a function of $h_1$ since $t=\frac{\vartheta}{\gamma_d s_{\Phi_{cpm}}}$. It is worth mentioning that with such an approximation, replacing $s_{\Phi_{cpm}}$ with its nearest plus conditional mean approximation converts the multi-fold integral over $\boldsymbol{h}_k$ \black{in Eq. (\ref{ch4:offloading-gain-eqn})} to a single integral over $h_1$. Furthermore, the effect of having a random number of active providers is implicitly involved in the nearest distance \ac{PDF} as well as in the conditional mean term. This explains why the condition of having $k$ active content providers in the representative cluster no longer exists in the approximated rate coverage probability expression.

 \begin{figure}[!tb]	
 \vspace{-0.3 cm}
\centering
\includegraphics[width=0.42\textwidth]{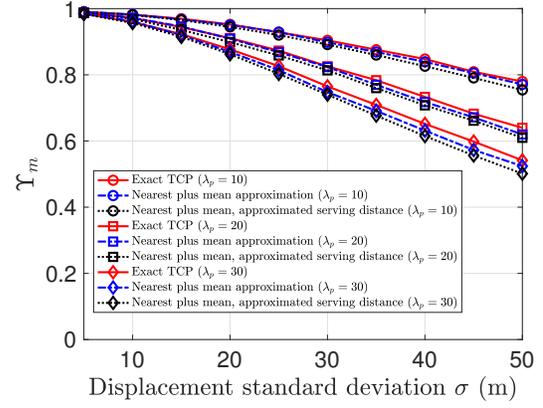}	   
\caption {The derived approximations of $\Upsilon_m$ \black{in Eq. (\ref{ch4:offloading-gain-approx2})} and (\ref{ch4:offloading-gain-approx4}) are plotted versus the displacement standard deviation $\sigma$ for various parent \ac{PPP} density $\lambda_p$ ($\bar{n}=20$, $p=0.5$, $c_m=0.5$). "Nearest plus mean approximation" in the legend refers to the performance based on the exact nearest serving distance \ac{PDF} \black{in Eq. (\ref{Leibniz})}.}
\label{fig:approximation1-2}
\vspace{-0.6 cm}
\end{figure}
In \Fig{approximation1-2}, we plot the obtained formulas for the rate coverage probability $\Upsilon_m$ \black{in Eq. (\ref{ch4:offloading-gain-approx2}) and Eq. (\ref{ch4:offloading-gain-approx4})} versus the displacement variance $\sigma$ for various density of clusters $\lambda_p$. The derived bound is remarkably tight for the practical values of $\lambda_p$ and $\sigma$. It starts to slightly diverge only for a system with considerably large $\lambda_p$ and $\sigma$. It is also clear that the approximated expression of the nearest distance \ac{PDF} \black{in Eq. (\ref{nearest-1})} bounds that \black{in Eq. (\ref{Leibniz})} very tightly. 

It is worth mentioning that the importance of the obtained approximation $\mathbb{P}_{o}^{\approx}(\boldsymbol{c})$ \black{in Eq. (\ref{ch4:offloading-gain-approx2})} is two-fold. Firstly, it provides an easy-to-compute approximation for the rate coverage probability, as shown in \Fig{approximation1-2}, and correspondingly, the offloading gain. Secondly, it allows us to solve for an optimized caching probability, by maximizing the approximated offloading gain, as will be clear shortly. In this regard, it is quite important to examine the achievable performance when a requested content is downloaded only from the nearest active provider. We refer to this as \emph{\ac{NCP} transmission scheme}, where its corresponding  rate coverage probability is characterized in the next corollary. 		
\begin{corollary}
\label{corollary-near}
The rate coverage probability for the \ac{NCP} transmission scheme is expressed as
\begin{align}
&\Upsilon_m=   
 \int_{0}^{\infty}   e^{-2\pi \lambda_p\int_{v=0}^{\infty}\big(1 - e^{-\bar{n}\zeta(v,t)}\big)v\dd{v}} f_{H_1}(h_1) {\dd h_1}     
\nonumber \\
  & \approx   \frac{c_mp\bar{n}}{2 \sigma^2}
\int_{0}^{\infty} h_1   e^{-2\pi \lambda_p\int_{v=0}^{\infty}\big(1 - e^{-\bar{n}\zeta(v,t)}\big)v\dd{v}} 
\times
\nonumber \\
  & \quad\quad\quad\quad\quad\quad\quad\quad\quad\quad\quad\quad\quad\quad\quad
e^{-c_mp\bar{n}\big(1-e^{(\frac{-h_1^2}{4 \sigma^2})}\big) -\frac{h_1^2}{4 \sigma^2}} {\dd h_1}, 
\nonumber 
\end{align}
where $t=\frac{\vartheta}{\gamma_d h_1^{-\alpha}}$, and $\zeta(v,t)$ is defined in Lemma \ref{ch4:comp-interference}.
\end{corollary}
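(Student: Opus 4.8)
The plan is to obtain the \ac{NCP} rate coverage probability as a direct specialization of the exact framework in Eq.~(\ref{eq:rp-exact}), in which the desired signal power now comprises the contribution of a single provider rather than the full CoMP sum. Recall that Eq.~(\ref{eq:rp-exact}) gives $\Upsilon_m = \Eb\big[\Lc_{I_{\rm out}}(t)\,\big|\,S_{\Phi_{cpm}}=s_{\Phi_{cpm}}\big]$ with $t=\vartheta/(\gamma_d s_{\Phi_{cpm}})$. For the NCP scheme only the nearest active provider, at distance $h_1=\lVert\boldsymbol{x}_0+\boldsymbol{y}_{01}\rVert$, transmits the requested content, so the received signal power subject to path loss collapses to the single term $s_{\Phi_{cpm}}=h_1^{-\alpha}$, which yields $t=\vartheta/(\gamma_d h_1^{-\alpha})$. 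This is precisely the ``nearest'' component of the mean-plus-nearest decomposition in Eq.~(\ref{s_phi_approx}) with the conditional-mean term dropped, so the present result is structurally even simpler than Corollary~\ref{ch4:coro-comp-interference-approx}.

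The second step is to substitute the inter-cluster interference Laplace transform from Lemma~\ref{ch4:comp-interference}, evaluated at this $t$, and then to average over the only remaining source of randomness, namely the nearest serving distance $H_1$. Averaging against the nearest-distance PDF $f_{H_1}(h_1)$ from Lemma~\ref{ch4:pdf-nearest-sitance} (Eq.~(\ref{Leibniz})) produces the exact single-integral expression in the first line of the corollary. The key simplification is that, because the signal now depends only on $H_1$, both the multi-fold integral over $\boldsymbol{h}_k$ and the Poisson sum over the number $k$ of active providers that appear in the CoMP expression Eq.~(\ref{ch4:offloading-gain-eqn}) disappear: the distribution of the number of providers is already absorbed into $f_{H_1}(h_1)$ through its Poisson void-probability construction, so no separate conditioning on $k$ is required.

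Finally, to reach the closed-form approximation in the second line, I would replace the exact PDF with its Jensen-based approximation from Eq.~(\ref{nearest-1}), which factors out the prefactor $\tfrac{c_m p\bar{n}}{2\sigma^2}h_1$ and the exponential $e^{-c_m p\bar{n}(1-e^{-h_1^2/4\sigma^2})-h_1^2/4\sigma^2}$; inserting this into the single integral gives the stated expression immediately. The derivation is essentially mechanical once the prior lemmas are available, and no concentration (Chebyshev) argument is needed here since the conditional-mean term is discarded outright rather than approximated. The only points requiring care are to ensure that $t$ is formed from the single-provider power $h_1^{-\alpha}$ rather than the mean-plus-nearest sum used in Corollary~\ref{ch4:coro-comp-interference-approx}, and to confirm that the interfering set, and hence the function $\zeta(v,t)$ governing the interference transform, is inherited unchanged from Lemma~\ref{ch4:comp-interference}, since only the serving mechanism and not the set of interferers differs between the CoMP and NCP schemes.
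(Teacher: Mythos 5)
Your proposal is correct and follows essentially the same route the paper takes (the corollary is stated without an explicit proof precisely because it is the mechanical specialization you describe): set $s_{\Phi_{cpm}}=h_1^{-\alpha}$ in Eq.~(\ref{eq:rp-exact}), insert the Laplace transform of Lemma~\ref{ch4:comp-interference}, average over the defective nearest-distance PDF of Lemma~\ref{ch4:pdf-nearest-sitance} (which indeed absorbs the void probability over the number of providers), and replace that PDF with its Jensen-based form~(\ref{nearest-1}) for the second line. Note only that your remark that the interference transform is inherited unchanged implies the exponent should read $e^{-p\bar{n}\zeta(v,t)}$ as in Lemma~\ref{ch4:comp-interference}, whereas the corollary as printed writes $e^{-\bar{n}\zeta(v,t)}$; your version is the consistent one.
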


Having obtained a lower bound and approximation of the rate coverage probability, next, we employ the obtained results to compute optimized caching probabilities $\boldsymbol{c}$ in order to maximize the offloading gain. 

\vspace{-0.3 cm}
\section{Optimized Caching Probabilities}
We first formulate and solve the offloading gain maximization problem based on the mean plus nearest approximation \black{in Eq. (\ref{ch4:offloading-gain-approx4})}. Then we pursue another approach, based on the derived lower bound \black{in Eq. (\ref{ch4:lower-bound-offload-gain})}, to obtain a low complexity solution.

\vspace{-0.4 cm}
\subsection{Optimized Caching Based on the approximation}
Here, we aim at maximizing the approximated offloading gain obtained in Corollary \ref{ch4:coro-comp-interference-approx}. We formulate the offloading gain maximization problem as 
\begin{align}
\label{optimize_eqn_p}
\textbf{P1:}		\quad &\underset{\boldsymbol{c}}{\text{max}} \quad \mathbb{P}_{o}^{\approx}(\boldsymbol{c}) 
\\
\label{const1}
&\textrm{s.t.}\quad  \sum_{n=1}^{N_f} c_m = M, \quad   c_m \in [ 0, 1]  		
\end{align}
Since the integral in the approximated offloading gain expression \black{in Eq. (\ref{ch4:offloading-gain-approx2})} depends on the caching probability $c_m$, and $c_m$ exists as a complex exponential term in the nearest serving distance \ac{PDF} $f_{H_1}(h_1)$, it is hard to analytically characterize (e.g., show concavity of) the objective function or find a tractable expression for the caching probability $c_m$. In order to tackle this, similar to \cite{cache_schedule}, we introduce a $\boldsymbol{c}$-independent integral by substituting the caching probability with an arbitrary caching probability $\boldsymbol{c}^0$. Denoting this $\boldsymbol{c}$-independent integral by $I^{\boldsymbol{c}^0}$, it can be easily verified that 
\begin{align}
\mathbb{P}_{o}^{\approx^{\boldsymbol{c}^0}}(\boldsymbol{c})=\sum_{m=1}^{N_f} q_m\Big(c_m + (1-c_m)p\bar{n}c_m I^{\boldsymbol{c}^0}\Big),
\end{align}
 is concave in $\boldsymbol{c}$. From the \ac{KKT} conditions, the optimized caching probability maximizing 
 $\mathbb{P}_{o}^{\approx^{\boldsymbol{c}^0}}(\boldsymbol{c})$ under the constraint (\ref{const1}) is given by 
 \begin{align}
 \label{convex-approx}
c_m^*  =     \Big[0.5  -  \frac{v^{*}-q_m}{2q_mp\bar{n} I^{\boldsymbol{c}^0}} \Big]^+,
\end{align}
where $v^*$ satisfies the maximum cache constraint $\sum_{i=1}^{N_f} \Big[0.5  -  \frac{v^{*}-q_m}{2q_mp\bar{n} I^{\boldsymbol{c}^0}} \Big]^+ = M$, and $[x]^+ = {\rm max}(x, 0)$. For the arbitrary caching probability $\boldsymbol{c}^0$, a locally optimal caching probability can be adopted, which can be computed via, e.g., interior point method \cite{boyd2004convex}. Nonetheless, we can increase the probability of finding the optimal solution of \textbf{P1} by using the interior point method with multiple random initial values and then picking the solution with the  highest offloading gain. 
We refer to this caching probability \black{in Eq. (\ref{convex-approx})} as the solution from convex approximation. However, to obtain a caching policy of lower complexity, we maximize a special case of the  lower bound  $\mathbb{P}_{o}^{\sim}(\boldsymbol{c})$ in the sequel.

\vspace{-0.0 cm}
\subsection{Optimized Caching Based on the Lower Bound}
Although $\mathbb{P}_{o}^{\sim}(\boldsymbol{c})$ characterized in Theorem \ref{ch4:comp-interference-approx} is simpler to compute compared to $\mathbb{P}_{o}(\boldsymbol{c})$, it is still challenging to obtain the optimal caching probability due to the summation and multi-fold integration \black{in Eq. (\ref{ch4:lower-bound-offload-gain})}. Similar to \cite{chae2017content}, we consider a special case when downloading content from one active content provider, for which the offloading gain maximization problem turns out to be convex. 

\black{\textbf{\emph{One Content Provider:}}}		
Next, we solve for the optimized caching probability when considering one serving content provider (instead of $k$ \black{in Eq. (\ref{ch4:lower-bound-offload-gain})}). Starting from (\ref{ch4:lower-bound-offload-gain}) with $t = \frac{\vartheta h^{\alpha}}{\gamma_d}$ for one serving provider, and the void probability of a Poisson \ac{RV}, we get
\begin{align}
\label{rate-cov-one}
&\Upsilon_m= \big(1 - \frac{(p\bar{n}c_m)^0}{0!}e^{-c_mp\bar{n}}\big) \times
\nonumber \\
&
\int_{h=0}^{\infty} e^{-\pi p\bar{n}\lambda_p (\vartheta h^{\alpha})^{2/\alpha} \Gamma(1 + 2/\alpha)\Gamma(1 - 2/\alpha)}
 \frac{h}{2\sigma^2}e^{-\frac{h^2}{4\sigma^2}} \dd{h}.
\end{align} 
Solving the integral \black{in Eq. (\ref{rate-cov-one})}, and substituting \black{in Eq. (\ref{ch4:lower-bound-offload-gain})}, we get
$\mathbb{P}_{o}^{\sim1}(\boldsymbol{c})$ written as 
\begin{align}
\label{ch4:lower-bound-closed-form}
\mathbb{P}_{o}^{\sim1}(\boldsymbol{c}) & \overset{}{=}\sum_{m=1}^{N_f} q_m\Big(c_m + \big(1-c_m\big) 
\big(1 - e^{-c_mp\bar{n}}\big)
   \frac{1}{\mathcal{Z}(\vartheta,\alpha,\sigma)}\Big),
\end{align}   
where $\mathcal{Z}(\vartheta,\alpha,\sigma) = 4\sigma^2\pi p\bar{n}\lambda_p \vartheta^{2/\alpha}\Gamma(1 + 2/\alpha)\Gamma(1 - 2/\alpha)+ 1$. 
Hence, the optimized caching probability can be computed by solving the following problem
\begin{align}
\label{optimize_eqn_p}
\textbf{P2:}		\quad &\underset{\boldsymbol{c}}{\text{max}} \quad \mathbb{P}_{o}^{\sim1}(\boldsymbol{c}) \\
\label{const110}
&\textrm{s.t.}\quad  \sum_{n=1}^{N_f} c_m = M, \quad   c_m \in [ 0, 1] 
\end{align}

The optimal solution for \textbf{P2} is formulated in the following Lemma.	
\begin{lemma}
\label{ch4:offload-concave}
The lower bound on the offloading gain $\mathbb{P}_{o}^{\sim1}(\boldsymbol{c})$ \black{in Eq. (\ref{ch4:lower-bound-closed-form})} is concave w.r.t.  the caching probability, and the optimal probabilistic caching $\underline{\boldsymbol{c}}^*$ for \textbf{P2} is given by
     \[
    \underline{c_{m}}^{*}=\left\{
                \begin{array}{ll}
                  1 \quad\quad\quad , v^{*} <   q_m -\frac{q_m(1-e^{-p\bar{n}})}{\mathcal{Z}}\\ 
                  0   \quad\quad\quad, v^{*} >   q_m + \frac{p\bar{n}q_m}{\mathcal{Z}}\\
                 \psi(v^{*}) \quad, {\rm otherwise},
                \end{array}   
              \right.
  \]
where $\psi(v^{*})$ is the solution of
\begin{align}
v^{*} =   q_m  +  \frac{q_m}{\mathcal{Z}}\big(p\bar{n}(1-\underline{c_{m}}^{*})e^{-p\bar{n}\underline{c_{m}}^{*}} - (1 - e^{-p\bar{n}\underline{c_{m}}^{*}})\big),
\nonumber 
\end{align}
 that satisfies $\sum_{m=1}^{N_f} \underline{c_{m}}^{*}=M$, and $\mathcal{Z} = \mathcal{Z}(\vartheta,\alpha,\sigma)$ for ease of presentation.

\end{lemma}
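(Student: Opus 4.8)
The plan is to recognize \textbf{P2} as a concave maximization over a convex feasible set, so that the \ac{KKT} conditions are both necessary and sufficient, and then to read off the optimal caching vector coordinate-by-coordinate. Observe first that the objective $\mathbb{P}_{o}^{\sim1}(\boldsymbol{c})$ in (\ref{ch4:lower-bound-closed-form}) is \emph{separable}: it is a sum of terms $g_m(c_m)=q_m\big(c_m+(1-c_m)(1-e^{-p\bar{n}c_m})/\mathcal{Z}\big)$, each depending on a single variable. Hence its Hessian is diagonal, and concavity of $\mathbb{P}_{o}^{\sim1}$ reduces to concavity of every $g_m$ on $[0,1]$.

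First I would establish this concavity. Writing $b:=p\bar{n}$ and $\phi(c):=(1-c)(1-e^{-bc})$, a direct differentiation gives $\phi''(c)=-b\,e^{-bc}\big(2+b(1-c)\big)$, which is strictly negative for all $c\in[0,1]$ since $b>0$, $e^{-bc}>0$ and $1-c\ge 0$. Because $g_m''(c_m)=(q_m/\mathcal{Z})\,\phi''(c_m)<0$ (using $q_m>0$ and $\mathcal{Z}>0$), each $g_m$ is strictly concave, and therefore so is $\mathbb{P}_{o}^{\sim1}$. Together with the fact that the constraint set $\{\boldsymbol{c}:\sum_{m=1}^{N_f} c_m=M,\;c_m\in[0,1]\}$ is the intersection of an affine subspace with a box—hence convex—this proves the first claim and guarantees that any \ac{KKT} point is the global maximizer.

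Next I would form the Lagrangian with multiplier $v^{*}$ for the equality constraint $\sum_m c_m=M$ and nonnegative multipliers for the box constraints $0\le c_m\le 1$, and write the stationarity condition $g_m'(c_m)=v^{*}$ valid in the interior of the box. Since $g_m'(c_m)=q_m+\tfrac{q_m}{\mathcal{Z}}\big(b(1-c_m)e^{-bc_m}-(1-e^{-bc_m})\big)$, this interior condition is exactly the defining equation for $\psi(v^{*})$ stated in the lemma. The boundary cases then follow from the monotonicity of $g_m'$, which is guaranteed by strict concavity: evaluating the gradient at the two endpoints yields $g_m'(1)=q_m-\tfrac{q_m(1-e^{-b})}{\mathcal{Z}}$ and $g_m'(0)=q_m+\tfrac{b q_m}{\mathcal{Z}}$. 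Complementary slackness forces $c_m^{*}=1$ whenever $v^{*}<g_m'(1)$ and $c_m^{*}=0$ whenever $v^{*}>g_m'(0)$, and $c_m^{*}=\psi(v^{*})\in(0,1)$ in between; substituting these endpoint values reproduces the three-way characterization in the statement. Finally, $v^{*}$ is pinned down by enforcing the cache-budget constraint $\sum_{m=1}^{N_f}c_m^{*}=M$, giving a water-filling-type determination.

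The routine but most delicate step is the concavity computation: one must verify the sign of $\phi''$ carefully, since it underpins both the sufficiency of the \ac{KKT} conditions and the monotonicity of $g_m'$ that makes the threshold comparisons in the case split well defined. Once strict concavity is in hand, the single-valuedness of $\psi(v^{*})$ and the uniqueness of the optimizer follow immediately, and the remaining argument is standard \ac{KKT}/water-filling bookkeeping.
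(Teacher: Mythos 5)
Your proof is correct and follows the same route the paper intends: concavity via the (diagonal, separable) Hessian, convexity of the linear constraint set, and the KKT stationarity/complementary-slackness analysis yielding the three-way threshold rule. The paper's own proof merely asserts the Hessian is negative semi-definite and omits the derivation of $\underline{\boldsymbol{c}}^*$ ``for brevity,'' so your computation of $\phi''(c)=-p\bar{n}\,e^{-p\bar{n}c}\big(2+p\bar{n}(1-c)\big)<0$ and the endpoint evaluations $g_m'(0)$, $g_m'(1)$ supply exactly the details the paper leaves out, and they check out.
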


\begin{proof}
It is easy to show the concavity of the objective function $\mathbb{P}_{o}^{\sim1}(\boldsymbol{c})$ by confirming that Hessian matrix w.r.t. the caching variables is negative semi-definite. Also, the constraints are linear, which imply that the necessity and sufficiency conditions for optimality exist. The detailed proof of finding $\underline{\boldsymbol{c}}^*$ is omitted for brevity.
\end{proof}

\black{Noticeably, the offloading gain for this special case, specifically when considering one content provider, resembles a strict lower bound on the exact offloading gain \cite{chae2017content}. However, given the complexity of the original optimization problem,  we sought a suboptimal caching solution by optimizing a convex lower bound. Similar approaches are used in the literature, especially when dealing with complex optimization problems like ours, see, e.g., \cite{chae2017content,cache_schedule}, and \cite{7417164}. When substituting this suboptimal solution in  \black{in Eq. (\ref{ch4:offloading-gain-eqn})}, it provides useful insights into the system design and also attains considerable performance improvements over traditional caching schemes, as quantified in Section \ref{Numerical-Results}.} 


%
\vspace{-0.3 cm}
\section{Numerical Results}
\label{Numerical-Results}

\begin{table}[!t]		
\vspace{-0.0 cm}
\caption{Simulation Parameters} 
\centering 
\scalebox{0.9}{
\begin{tabular}{|c| c| c| c| c | c|} 
\hline 
Description & Parameter & Value\\ [0.5ex] 
\hline 
Path loss exponent& $\alpha$& 4\\ \hline 
$\sir$ threshold&$\vartheta$&\SI{0}{\deci\bel}\\ \hline 
Density of clusters&$\lambda_p$&$\SI{30}{km^{-2}}$\\ \hline
Displacement standard deviation&$\sigma$&$\SI{30}{m}$\\ \hline
Average number of devices per cluster&$\bar{n}$&$6$\\ \hline
Library size& $N_f$& 12\\  \hline 
Device cache size& $M$& 2\\  \hline 
Access probability& $p$& 0.5\\  
\hline 
\end{tabular}}
\label{ch4:table:sim-parameter} 
\vspace{-0.4cm}
\end{table}

In this section, we evaluate the performance of our proposed cache-assisted \ac{CoMP} transmissions for clustered D2D networks.  Unless otherwise specified, results are obtained for the parameters shown in Table 1, which represent typical values used in many previous works.\footnote{\black{In Table 1, we assume relatively small device cache and content library sizes. These values, which are very close to those used in  \cite{chae2017content,8374852,cache_schedule,amer2019optimizing}, and \cite{7417164}, are reasonable in the study of communication and caching aspects of D2D content delivery networks. Other works in the literature, e.g., \cite{golrezaei2014base}, considered a much larger size of file library, however, their objective was to conduct the scaling analysis of caching networks.}} We refer to both solutions based on convex approximation \black{in Eq. (\ref{convex-approx})} and the suboptimal caching of Lemma 5 as optimized \ac{PC}. We first verify the accuracy of the derived bound and approximation of the offloading gain. Then, we compare the achievable performance of our proposed \ac{PC} and CoMP transmission, with conventional caching and transmission schemes.

\subsection{Exact Offloading Gain Versus Approximation and Lower Bound}

\begin{figure}[tb]	
\centering
\includegraphics[width=0.45\textwidth]{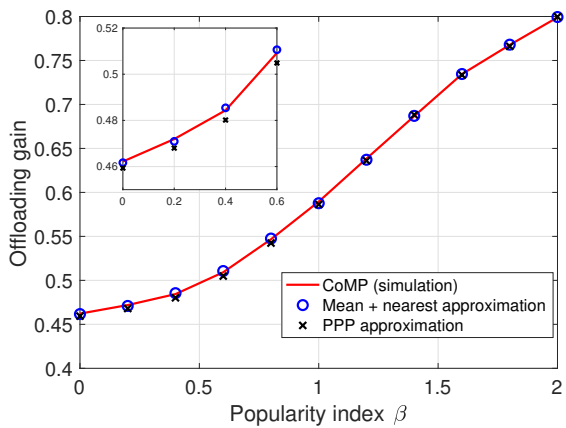}		
\caption {The exact offloading gain (simulation) based on \ac{CoMP} transmissions is compared to PPP-based lower bound ($\mathbb{P}_{o}^{\sim}(\boldsymbol{c}))$, and mean plus nearest-based approximation ($\mathbb{P}_{o}^{\approx}(\boldsymbol{c}))$, versus the popularity of files $\beta$ under the Zipf caching scheme.}
\label{offloading_gain_vs_beta_CoMP}				
\vspace{-0.0 cm}		
\end{figure}

Fig.~\ref{offloading_gain_vs_beta_CoMP} verifies the accuracy of the obtained lower bound and approximation of the offloading gain \black{in Eq. (\ref{ch4:lower-bound-offload-gain}) and Eq. (\ref{ch4:offloading-gain-approx2})}, respectively. It is clear that both the derived lower bound and approximation are tight to the exact offloading gain. Moreover, we observe that averaging over the request and caching probabilities, i.e., $q_m$ and $c_m$, makes the adopted lower bound and approximation of the offloading gain tighter than those for the rate coverage probability $\Upsilon_m$ shown earlier in Fig.~\ref{fig:lower-bound} and Fig.~\ref{fig:approximation1-2}, respectively. As shown in  Fig.~\ref{offloading_gain_vs_beta_CoMP}, the offloading gain monotonically increases with the popularity of files $\beta$. This is because when $\beta$ is large, only a small portion of content undergoes most of the demand, which can be cached among the cluster devices.

 \vspace{-0.3 cm}
\subsection{Comparison with Other Caching Schemes}

\begin{figure}[!t]	
\vspace{-0.3 cm}
\centering
\includegraphics[width=0.40\textwidth]{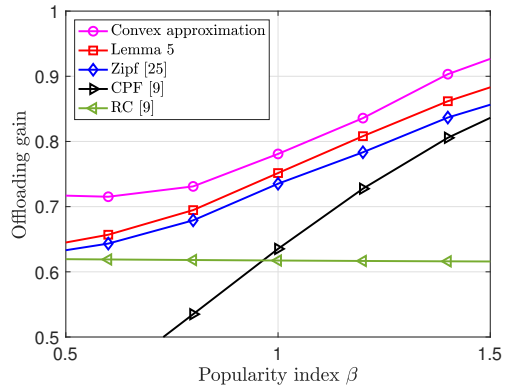}		
\caption {The offloading gain versus the popularity of files $\beta$ under different caching schemes 
(\black{$N_f=40$, $M=8$}).}
\label{offloading_gain_vs_beta}
\vspace{-0.0 cm}	
\end{figure}						

Fig.~\ref{offloading_gain_vs_beta} compares the offloading gain of the proposed \ac{PC} with other benchmark schemes, namely, Zipf caching (Zipf), \ac{CPF}, and \ac{RC} against the popularity of files $\beta$. \black{For \ac{CPF},  the $M$-most popular files are cached among each cluster device, see, e.g. \cite{8412262}. Similarly, for \ac{RC}, contents to be cached are randomly chosen according to a uniform distribution irrespective of the popularity as in \cite{8412262}, while for Zipf caching, contents are probabilistically cached according to their popularity as given in (4), see, e.g., \cite{breslau1999web}.} 
%
%
Moreover, in Fig.~\ref{offloading_gain_vs_beta}, "convex approximation" refers to the caching solution characterized \black{in Eq. (\ref{convex-approx})}, whereas "Lemma 5" refers to the caching solution characterized in Lemma 5. All the caching schemes are evaluated under \ac{CoMP} transmissions. We first see that the offloading gains under the optimized \ac{PC} schemes outperform conventional caching schemes.  Moreover, the \ac{PC} based on convex approximation \black{in Eq. (\ref{convex-approx})} is superior to the suboptimal solution of Lemma \ref{ch4:offload-concave}.   
%
%
As $\beta$ increases, except for \ac{RC}, the offloading gain increases and gradually, the optimized \ac{PC}, Zipf, and \ac{CPF} schemes tend to achieve the same performance. This shows that when a small portion of files becomes highly demanded, i.e., for higher $\beta$, the optimal caching probability is attained via caching popular files among all cluster devices.


\begin{figure} [!t]
    \centering
    {\hspace{-0.4 cm}
    \subfigure[Case one ($\sigma=\SI{10}{m}$, $\lambda_p=\SI{10}{km^{-2}}$)]
    {
        \includegraphics[width=2.1in]{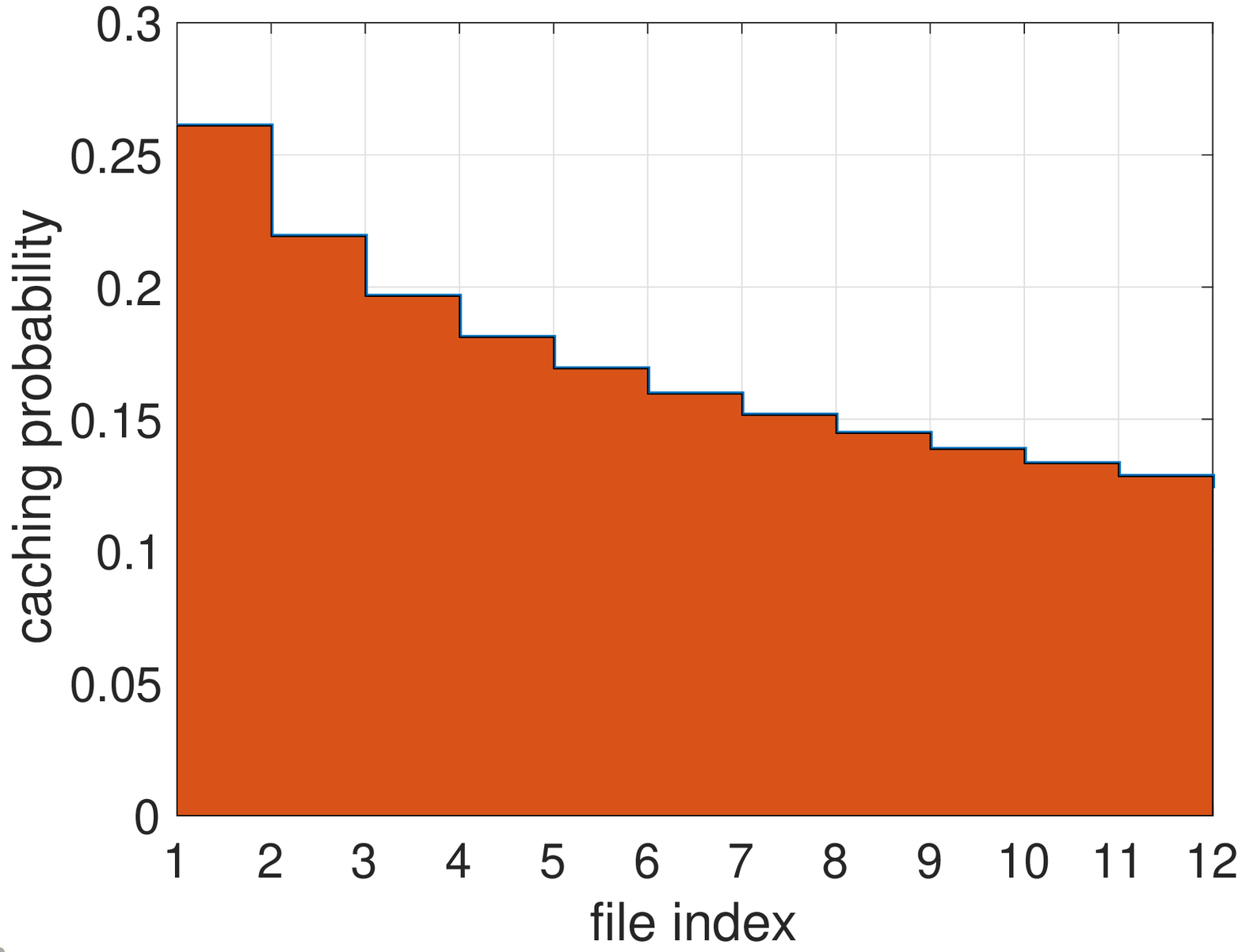}		
        \label{case_one}
    }}
    \subfigure[Case two ($\sigma=\SI{30}{m}$, $\lambda_p=\SI{30}{km^{-2}}$)]
    {
        \includegraphics[width=2.1in]{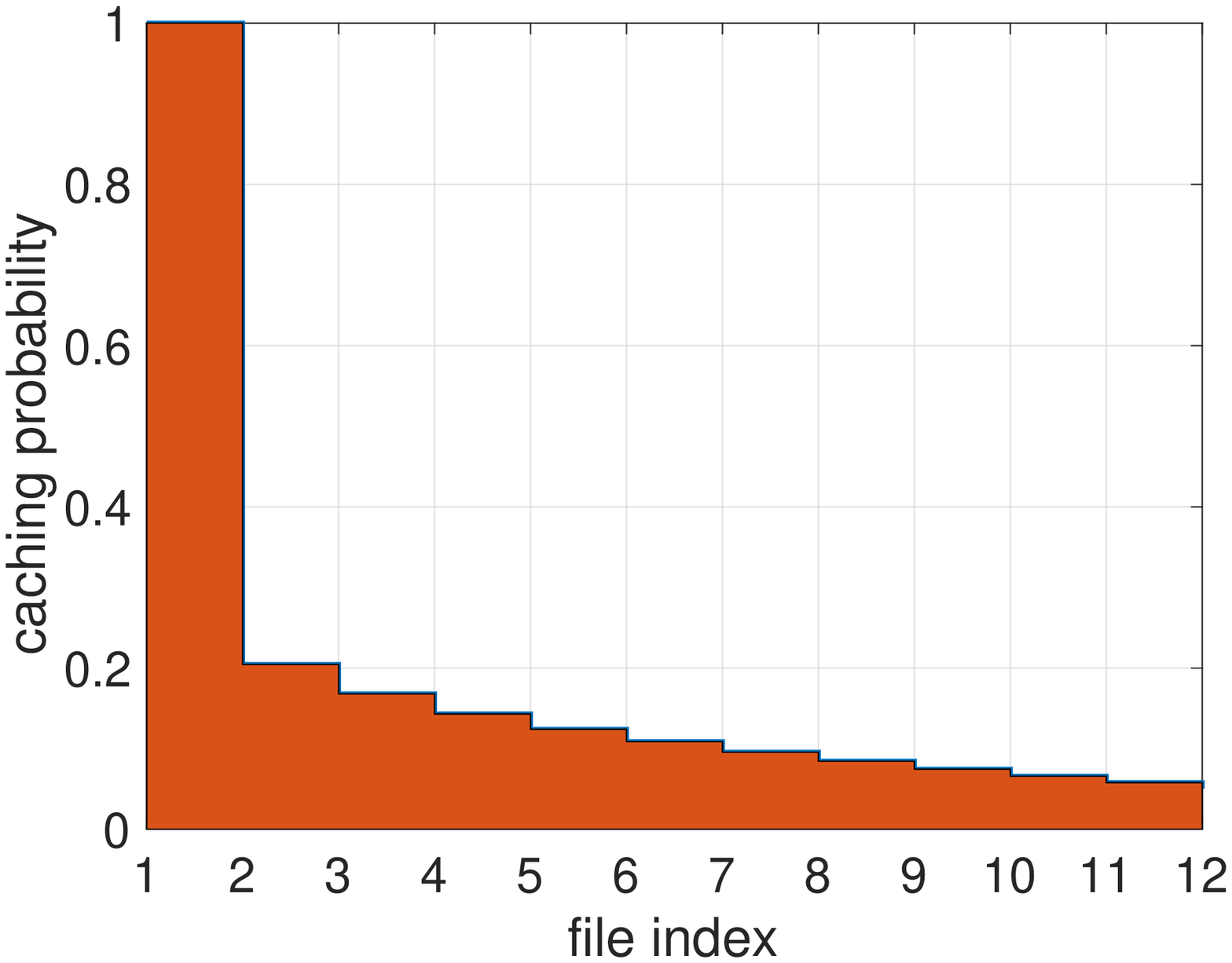}		
        \label{case_two}		
    }
    \subfigure[Case three ($\sigma=\SI{50}{m}$, $\lambda_p=\SI{40}{km^{-2}}$)]
    {
        \includegraphics[width=2.1in]{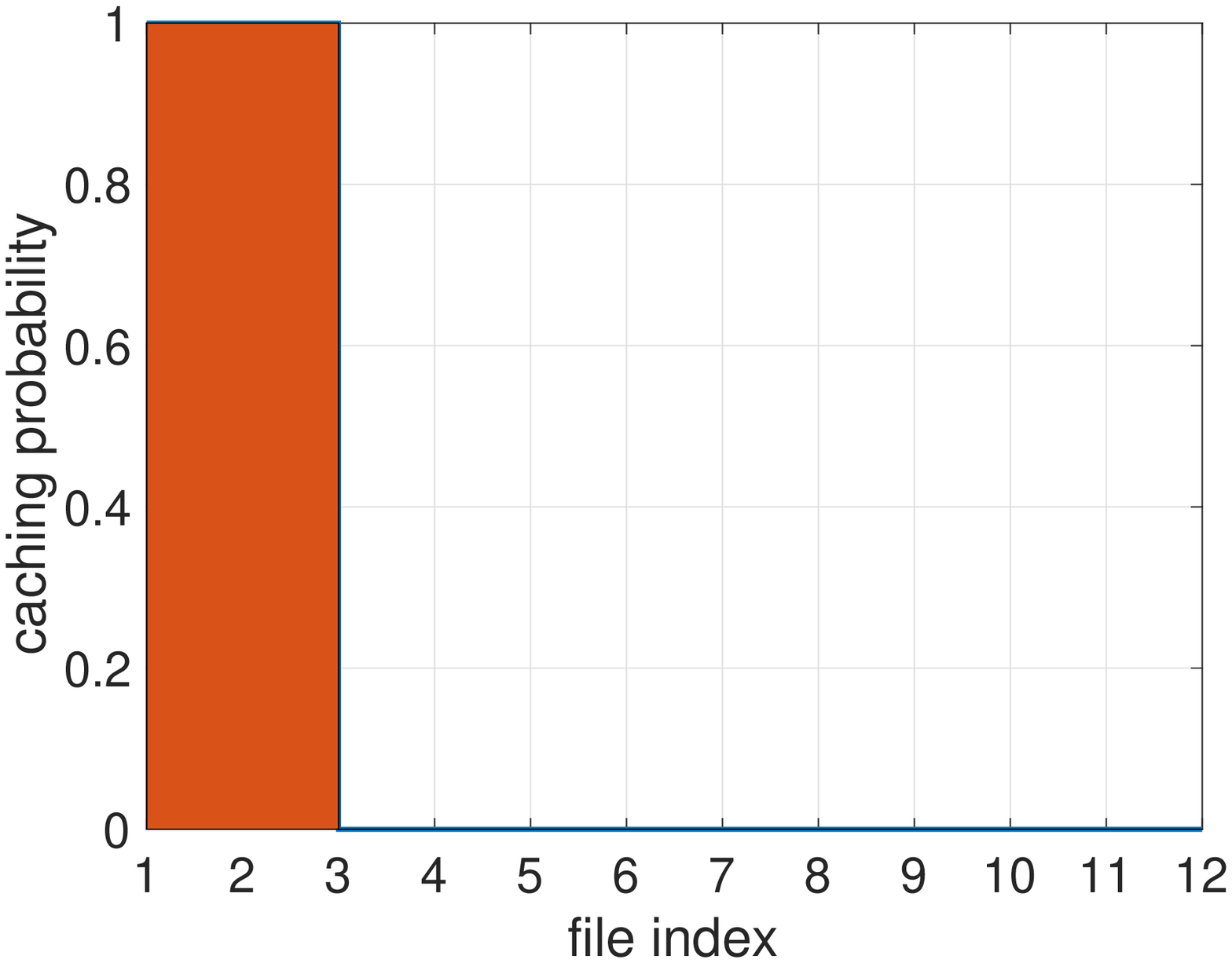}
        \label{case_three}
    }
    \caption{Histogram of the caching solution $\underline{\boldsymbol{c}}^*$ is plotted for different network geometries ($\beta=0.4$).}
    \label{histogram_b_i}
    \vspace{-0.1cm}
\end{figure} 		

To show the prominent effect of the network geometry on the optimized caching probability, we plot the histograms of the solution of Lemma \ref{ch4:offload-concave} for three different cases in Fig.~\ref{histogram_b_i}. These three cases are ranging from a sparse network (small $\sigma$ and $\lambda_p$), a relatively dense network (medium $\sigma$ and $\lambda_p$), and a highly dense network (large $\sigma$ and $\lambda_p$). Note that smaller $\sigma$ results in higher desired signal power, while smaller $\lambda_p$ yields lower inter-cluster interference power as clusters become sparser. 
The first case in Fig.~\ref{case_one} represents a sparse system with small values of $\lambda_p$  and $\sigma$, i.e., sufficiently good transmission conditions. 
In this case, we see that the optimized caching probability tends to be more uniform taking advantage of hitting a large number of files while being served in favorable transmission conditions. 
The second case in Fig.~\ref{case_two} represents a system with relatively good transmission conditions, i.e., medium values of $\sigma$ and $\lambda_p$. It is clear from the histogram that the optimized caching solution tends to be more skewed than in the first case. 
The third case in Fig.~\ref{case_three} is then for a highly dense network with large values of both $\sigma$ and $\lambda_p$. Clearly, the caching probability tends to be very skewed, which implies that caching popular files is an appropriate choice for such a highly dense network, i.e., a network with poor transmission conditions.  Summing up, the results in Fig.~\ref{histogram_b_i} reveal interesting dependence of the optimized caching probability in Lemma \ref{ch4:offload-concave} on the network geometry.

\vspace{-0.3 cm}
\subsection{Comparison with Other Transmission Schemes}

 \begin{figure}[!tb]
 \vspace{-0.3 cm}
	\begin{center}
		\includegraphics[width=0.45\textwidth]{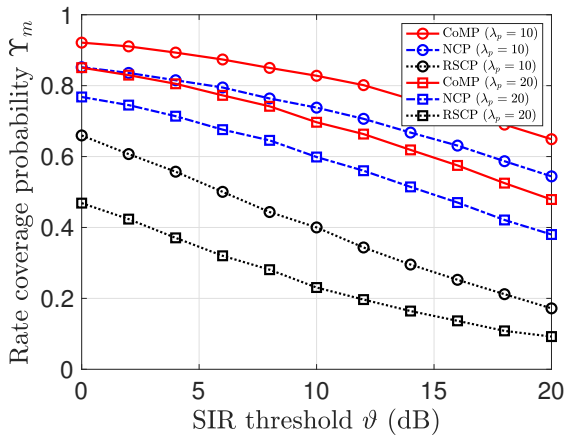}			
		\caption {The rate coverage probability versus the \ac{SIR} threshold $\vartheta$ 
		for different transmission schemes \black{($\bar{n}=20$, $c_m=0.5$)}.}			
		\label{cov_prob_compare_vs_theta}
	\end{center}
\vspace{-0.5cm}			
\end{figure}
To quantify how much CoMP transmission can improve the achievable performance, we here compare the rate coverage probability  $\Upsilon_m$ for three transmission schemes, namely, \ac{CoMP}, \ac{NCP}, and \ac{RSCP}. Recall that for the \ac{NCP} scheme, the requested content is served from the the nearest active provider to the content client within the same cluster while for the \ac{RSCP} scheme, an active provider is chosen at random to serve the desired content.  Firstly, Fig.~\ref{cov_prob_compare_vs_theta} plots the exact rate coverage probability versus  \ac{SIR} threshold $\vartheta$ for different density of clusters $\lambda_p$. Intuitively, \ac{CoMP} transmissions achieves higher rate coverage probability than those of the other schemes. In particular, at high \ac{SIR} threshold, allowing \ac{CoMP} transmissions can provide up to $300\%$ improvement 
in the rate coverage probability compared to the \ac{RSCP} scheme. Moreover, for all schemes, the rate coverage probability is seen to decrease as $\lambda_p$ increases since higher interference power is encountered at the typical client when D2D clusters are denser. 

 \begin{figure}[!t]
 \vspace{-0.3 cm}
	\begin{center}
		\includegraphics[width=0.45\textwidth]{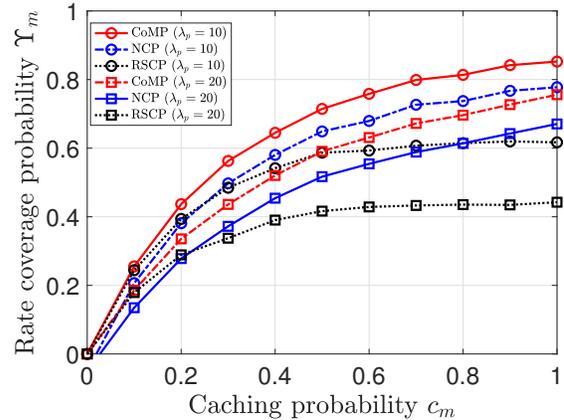}			
		\caption {The rate coverage probability versus the caching probability $c_m$ for different transmission schemes \black{($\bar{n}=10$, $\vartheta=\SI{5}{dB}$)}.}			
		\label{cm_threshold}
	\end{center}
\vspace{-0.3cm}		
\end{figure}	
%
In light of this comparison, Fig.~\ref{cm_threshold} plots  the rate coverage probability  against the caching probability $c_m$ for the three transmission schemes. As shown in Fig.~\ref{cm_threshold}, as the content availability increases, i.e., higher $c_m$, the rate coverage probability improves. Besides, Fig.~\ref{cm_threshold} illustrates that while the rate coverage probability for the \ac{RSCP} transmission scheme tends to flatten when $c_m$ further increases, it continues increasing for  \ac{CoMP} and \ac{NCP} transmission schemes. This is attributed to the fact that, for the \ac{NCP} scheme, the serving distance is more likely to decrease with the increase of $c_m$, and hence the corresponding performance improves. Similarly, for the \ac{CoMP} scheme, the transmission diversity improves with $c_m$ since the average number of active and caching providers increases.

 \begin{figure}[!t]
 \vspace{-0.3 cm}
	\begin{center}
		\includegraphics[width=0.45\textwidth]{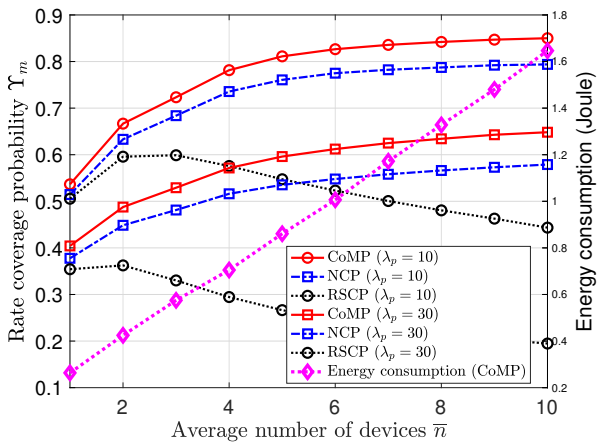}			
		\caption {The rate coverage probability (left hand side y-axis) and the amount of consumed energy per content request in one cluster (right hand side y-axis) versus the average number of devices per cluster $\bar{n}$ ($\vartheta=\SI{8}{dB}$, $c_m=1$, $P_d=\SI{20}{dBm}$, $\bar{S}=\SI{5}{MBytes}$, $W=\SI{5}{MHz}$).}	
		\label{sir_threshold}
	\end{center}
\vspace{-0.5cm}
\end{figure}	
%

Finally, Fig.~\ref{sir_threshold} shows the effect of the average number of devices per cluster $\bar{n}$ on the rate coverage probability $\Upsilon_m$ (left hand side y-axis). \black{Fig.~\ref{sir_threshold} also presents the amount of consumed energy per content request in one cluster (right hand side y-axis)}. Fig.~\ref{sir_threshold} first shows that as $\bar{n}$ increases, the rate coverage probability for CoMP and \ac{NCP} schemes increases. This is due to the fact that, for the \ac{NCP} scheme, the nearest serving distance is more likely to decrease for congested clusters. Moreover, for the CoMP scheme, the transmission diversity also improves with $\bar{n}$. However, for the \ac{RSCP} scheme, the  rate coverage probability first increases driven by the increasing probability of finding a requested content within the local cluster. Then, the rate coverage probability turns to decrease as $\bar{n}$ further increases. \black{This degradation is attributed to the fact that while the requested content can be found within the local cluster with high probability for higher $\bar{n}$, the effect of inter-cluster interference also grows with $\bar{n}$. The latter then becomes the dominant factor that drives the rate coverage probability $\Upsilon_m$ down. Please notice that as $\bar{n}$ increases, the number of collaborative devices in other remote clusters correspondingly increases, which yields higher interference power.} Furthermore, the rate coverage probability is shown to decrease for all schemes when $\lambda_p$ increases driven by the growing effects of interference. Besides, while the performance of \ac{RSCP} is very sensitive to the density of clusters, especially when the average number of devices is high, cooperative transmission attains substantially better performance. We hence conclude that for such highly dense D2D networks and adverse interference conditions, cooperative transmission becomes more appealing. 
\black{From Fig.~\ref{sir_threshold}, we also see that the consumed energy monotonically increases with the average number of devices per cluster $\bar{n}$. This is attributed to the fact as the number of content providers increases with $\bar{n}$, more energy will be consumed to serve a content. The average energy consumption per content request in one cluster is calculated from $E_m = q_m c_m \bar{n} \bar{E}$, with $\bar{E}=\frac{P_d\bar{S}}{R_d}$ being the consumed energy per requested content per content provider; $P_d$ denotes the device transmission power, $\bar{S}$ is the average content size, and $R_d$ is the \ac{D2D} average transmission rate that is calculated numerically from $R_d=W {\log}_2(1+\vartheta)\Upsilon_m$ .}

\vspace{-0.5 cm}
\section{Conclusion}
In this paper, we have conducted performance analysis and content placement optimization for cache-assisted \ac{CoMP} transmissions in clustered \ac{D2D} networks. In particular, we have characterized the rate coverage probability and offloading gain as functions of the network parameters, namely, the density of clusters, average number of devices per cluster, and the content popularity and placement schemes. Then, we have sought simple yet tight lower bound and approximation of the rate coverage probability and offloading gain. Based on the obtained results, \black{we have shown that} the inter-cluster interference of a \ac{TCP} is upper bounded by that of a \ac{PPP} of the same intensity. Moreover, we have formulated the corresponding offloading gain maximization problem and obtained optimized caching probabilities based on the proposed lower bound and approximation. Results showed that allowing \ac{CoMP} transmissions can attain up to $300\%$ improvement in the rate coverage probability compared to the \ac{RSCP} scheme. Finally, we conclude by showing that the proposed optimized \ac{PC} results in a considerable improvement of the offloading gain over conventional caching schemes. 
%

%
%

\ifCLASSOPTIONcaptionsoff
  \newpage
\fi


%

%
%

\vspace{-0.3 cm}
\bibliographystyle{IEEEtran}
\bibliography{public_files/bibliography}

\begin{thebibliography}{10}
\providecommand{\url}[1]{#1}
\csname url@samestyle\endcsname
\providecommand{\newblock}{\relax}
\providecommand{\bibinfo}[2]{#2}
\providecommand{\BIBentrySTDinterwordspacing}{\spaceskip=0pt\relax}
\providecommand{\BIBentryALTinterwordstretchfactor}{4}
\providecommand{\BIBentryALTinterwordspacing}{\spaceskip=\fontdimen2\font plus
\BIBentryALTinterwordstretchfactor\fontdimen3\font minus
  \fontdimen4\font\relax}
\providecommand{\BIBforeignlanguage}[2]{{%
\expandafter\ifx\csname l@#1\endcsname\relax
\typeout{** WARNING: IEEEtran.bst: No hyphenation pattern has been}%
\typeout{** loaded for the language `#1'. Using the pattern for}%
\typeout{** the default language instead.}%
\else
\language=\csname l@#1\endcsname
\fi
#2}}
\providecommand{\BIBdecl}{\relax}
\BIBdecl

\bibitem{8886101}
R.~{Amer}, H.~{ElSawy}, J.~{Kibi{\l}da}, M.~M. {Butt}, and N.~{Marchetti},
  ``Cooperative transmission and probabilistic caching for clustered {D2D}
  networks,'' in \emph{IEEE Wireless Communications and Networking Conference
  ({WCNC})}, Marrakech, Morocco, April 2019.

\bibitem{chandrasekhar2008femtocell}
V.~Chandrasekhar, J.~G. Andrews, and A.~Gatherer, ``Femtocell networks: a
  survey,'' \emph{Proc. of IEEE Communications magazine}, vol.~46, no.~9, 2008.

\bibitem{6871674}
E.~{Bastug}, M.~{Bennis}, and M.~{Debbah}, ``Living on the edge: The role of
  proactive caching in {5G} wireless networks,'' \emph{IEEE Communications
  Magazine}, vol.~52, no.~8, pp. 82--89, Aug 2014.

\bibitem{amer20200caching}
R.~Amer, M.~M. Butt, and N.~Marchetti, ``Caching at the edge in low latency
  wireless networks,'' \emph{Wireless Automation as an Enabler for the Next
  Industrial Revolution}, pp. 209--240, 2020.

\bibitem{marsch2011coordinated}
P.~Marsch and G.~P. Fettweis, \emph{Coordinated Multi-Point in Mobile
  Communications: from theory to practice}.\hskip 1em plus 0.5em minus
  0.4em\relax Cambridge University Press, 2011.

\bibitem{8412262}
R.~Amer, M.~M. Butt, M.~Bennis, and N.~Marchetti, ``Inter-cluster cooperation
  for wireless {D2D} caching networks,'' \emph{IEEE Transactions on Wireless
  Communications}, vol.~17, no.~9, pp. 6108--6121, September 2018.

\bibitem{Delay-Analysis}
------, ``Delay analysis for wireless {D2D} caching with inter-cluster
  cooperation,'' in \emph{IEEE Global Communications Conference ({GLOBECOM})},
  Singapore, Dec. 2017.

\bibitem{nigam2014coordinated}
G.~Nigam, P.~Minero, and M.~Haenggi, ``Coordinated multipoint joint
  transmission in heterogeneous networks,'' \emph{IEEE Transactions on
  Communications}, vol.~62, no.~11, pp. 4134--4146, 2014.

\bibitem{7342961}
M.~Ji, G.~Caire, and A.~F. Molisch, ``Fundamental limits of caching in wireless
  {D2D} networks,'' \emph{IEEE Transactions on Information Theory}, vol.~62,
  no.~2, pp. 849--869, Feb 2016.

\bibitem{5490976}
M.~Sawahashi, Y.~Kishiyama, A.~Morimoto, D.~Nishikawa, and M.~Tanno,
  ``Coordinated multipoint transmission/reception techniques for {LTE}-advanced
  [coordinated and distributed {MIMO}],'' \emph{IEEE Wireless Communications},
  vol.~17, no.~3, pp. 26--34, June 2010.

\bibitem{chaccour2019Reliability}
C.~Chaccour, R.~Amer, B.~Zhou, and W.~Saad, ``On the reliability of wireless
  virtual reality at terahertz ({THz}) frequencies,'' in \emph{10th IFIP
  International Conference on New Technologies}, Spain, June. 2019.

\bibitem{hunter2013retail}
K.~E. Hunter, S.~A. Sprigg, N.~J. Meijers, C.~S. Wurster, and P.~E. Jacobs,
  ``Retail proximity marketing,'' Nov.~7 2013, uS Patent App. 13/833,110.

\bibitem{chen2016cooperative}
Z.~Chen, J.~Lee, T.~Q.~S. Quek, and M.~Kountouris, ``Cooperative caching and
  transmission design in cluster-centric small cell networks,'' \emph{IEEE
  Transactions on Wireless Communications}, vol.~16, no.~5, pp. 3401--3415, May
  2017.

\bibitem{chae2017content}
S.~H. Chae, T.~Q.~S. Quek, and W.~Choi, ``Content placement for wireless
  cooperative caching helpers: A tradeoff between cooperative gain and content
  diversity gain,'' \emph{IEEE Transactions on Wireless Communications},
  vol.~16, no.~10, pp. 6795--6807, Oct 2017.

\bibitem{ao2015distributed}
W.~C. Ao and K.~Psounis, ``Distributed caching and small cell cooperation for
  fast content delivery,'' in \emph{Proceedings of the 16th ACM International
  Symposium on Mobile Ad Hoc Networking and Computing}.\hskip 1em plus 0.5em
  minus 0.4em\relax ACM, 2015, pp. 127--136.

\bibitem{zheng2017optimization}
G.~Zheng, H.~A. Suraweera, and I.~Krikidis, ``Optimization of hybrid cache
  placement for collaborative relaying,'' \emph{IEEE Communications Letters},
  vol.~21, no.~2, pp. 442--445, 2017.

\bibitem{daghal2018content}
A.~Daghal, H.~Zhu, and J.~Wang, ``Content delivery analysis in multiple devices
  to single device communications,'' \emph{IEEE Transactions on Vehicular
  Technology}, 2018.

\bibitem{chen2017high}
B.~Chen, C.~Yang, and G.~Wang, ``High throughput opportunistic cooperative
  device-to-device communications with caching,'' \emph{IEEE transactions on
  vehicular technology}, vol.~66, no.~8, pp. 7527--7539, 2017.

\bibitem{kim2018mds}
G.~Kim, B.~Hong, W.~Choi, and H.~Park, ``Mds coded caching leveraged by
  coordinated multi-point transmission,'' \emph{IEEE Communications Letters},
  2018.

\bibitem{zhang2015social}
Y.~Zhang, E.~Pan, L.~Song, W.~Saad, Z.~Dawy, and Z.~Han, ``Social network aware
  device-to-device communication in wireless networks,'' \emph{IEEE
  Transactions on Wireless Communications}, vol.~14, no.~1, pp. 177--190,
  January 2015.

\bibitem{hu2014evaluating}
X.~Hu, L.~Meng, and A.~D. Striegel, ``Evaluating the raw potential for
  device-to-device caching via co-location,'' \emph{Procedia Computer Science},
  vol.~34, pp. 376--383, 2014.

\bibitem{clustered_twc}
M.~Afshang, H.~S. Dhillon, and P.~H.~J. Chong, ``Modeling and performance
  analysis of clustered device-to-device networks,'' \emph{IEEE Transactions on
  Wireless Communications}, vol.~15, no.~7, pp. 4957--4972, July 2016.

\bibitem{clustered_tcom}
------, ``Fundamentals of cluster-centric content placement in cache-enabled
  device-to-device networks,'' \emph{IEEE Transactions on Communications},
  vol.~64, no.~6, pp. 2511--2526, June 2016.

\bibitem{haenggi2012stochastic}
M.~Haenggi, \emph{Stochastic geometry for wireless networks}.\hskip 1em plus
  0.5em minus 0.4em\relax Cambridge University Press, 2012.

\bibitem{8374852}
N.~{Deng} and M.~{Haenggi}, ``The benefits of hybrid caching in gauss-poisson
  {D2D} networks,'' \emph{IEEE Journal on Selected Areas in Communications},
  vol.~36, no.~6, pp. 1217--1230, June 2018.

\bibitem{amer20202joint}
R.~Amer, M.~M. Butt, and N.~Marchetti, ``Optimizing joint probabilistic caching
  and channel access for clustered {D2D} networks,'' \emph{IEEE Wireless
  Communication Letter, to appear}, 2020.

\bibitem{amer2019optimizing}
R.~Amer, H.~Elsawy, M.~M. Butt, E.~A. Jorswieck, M.~Bennis, and N.~Marchetti,
  ``Optimized caching and spectrum partitioning for {D2D} enabled cellular
  systems with clustered devices,'' \emph{in IEEE Transactions on
  Communications}, 2020.

\bibitem{8647532}
R.~{Amer}, M.~M. {Butt}, H.~{ElSawy}, M.~{Bennis}, J.~{Kibilda}, and
  N.~{Marchetti}, ``On minimizing energy consumption for {D2D} clustered
  caching networks,'' in \emph{IEEE Global Communications Conference
  ({GLOBECOM})}, December 2018, pp. 1--6.

\bibitem{7158269}
H.~{Elsawy}, H.~{Dahrouj}, T.~Y. {Al-naffouri}, and M.~{Alouini}, ``Virtualized
  cognitive network architecture for {5G} cellular networks,'' \emph{IEEE
  Communications Magazine}, vol.~53, no.~7, pp. 78--85, July 2015.

\bibitem{breslau1999web}
L.~Breslau, P.~Cao, L.~Fan, G.~Phillips, and S.~Shenker, ``Web caching and
  zipf-like distributions: Evidence and implications,'' in \emph{Eighteenth
  Annual Joint Conference of the IEEE Computer and Communications Societies
  ({INFOCOM})}, vol.~1, 1999, pp. 126--134.

\bibitem{8667721}
M.~{Lee}, A.~F. {Molisch}, N.~{Sastry}, and A.~{Raman}, ``Individual preference
  probability modeling and parameterization for video content in wireless
  caching networks,'' \emph{IEEE/ACM Transactions on Networking}, vol.~27,
  no.~2, pp. 676--690, April 2019.

\bibitem{geographic_caching}
B.~Blaszczyszyn and A.~Giovanidis, ``Optimal geographic caching in cellular
  networks,'' in \emph{Proc. of IEEE International Conference on Communications
  ({ICC})}, London, UK, 2015.

\bibitem{chen2017probabilistic}
Z.~Chen, N.~Pappas, and M.~Kountouris, ``Probabilistic caching in wireless
  {D2D} networks: Cache hit optimal versus throughput optimal,'' \emph{IEEE
  Communications Letters}, vol.~21, no.~3, pp. 584--587, March 2017.

\bibitem{blaszczyszyn2015optimal}
B.~Blaszczyszyn and A.~Giovanidis, ``Optimal geographic caching in cellular
  networks,'' in \emph{IEEE International Conference on Communications
  ({ICC})}, London, UK, June 2015, pp. 3358--3363.

\bibitem{chae2016caching}
S.~H. Chae and W.~Choi, ``Caching placement in stochastic wireless caching
  helper networks: Channel selection diversity via caching.'' \emph{IEEE Trans.
  Wireless Communications}, vol.~15, no.~10, pp. 6626--6637, Oct 2016.

\bibitem{8458381}
M.~{Lee} and A.~F. {Molisch}, ``Caching policy and cooperation distance design
  for base station-assisted wireless {D2D} caching networks: Throughput and
  energy efficiency optimization and tradeoff,'' \emph{IEEE Transactions on
  Wireless Communications}, vol.~17, no.~11, pp. 7500--7514, November 2018.

\bibitem{8422592}
M.~{Naslcheraghi}, M.~{Afshang}, and H.~S. {Dhillon}, ``Modeling and
  performance analysis of full-duplex communications in cache-enabled {D2D}
  networks,'' in \emph{2018 IEEE International Conference on Communications
  ({ICC})}, May 2018, pp. 1--6.

\bibitem{amer2020caching}
R.~Amer, W.~Saad, H.~ElSawy, M.~Butt, and N.~Marchetti, ``Caching to the sky:
  Performance analysis of cache-assisted {CoMP} for cellular-connected
  {UAVs},'' in \emph{Proc. of the IEEE Wireless Communications and Networking
  Conference ({WCNC})}, Marrakech, Morocco, April. 2019.

\bibitem{8998329}
R.~Amer, W.~Saad, and N.~Marchetti, ``Mobility in the sky: Performance and
  mobility analysis for cellular-connected {UAVs},'' \emph{IEEE Transactions on
  Communications}, pp. 1--1, 2020.

\bibitem{8536464}
M.~A. {Abd-Elmagid}, M.~A. {Kishk}, and H.~S. {Dhillon}, ``Joint energy and
  {SINR} coverage in spatially clustered {RF}-powered {IoT} network,''
  \emph{IEEE Transactions on Green Communications and Networking}, vol.~3,
  no.~1, pp. 132--146, March 2019.

\bibitem{cache_schedule}
B.~Chen, C.~Yang, and Z.~Xiong, ``Optimal caching and scheduling for
  cache-enabled {D2D} communications,'' \emph{IEEE Communications Letters},
  vol.~21, no.~5, pp. 1155--1158, May 2017.

\bibitem{boyd2004convex}
S.~Boyd and L.~Vandenberghe, \emph{Convex optimization}.\hskip 1em plus 0.5em
  minus 0.4em\relax Cambridge university press, 2004.

\bibitem{7417164}
S.~H. {Chae}, J.~Y. {Ryu}, T.~Q.~S. {Quek}, and W.~{Choi}, ``Cooperative
  transmission via caching helpers,'' in \emph{IEEE Global Communications
  Conference (GLOBECOM)}, Dec 2015, pp. 1--6.

\bibitem{golrezaei2014base}
N.~Golrezaei \emph{et~al.}, ``Base-station assisted device-to-device
  communications for high-throughput wireless video networks,'' \emph{IEEE
  Transactions on Wireless Communications}, vol.~13, no.~7, pp. 3665--3676,
  2014.

\bibitem{7792210}
M.~{Afshang}, C.~{Saha}, and H.~S. {Dhillon}, ``Nearest-neighbor and contact
  distance distributions for thomas cluster process,'' \emph{IEEE Wireless
  Communications Letters}, vol.~6, no.~1, pp. 130--133, February 2017.

\end{thebibliography}

%

\vspace{-0.4 cm}

\begin{IEEEbiography}[{\includegraphics[width=1in,height=1.25in,clip,keepaspectratio]{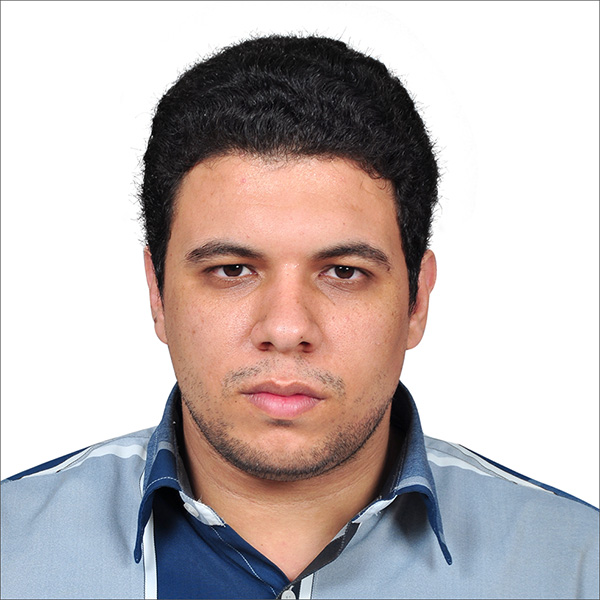}}]{Ramy Amer} received the MSc degree in Electrical Engineering from Alexandria University, Egypt, in 2016. Since 2016, he is pursuing his PhD degree in wireless communication, especially, wireless caching, under the supervision of Dr Nicola Marchetti at CONNECT centre, Trinity College Dublin, Ireland. His research interests include cross-layer design, Cognitive Radio, Energy Harvesting, Stochastic Geometry, and Reinforcement Learning. Prior to CONNECT, he was also an assistant lecturer for the switching department, National Telecommunication Institute of Egypt (NTI), Cairo, Egypt, where he conducted professional training both at the national and international levels. He was a visiting scholar at Wireless at Virginia Tech, with Prof Walid Saad's group from September 2018 to March 2019. He is the recipient of best paper award from IFIP NTMS in 2019 and the IEEE student travel grant from WCNC in 2019. He is recognized as an Exemplary Reviewer by the IEEE TRANSACTIONS ON COMMUNICATIONS 2019. 
He is also certified as a Cisco instructor and has other Cisco data and voice certificates. He worked as a part-time instructor for many national and international training centres, e.g., New-Horizon Egypt and Fast-Lane KSA. 
\end{IEEEbiography}

\vspace{-0.4 cm}
\begin{IEEEbiography}[{\includegraphics[width=1in,height=1.25in,clip,keepaspectratio]{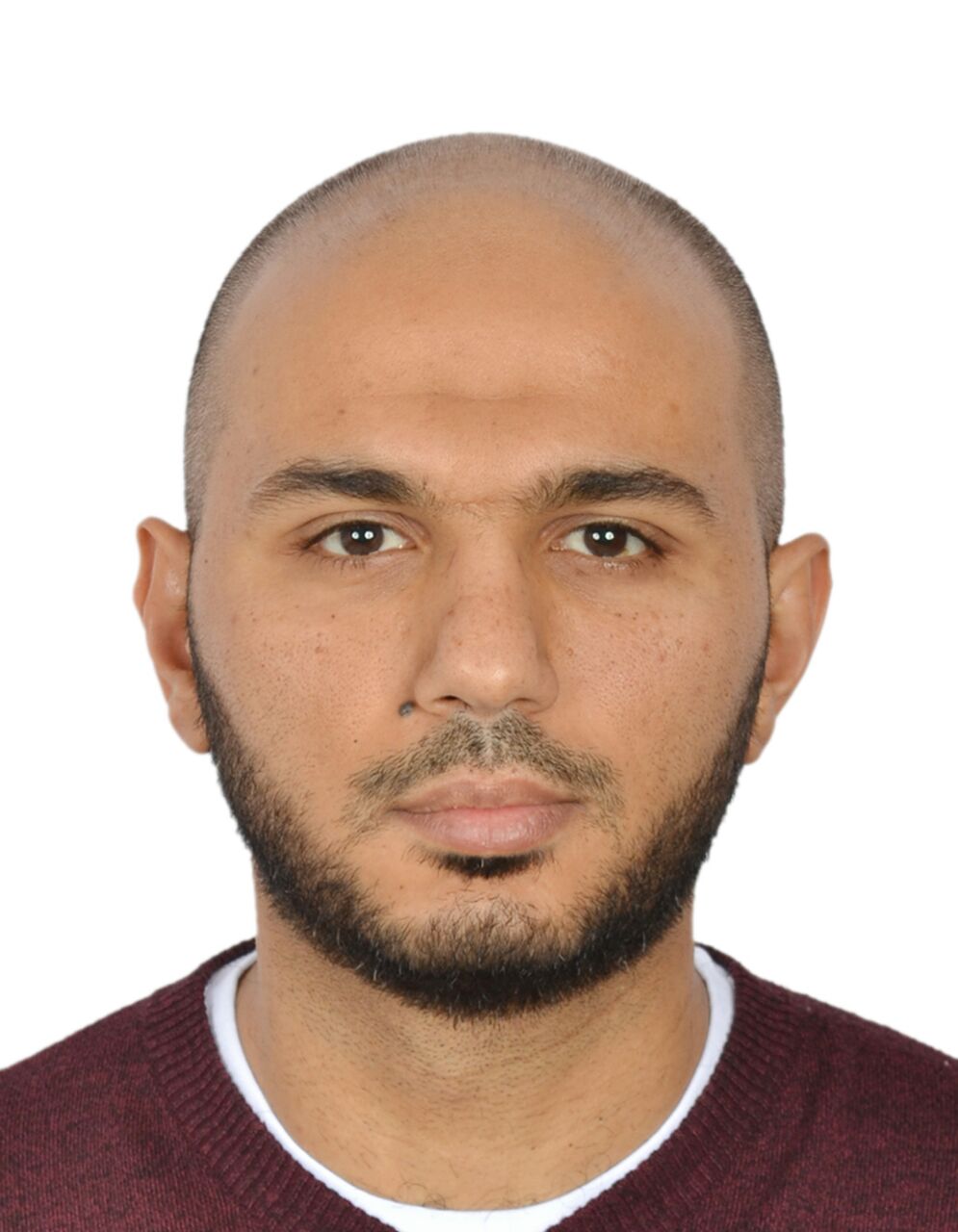}}]{Hesham ElSawy}(S'10 -- M'14 -- SM'17) is an assistant professor at King Fahd University of Petroleum and Minerals (KFUPM), Saudi Arabia. Prior to that, he was a postdoctoral Fellow at King Abdullah University of Science and Technology (KAUST), Saudi Arabia, a research assistant at TRTech, Winnipeg, MB, Canada, and a telecommunication engineer at the National Telecommunication Institute, Egypt. Dr. ElSawy received his Ph.D. degree in Electrical Engineering from the University of Manitoba, Canada, in 2014, where he received several academic awards, including the NSERC Industrial Postgraduate Scholarship during the period of 2010-2013, and the TRTech Graduate Students Fellowship in the period of 2010-2014. He co-authored three award-winning papers that are recognized by the IEEE COMSOC Best Survey Paper Award, the Best Scientific Contribution Award to the IEEE International Symposium on Wireless Systems 2017, and the Best Paper Award in Small Cell and 5G Networks (SmallNets) Workshop of the 2015 IEEE International Conference on Communications (ICC). He is the recipient of the IEEE ComSoc Outstanding Young Researcher Award for Europe, Middle East, $\&$ Africa Region in 2018. He is recognized as an exemplary reviewer by the IEEE Transactions on Communications for the three years 2014-2016, by the IEEE Transactions on Wireless Communications in 2017 and 2018, and by the IEEE Wireless Communications Letters in 2018. 
\end{IEEEbiography}

\begin{IEEEbiography}[{\includegraphics[width=1in,height=1.25in,clip,keepaspectratio]{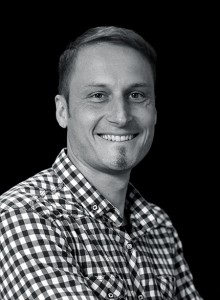}}]{Jacek Kibi\l{}da} received the M.Sc. degree from the Poznan University of Technology, Poznan, Poland, in
2008, and the Ph.D. degree from Trinity College, The University of Dublin, Dublin, Ireland, in 2016. He is currently a Research Fellow with CONNECT, Trinity College, The University of Dublin. His research interests include architectures and models for future mobile networks.
\end{IEEEbiography}

\begin{IEEEbiography}[{\includegraphics[width=1in,height=1.25in,clip,keepaspectratio]{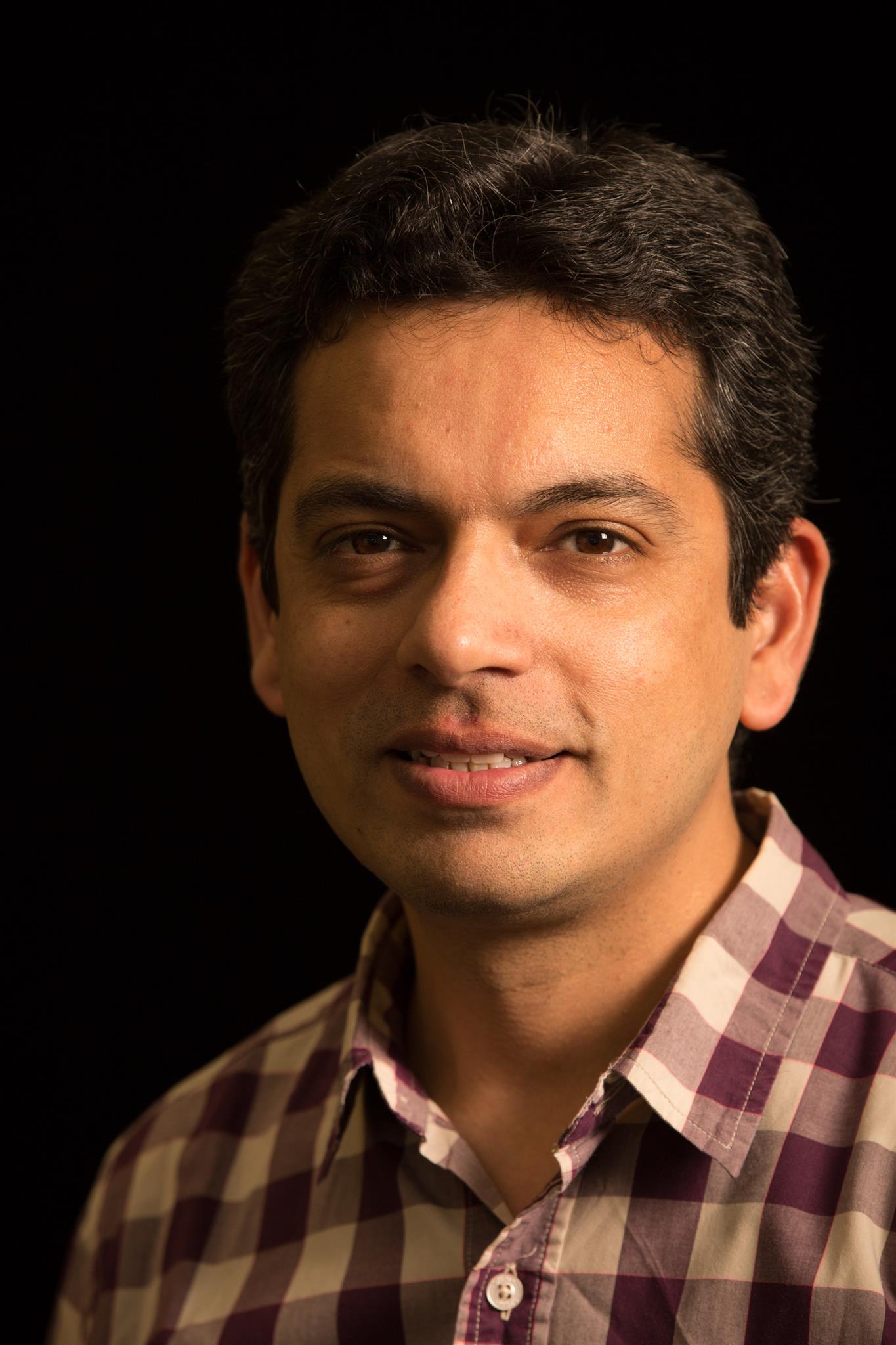}}]{M. Majid Butt} (S'07 -- M'10 -- SM'15) received the MSc degree in Digital Communications from Christian Albrechts University, Kiel, Germany, in 2005, and the PhD degree in  Telecommunications from the Norwegian University of Science and Technology, Trondheim, Norway, in 2011. He is an Assistant Professor at University of Glasgow as well as an adjunct Assistant Professor at Trinity College Dublin, Ireland. Before that, he has held senior researcher positions at Trinity College Dublin, Ireland and Qatar University. He is recipient of Marie Curie Alain Bensoussan postdoctoral fellowship from European Research Consortium for Informatics and Mathematics (ERCIM). He held ERCIM postdoc fellow positions at Fraunhofer Heinrich Hertz Institute, Germany, and University of Luxembourg. Dr. Majid's major areas of research interest include communication techniques for wireless networks with focus on radio resource allocation, scheduling algorithms, energy efficiency and cross layer design. He has authored more than 50 peer reviewed conference and journal publications in these areas. He has served as TPC chair for various communication workshops in conjunction with IEEE WCNC, ICUWB, CROWNCOM, IEEE Greencom and Globecom. He is a senior member of IEEE and serves as an associate editor for IEEE Access journal and IEEE Communication Magazine since 2016.
\end{IEEEbiography}

\newpage 

\begin{IEEEbiography}[{\includegraphics[width=1in,height=1.25in,clip,keepaspectratio]{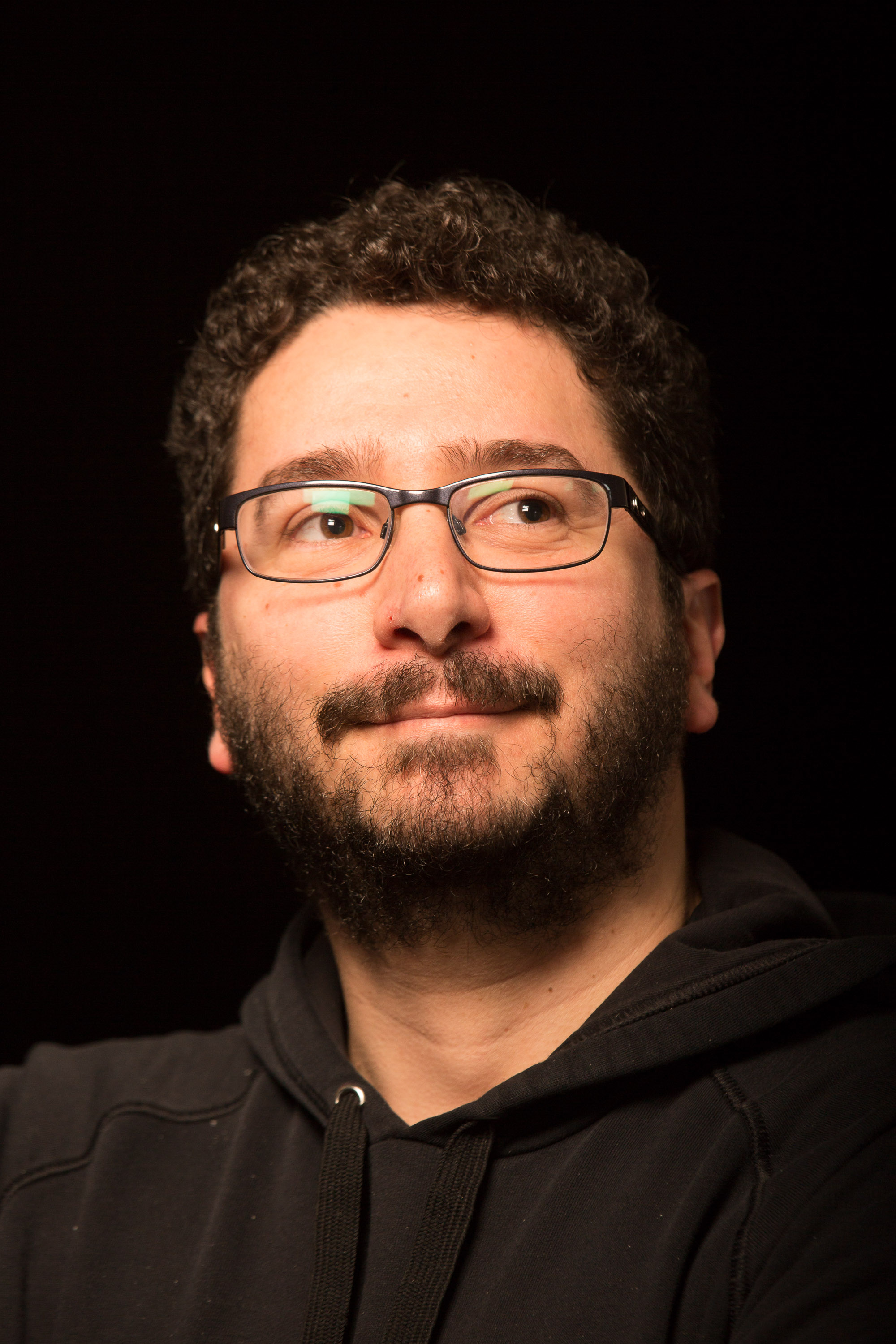}}]{Nicola Marchetti} Dr. Nicola Marchetti is currently Assistant Professor in Wireless Communications at Trinity College Dublin, Ireland. He performs his research under the Trinity Information and Complexity Labs (TRICKLE) and the Irish Research Centre for Future Networks and Communications (CONNECT). He received the PhD in Wireless Communications from Aalborg University, Denmark in 2007, and the M.Sc. in Electronic Engineering from University of Ferrara, Italy in 2003. He also holds an M.Sc. in Mathematics which he received from Aalborg University in 2010. His collaborations include research projects in cooperation with Nokia Bell Labs and US Air Force Office of Scientific Research, among others. His research interests include Adaptive and Self-Organizing Networks, Complex Systems Science for Communication Networks, PHY Layer, Radio Resource Management. He has authored 130 journals and conference papers, 2 books and 8 book chapters, holds 3 patents, and received 4 best paper awards.
\end{IEEEbiography}





\newpage
\vspace{-0.3 cm}
\begin{appendices}
\section{Proof of Lemma \ref{ch4:comp-interference}}			
\vspace{-0.0 cm}
\label{app:proof-comp-interference}
In the following, by saying $u \in \Phi_{cp}$, we mean that $\boldsymbol{y} \in \Phi_{cp}$, where $u=\lVert\boldsymbol{x}+\boldsymbol{y}\rVert$.
\begin{align}
\mathscr{L}_{I_{\rm out}}(t)  &= \mathbb{E} \Bigg[e^{-t\gamma_d \sum_{\Phi_p^{!}} \sum_{u \in \Phi_{cp}}  G_{u}  u^{-\alpha}} \Bigg] 
   \nonumber 
    \\
&= \mathbb{E}_{\Phi_p} \Bigg[\prod_{\Phi_p^{!}} \mathbb{E}_{\Phi_{cp},G_{u}} e^{-t\gamma_d \sum_{u \in \Phi_{cp}}  G_{u}  u^{-\alpha}} \Bigg] \\
\label{ch4:interference-back}
  &= \mathbb{E}_{\Phi_p} \Big[\prod_{\Phi_p^{!}} \mathbb{E}_{\Phi_{cp}} \prod_{u \in \Phi_{cp}} \mathbb{E}_{u,G_{u}}  e^{-t\gamma_d G_{u}  u^{-\alpha}} \Big] , 
   \end{align} 
where $G_{u}=G_{\boldsymbol{y}}$ for ease of exposition; from the Rayleigh fading assumption, we get
   \begin{align}
&\mathscr{L}_{I_{\rm out}}(t)\overset{}{=} \mathbb{E}_{\Phi_p} \Big[\prod_{\Phi_p^{!}} \mathbb{E}_{\Phi_{cp}} \prod_{u \in \Phi_{cp}} \mathbb{E}_{u}\frac{1}{1+t\gamma_d u^{-\alpha}} \Big] 
      \nonumber \\
          &\overset{(a)}{=}  \mathbb{E}_{\Phi_p} \Big[\prod_{\Phi_p^{!}} 
   {\rm exp}\Big(-p\bar{n} \int_{u=0}^{\infty}\Big(1 - \frac{1}{1+t\gamma_d u^{-\alpha}}\Big)f_U(u|v)\dd{u}\Big], 
\nonumber    
   \end{align} 
where (a) follows from the \ac{PGFL} of the Gaussian \ac{PPP} $\Phi_{cp}$. Notice that, in step (a), the \ac{PGFL} of the Gaussian \ac{PPP} $\Phi_{cp}$ is adopted for the intensity function given in polar coordinates rather than Cartesian coordinates, i.e., $v=\lVert\boldsymbol{x}\rVert$ and $u=\lVert\boldsymbol{x}+\boldsymbol{y}\rVert$.
%
Substituting $\int_{u=0}^{\infty}\big(1 -\frac{1}{1+t\gamma_d u^{-\alpha}}\big) f_{U|V}(u|v)\dd{u}= \int_{u=0}^{\infty}\frac{t\gamma_d}{u^{\alpha}+t\gamma_d} f_{U|V}(u|v)\dd{u} = \zeta(v,t)$, we get
\begin{align}
& \mathscr{L}_{I_{\rm out}}(t)\overset{}{=}  \mathbb{E}_{\Phi_p} \Big[\prod_{\Phi_p^{!}} 
   {\rm exp}\big(-p\bar{n} \zeta(v,t)\big)\Big] 
           \nonumber 	\\
           \label{LT_fpgl0}
            &\overset{(b)}{=}   {\rm exp}\Bigg( -2\pi \lambda_p\int_{v=0}^{\infty}\Big(1 - {\rm exp}\big(- p\bar{n}\zeta(v,t)\big)\Big)v\dd{v}\Bigg),
\end{align}  
where (b) follows from the \ac{PGFL} of the  \ac{PPP} $\Phi_p$. Hence, Lemma \ref{ch4:comp-interference} is proven.

\vspace{-0.3 cm}
\section{Proof of Theorem \ref{ch4:comp-interference-approx}}
\label{app:proof-comp-interference-approx}
By conditioning on $S_{\Phi_{cpm}}=s_{\Phi_{cpm}}=\sum_{i=1}^{k}  h_i^{-\alpha}$, we derive a bound on Laplace transform of inter-cluster interference based on Taylor's series expansion. Starting from equation (\ref{ch4:interference-back}) in \App{proof-comp-interference}, we have 		
\begin{align}
&\Lc_{I_{\rm out}}(t|k)= \mathbb{E}_{\Phi_p} \Bigg[\prod_{\Phi_p^{!}} \mathbb{E}_{\Phi_{cp}} \prod_{u \in \Phi_{cp}} \mathbb{E}_{u,G_u}  {\rm exp}\big(-t G_u  u^{-\alpha}\Big) \Bigg] \nonumber 	\\ 
 &\overset{(a)}{=}\Eb_{G_u} {\rm exp}\Big(-2\pi \lambda_p\int_{v=0}^{\infty}\big(1 - {\rm exp}\big(-p\bar{n}(1 - \zeta'(v,t))\big)\big)v\dd{v}\Big)	\nonumber	 \\ 
 &\overset{(b)}{\approx} \Eb_{G_u} {\rm exp}\Big(-2\pi \lambda_p\int_{v=0}^{\infty}\big(1 -  (1 - p\bar{n}(1 - \zeta'(v,t))\big)v\dd{v}\Big)	
\nonumber	\\
&={\rm exp}\Big(-2\pi p\bar{n}\lambda_p \overbrace{\big(\int_{v=0}^{\infty}v\dd{v} - \Eb_{G_u} \int_{v=0}^{\infty}\zeta'(v,t) v\dd{v}\big)\Big)}^{J(t)},
\nonumber 
\end{align}
where $\zeta'(v,t) = \int_{u=0}^{\infty}e^{-t\gamma_dG_uu^{-\alpha}} f_{U|V}(u|v)\dd{u}$ and $G_u= G_{\boldsymbol{y}}$ for ease of notation; (a) follows from tracking the proof of Lemma \ref{ch4:comp-interference} up until equation (\ref{LT_fpgl0}), (b) follows from Taylor series expansion for exponential function $e^{-x} \approx 1 - x$ when $x$ is small. It is worth mentioning that the obtained $\mathscr{L}_{I_{\rm out}}(t|k)$ in the above is Laplace transform of an upper bound on the interference. Correspondingly, the resulting rate coverage probability $\Upsilon_m$ and offloading gain $\mathbb{P}_{o}^{\sim}(\boldsymbol{c})$ are lower bounds on their exact values.  We proved in \cite[Lemma 2]{8647532} that  
$J(t)= \frac{(t\gamma_d)^{2/\alpha}}{2} \Gamma(1 + 2/\alpha)\Gamma(1 - 2/\alpha)$, which proves (\ref{eq:ppp-interference}). Plugging the result obtained \black{in Eq. (\ref{eq:ppp-interference})} into (\ref{eq:offload}) yields the lower bound on the offloading gain \black{in Eq. (\ref{ch4:lower-bound-offload-gain})}, which 
completes the proof. 

\vspace{-0.3 cm}
\section{Proof of Lemma \ref{ch4:pdf-nearest-sitance}}
\label{app:proof-pdf-nearest-distance}
With reference to Fig.~\ref{distance_near}, the nearest serving distance $h_1$ is defined as the distance from the typical client at $(0,0)$ to its nearest provider within the same cluster. Following \cite{7792210}, we define the \ac{PGF} of the number of active clients that cache content $m$ within a ball $\textbf{b}(o, h_1)$ with radius $h_1$ and centered around the origin $o$ as:
\begin{align}
&G_N(\vartheta)=  \Eb \left[ \vartheta^{\sum_{\boldsymbol{y}_{0i}\in\Phi_{cpm}}\textbf{1}\{\lVert \boldsymbol{x}_0 + \boldsymbol{y}_{0i}\rVert < h_1\}}\right]     
\nonumber \\
 &=  \Eb_{\Phi_c,\boldsymbol{x}_0} \prod_{\boldsymbol{y}_{0i}\in\Phi_{cpm}} \left[ \vartheta^{\textbf{1}\{\lVert \boldsymbol{x}_0 + \boldsymbol{y}_{0i}\rVert < h_1\}}\right]    \nonumber \\
&\overset{(a)}{=}  \Eb_{\boldsymbol{x}_0} {\rm exp}\Big(-c_mp\bar{n}\int_{\mathbb{R}^2}(1 - \vartheta^{\textbf{1}\{\lVert \boldsymbol{x}_0 + \boldsymbol{y}_{0i}\rVert < h_1\}})f_{\boldsymbol{Y}_{0i}}(\boldsymbol{y}_{0i})d\boldsymbol{y}_{0i}\Big) \nonumber \\
&\overset{(b)}{=}  \Eb_{\boldsymbol{x}_0} {\rm exp}\Big(-c_mp\bar{n}\int_{\mathbb{R}^2}(1 - \vartheta^{\textbf{1}\{\lVert \boldsymbol{z}_0 \rVert < h_1\}})f_{\boldsymbol{Y}_{0i}}(\boldsymbol{z}_0-\boldsymbol{x}_0)d\boldsymbol{z}_0\Big), 				
\nonumber 
 \end{align}
where $\textbf{1}\{.\}$ is the indicator function, and $\boldsymbol{x}_0\in \R^2$ is a \ac{RV} modeling the location of representative cluster center relative to the origin $o$, with a realization $\boldsymbol{X}_0=\boldsymbol{x}_0$;  (a) follows from the \ac{PGFL} of the \ac{PPP} $\Phi_{cpm}$ along with its intensity function $c_mp\bar{n} f_{\boldsymbol{Y}_{0i}}(\boldsymbol{y}_{0i})$, and (b) follows from change of variables $\boldsymbol{z}_0 = \boldsymbol{x}_0+\boldsymbol{y}_{0i}$. By converting Cartesian coordinates to polar coordinates with $h=\lVert \boldsymbol{z}_0 \rVert$, we get $G_N(\vartheta)=$
\begin{align}
&\Eb_{V_0} {\rm exp}\Big(-c_mp\bar{n}\int_{h=0}^{\infty}(1 - \vartheta^{\textbf{1}\{h < h_1\}})f_{H|V_0}(h|v_0)\dd{h}\Big) \nonumber \\
&\overset{(c)}{=} \Eb_{V_0} {\rm exp}\Big(-c_mp\bar{n}\int_{h=0}^{h_1}(1 - \vartheta)f_{H|V_0}(h|v_0)\dd{h}\Big) 
\overset{(d)}{=}
\nonumber \\
 &\int_{v_0=0}^{\infty}f_{V_0}(v_0)  {\rm exp}\Big(-c_mp\bar{n}\int_{h=0}^{h_1}(1 - \vartheta)f_{H|V_0}(h|v_0)\dd{h}\Big)\dd{v_0},
 \nonumber
 \end{align}
 where $V_0\in \R$ is a \ac{RV} modeling the distance from representative cluster's center to the origin $o$, with a realization $V_0=v_0=\lVert \boldsymbol{x}_0\rVert$; (c) follows from the definition of the indicator function 
$\textbf{1}\{h < h_1\}$, and (d) follows from unconditioing over $v_0$. 
To clarify how the normal distribution $f_{\boldsymbol{Y}_{0i}}(\boldsymbol{z}_0-\boldsymbol{x}_0)$ is converted to the Rician distribution $f_{H|V_0}(h|v_0)$, consider first the representative cluster centered at $\boldsymbol{x}_0 \in \Phi_p$, with a distance $v_0=\lVert \boldsymbol{x}_0\rVert$ from the origin. A randomly-selected active provider belonging to the representative cluster has its coordinates in $\mathbb{R}^2$ chosen independently from Gaussian distributions with standard deviation $\sigma$. Then, by definition, the distance $h$ from such an active provider to the origin has Rician \ac{PDF} denoted as $f_{H|V_0}(h|v_0)$. Recall that $f_{V_0}(v_0)=\mathrm{Rayleigh}(v_0,\sigma)$ from the definition of Gaussian \ac{PPP}.

Now, the \ac{CDF} of nearest serving distance $F_{H_1}(h_1)$ can be derived as
\begin{align}
\label{near-cdf}
& F_{H_1}(h_1)= 1 - G_N(0) = 1 -  \nonumber \\
&\int_{v_0=0}^{\infty} f_{V_0}(v_0) {\rm exp}\Big(-c_mp\bar{n}\int_{0}^{h_1}f_{H|V_0}(h|v_0)\dd{h}\Big)\dd{v_0}. 		
\end{align}
Applying Leibniz integral rule, we obtain the nearest distance \ac{PDF} as
\begin{align} 	
&f_{H_1}(h_1)=  -\frac{\partial}{\partial h_1} 
 \int_{v_0=0}^{\infty} f_{V_0}(v_0) 
e^{-c_mp\bar{n}\int_{0}^{h_1}f_{H|V_0}(h|v_0)\dd{h}}\dd{v_0}
\nonumber \\
&=  -\int_{v_0=0}^{\infty} f_{V_0}(v_0) \frac{\partial}{\partial h_1}  e^{-c_mp\bar{n}\int_{0}^{h_1}f_{H|V_0}(h|v_0)\dd{h}}\dd{v_0} = c_mp\bar{n} \times
\nonumber \\  
& \int_{0}^{\infty} f_{V_0}(v_0)  
\frac{\partial}{\partial h_1}\Big[\int_{0}^{h_1}f_{H|V_0}(h|v_0)\dd{h}\Big]
e^{-c_mp\bar{n}\int_{0}^{h_1}f_{H|V_0}(h|v_0)\dd{h}}\dd{v_0}
\nonumber \\
&= c_mp\bar{n} \int_{v_0=0}^{\infty} f_{V_0}(v_0) f_{H_1|V_0}(h_1|v_0) e^{-c_mp\bar{n}\int_{0}^{h_1}f_{H|V_0}(h|v_0)\dd{h}}\dd{v_0},
\nonumber
 \end{align}
The distance \ac{PDF} $f_{H_1}(h_1)$ can be calculated numerically from (\ref{Leibniz}). However, a tractable yet accurate approximation can be obtained using Jensen's inequality as follows: 
\begin{align}
&f_{H_1}(h_1)=\frac{\partial}{\partial h_1} F_{H_1}(h_1)
 \nonumber \\
 &\overset{(a)}{\approx} \frac{\partial}{\partial h_1}  \Bigg (1 - e^{-c_mp\bar{n}\int_{0}^{h_1}\int_{0}^{\infty} f_{V_0}(v_0) f_{H|V_0}(h|v_0)\dd{v_0}\dd{h}} \Bigg ) \nonumber \\
&\overset{(b)}{=} \frac{\partial}{\partial h_1}  \Bigg ( 1 - {\rm exp}\Big(-c_mp\bar{n}\big(1 -{\rm exp}(-\frac{h_1^2}{4\sigma^2}\big)\Big) \Bigg)
\label{approx_CDF}
 \\
&=  \frac{c_mp\bar{n} h_1 e^{-c_mp\bar{n} \big(1-e^{\frac{-h_1^2}{4 \sigma^2}}\big)-\frac{h_1^2}{4 \sigma^2}}}{2 \sigma^2},
 \end{align}
 where (a) follows from Jensen's inequality applied to the \ac{CDF} $F_{H_1}(h_1)$, and (b) follows from $\int_{0}^{h_1}\int_{v_0=0}^{\infty} f_{V_0}(v_0) f_{H|V_0}(h|v_0)\dd{v_0}\dd{h} = 1 - {\rm exp}(\frac{-h_1^2}{2\sigma^2})$. This completes the proof.

\vspace{-0.3 cm}
\section{Proof of Lemma \ref{ch4:variance-pwr}}
\label{app:proof-concentration}
The conditional variance ${\rm Var}\left[S_{\Phi_{cpm}^{!}}| H_1=h_1\right]$ can be expressed as
 \begin{align}
& {\rm Var}\left[S_{\Phi_{cpm}^{!}}|H_1=h_1 \right] = {\rm Var} \Big[ \sum_{\boldsymbol{y}_{0i}\in \Phi_{cpm}^{!}} \lVert \boldsymbol{x}_0 + \boldsymbol{y}_{0i}\rVert^{-\alpha} \Big] \nonumber \\
 &\overset{(a)}{=}	 \int_{\mathbb{R}^2}^\infty \frac{1}{\lVert \boldsymbol{x}_0 + \boldsymbol{y}_{0i}\rVert^{2\alpha}}c_mp\bar{n} f_{\boldsymbol{Y}_{0i}}(\boldsymbol{y}_{0i})\dd{\boldsymbol{y}_{0i}} 	
 \nonumber  \\
 &\overset{(b)}{=}	c_mp\bar{n} \int_{\mathbb{R}^2}^\infty \frac{1}{\lVert \boldsymbol{z}_0\rVert^{2\alpha}}	f_{\boldsymbol{Y}_{0i}}(\boldsymbol{z}_0 - \boldsymbol{x}_0)\dd{\boldsymbol{z}_0}, 
 \end{align}
where (a) follows from the mean and variance for \acp{PPP} \cite[Corollary 4.8] {haenggi2012stochastic}, along with the Gaussian \ac{PPP} assumption $\Phi_{cm}$; (b) follows from the substitution $\boldsymbol{z}_0 = \boldsymbol{x}_0 + \boldsymbol{y}_{0i}$, where $\{\boldsymbol{x}_0, \boldsymbol{y}_{0i},\boldsymbol{z}_0\} \in \R^2$.  
%
By converting the Cartesian coordinates coordinates to polar coordinates, where $h=\lVert \boldsymbol{z}_0 \rVert$, and unconditioning over $v_o$, we get
  \begin{align}
&{\rm Var}\left[S_{\Phi_{cpm}^{!}}| H_1=h_1\right]  =	c_mp\bar{n}  \int_{h_1}^\infty h^{-2\alpha}f_{H|V_0}(h|v_0)\dd{h}		\nonumber \\
 &\overset{}{=}	c_mp\bar{n}  \int_{v_0=0}^\infty f_{V_0}(v_0) \int_{h=h_1}^\infty h^{-2\alpha}f_{H|V_0}(h|v_0)\dd{h}\dd{v_0}   \nonumber 
\end{align}
\begin{align} 
  &\overset{(c)}{=}	c_mp\bar{n} \int_{h=h_1}^\infty h^{-2\alpha} \int_{v_0=0}^\infty f_{V_0}(v_0) f_{H|V_0}(h|v_0)\dd{v_0}\dd{h},	
 \end{align}
where (c) follows from changing the order of integration. Finally, we proceed as follows: ${\rm Var}\big[S_{\Phi_{cpm}^{!}}| H_1=h_1\big]=$  
\begin{align}
&c_mp\bar{n} \int_{h=h_1}^\infty h^{-2\alpha} 
\int_{0}^\infty \frac{v_0}{\sigma^2}e^{\frac{-v_0^2}{2\sigma^2}}
 \quad \frac{h}{\sigma^2}e^{{\frac{-(v_0^2+h^2)}{2\sigma^2}}}I_0(\frac{hv_0}{\sigma^2})\dd{v_0}\dd{h}			\nonumber \\
  &=	\frac{c_mp\bar{n}}{{\sigma^2}} \int_{h=h_1}^\infty h^{1-2\alpha} \int_{v_0=0}^\infty \frac{v_0}{\sigma^2}e^{\frac{-v_0^2}{2\sigma^2}} e^{{\frac{-(v_0^2+h^2)}{2\sigma^2}}}I_0(\frac{hv_0}{\sigma^2})\dd{v_0}\dd{h}			\nonumber \\
  \label{int-here1}
  & =	\frac{c_mp\bar{n}}{{\sigma^2}} \int_{h=h_1}^\infty h^{1-2\alpha} e^{\frac{-h^2}{2\sigma^2}}\int_{v_0=0}^\infty \frac{v_0}{\sigma^2}e^{\frac{-v_0^2}{\sigma^2}} I_0(\frac{hv_0}{\sigma^2})\dd{v_0}\dd{h}			
   \\
\label{int-here2}
&\quad\quad\quad  \overset{(d)}{=}	\frac{c_mp\bar{n}}{2\sigma^2} \int_{h_1}^\infty h^{1-2\alpha} e^{\frac{-h^2}{4\sigma^2}} \dd{h} 
\overset{(e)}{=} c_mp\bar{n} \int_{\frac{h^2_1}{4\sigma^2}}^\infty \tau^{-2\alpha} e^{-\tau} \dd{\tau}
\\
&\quad\quad\quad \overset{(f)}{=} c_mp\bar{n} \Gamma\left(-2\alpha+1,\frac{h^2_1}{4\sigma^2}\right),
\end{align}
where (d) follows from solving the inner integratal of (\ref{int-here1}), (e) follows from the substitution $\tau=\frac{h^2}{4\sigma^2}$, and (f) follows from solving the integration of (\ref{int-here2}), where $\Gamma(\cdot,\cdot)$ denotes the upper incomplete gamma function. This completes the proof.

\end{appendices}

\end{document}